\providecommand{\review}{0}
\providecommand{\blind}{0}
\providecommand{\material}{1}

\ifnum\review=0
	\documentclass[a4paper, twoside, reqno]{amsart}
\else
	\documentclass[a4paper, 12pt, twoside, reqno]{amsart}
\fi

\usepackage{amsmath}
\usepackage{amssymb}
\usepackage[foot]{amsaddr}
\usepackage{upgreek}
\usepackage{accents}
\usepackage{mathtools}
\usepackage{multicol}
\usepackage[
	backend=biber,
	citestyle=numeric-comp,
	bibstyle=authoryear,
	dashed=false,
	firstinits=true,
	maxcitenames=3,
	maxbibnames=6,
	doi=false,
	isbn=false,
	date=year,
	defernumbers=true]{biblatex}
\usepackage[
	hidelinks,
	pdfproducer={Latex with hyperref},
	pdfcreator={pdflatex}]{hyperref}
\usepackage[acronym]{glossaries}
\usepackage{glossaries-prefix}
\usepackage{stmaryrd}
\usepackage{enumitem}
\usepackage{pifont}
\usepackage{marvosym}
\usepackage{graphicx}
\usepackage[font=footnotesize]{caption}
\usepackage{subcaption}
\usepackage{authoraftertitle}
\usepackage{textcase}

\input{01-2-definitions.tex}

\renewcommand{\figurename}{Fig.}

\pagestyle{myheadings}

\makeatletter
\def\cleardoublepage{\clearpage%
	\if@twoside
		\ifodd\c@page\else
			\vspace*{\fill}
			\hfill
			\begin{center}
				This page intentionally left blank.
			\end{center}
			\vspace{\fill}
			\thispagestyle{empty}
			\newpage
			\if@twocolumn\hbox{}\newpage\fi
		\fi
	\fi
}
\makeatother

\ifnum\review=1
	\usepackage{setspace}
	\setlength{\oddsidemargin}{1in}%
	\setlength{\evensidemargin}{1in}%
	\addtolength{\textwidth}{1in}%
	\calclayout
	\doublespace
\fi

\makeglossaries

\newacronym{sm}{SM}{supplementary material}

\newacronym{as}{a.s.}{almost surely}
\newacronym{iid}{i.i.d.}{independent and identically distributed}
\newacronym{iff}{iff}{if and only if}

\newacronym[
    plural=pdfs,
    firstplural=probability density functions]{pdf}{pdf}
{probability density function}

\newacronym[
    plural=cdfs,
    firstplural=cumulative distribution functions]{cdf}{cdf}
{cumulative distribution function}

\newacronym[
    prefixfirst={a\ },
    prefix={an\ }]{rv}{rv}
{random variable}

\newacronym{kde}{KDE}{\textit{kernel density estimator}}
\newacronym{kdde}{KDDE}{\textit{kernel density derivative estimation}}

\newacronym{eda}{EDA}{\textit{exploratory data analysis}}

\newacronym{bh}{BH}{\textit{bump hunting}}

\newacronym[
    plural=HDRs,
    firstplural=highest density regions]{hdr}{HDR}
{\textit{highest density region}}

\newacronym{sizer}{SiZer}{\textit{SIgnificant ZERo crossings of derivatives}}

\newacronym{vc}{VC}{Vapnik–Chervonenkis}
\newacronym{pm}{PM}{pointwise measurable}

\newacronym[
    plural=GPs,
    firstplural=Gaussian processes]{gp}{GP}
{\textit{Gaussian process}}

\newacronym{aui}{a.u.i.}{\textit{asymptotically uniformly integrable}}

\newacronym{tvar}{TVaR}{\textit{Tail Value at Risk}}

\newacronym{nba}{NBA}{\textit{National Basketball Association}}
\newacronym{nfl}{NFL}{\textit{National Football League}}
\newacronym{mlb}{MLB}{\textit{Major League Baseball}}

\newacronym{3pl}{3PL}{\textit{three-point line}}
\newacronym{mph}{mph}{\textit{miles per hour}}

\newcommand{\orcid}[1]{%
	\href{https://orcid.org/#1}{%
		\includegraphics[height=2ex]{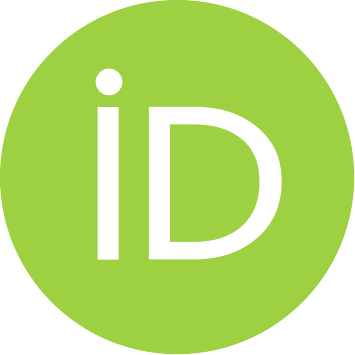}%
		\url{https://orcid.org/#1}
	}%
}

\newcommand{\supplement}{%
	\gls*{sm}%
	\ifnum\blind=0%
		~\cite{Supplement}%
	\fi%
}

\newcommand{\sourcecode}{%
	\ifnum\blind=0%
		~\cite{SourceCode}%
	\fi%
}

\newcommand{\subfigurewidthtwoandone}{0.4\textwidth}
\newcommand{\graphicswidthtwoandone}{\textwidth}

\newcommand{\hyndalpha}{\alpha}
\newcommand{\hyndfalpha}{\pdf_{\hyndalpha}}
\newcommand{\hyndhdr}{\{\x: \pdf(\x) \geq \hyndfalpha\}}
\newcommand{\hyndhdralpha}{R(\hyndfalpha)}

\newcommand{\ubar}[1]{\underaccent{\bar}{#1}}

\newcommand{\approxobject}[1]{\tilde{#1}}
\newcommand{\normalizedobject}[1]{\bar{#1}}

\newcommand{\eqdef}{\coloneqq}

\newcommand{\genindexi}{i}
\newcommand{\genindexj}{j}
\newcommand{\genindexk}{k}
\newcommand{\genindexn}{n}

\newcommand{\sumitogen}[1]{\sum_{\genindexi = 1}^{#1}}
\newcommand{\sumitosamplesize}{\sumitogen{\samplesize}}
\newcommand{\sumitod}{\sumitogen{\dimension}}

\newcommand{\sumijtogen}[1]{%
	\sumitogen{#1}%
	\sum_{\genindexj = 1}^{#1}%
}

\newcommand{\sumijtod}{\sumijtogen{\dimension}}
\newcommand{\sumijtotwo}{\sumijtogen{2}}

\newcommand{\constant}{C}
\newcommand{\dimension}{d}
\newcommand{\samplesize}{n}
\newcommand{\samplesizestart}{\samplesize_0}
\newcommand{\rootsamplesize}{\sqrt{\samplesize}}
\newcommand{\oneoversamplesize}{\frac{1}{\samplesize}}
\newcommand{\numderivatives}{m}
\newcommand{\extrasmoothing}{r}
\newcommand{\difforder}{\ell}
\newcommand{\maxdifforder}{\difforder}
\newcommand{\derivativeorder}{k}
\newcommand{\bandwidthtargetorder}{r}

\newcommand{\x}{\mathrm{x}}
\newcommand{\y}{\mathrm{y}}
\newcommand{\z}{\mathrm{z}}
\newcommand{\vart}{t}
\newcommand{\p}{p}
\newcommand{\q}{q}

\newcommand{\pdfvarx}{x}
\newcommand{\randomvectorx}{\mathrm{X}}
\newcommand{\randomvarx}{X}
\newcommand{\randomvary}{Y}
\newcommand{\samplevar}{\randomvectorx}
\newcommand{\nsample}{\samplevar_1, \dots, \samplevar_{\samplesize}}
\newcommand{\samplevarrealization}{\mathrm{x}}
\newcommand{\cdfarg}{t}
\newcommand{\derivativevar}{a}

\newcommand{\mean}{\boldsymbol{\mu}}
\newcommand{\covariancematrix}{\Sigma}

\newcommand{\pdf}{f}
\newcommand{\pdfsecondderivative}{\pdf''}
\newcommand{\pdfx}{\pdf(\x)}
\newcommand{\functionmap}[2]{#1 \rightarrow #2}
\newcommand{\lambdafunction}[2]{#1 \mapsto #2}
\newcommand{\functiondef}[3]{#1: \functionmap{#2}{#3}}
\newcommand{\funcarg}{\cdot}
\newcommand{\comp}{\circ}

\newcommand{\curvaturefunctional}{\phi}
\newcommand{\curvaturefunctionalof}[1]{\curvaturefunctional[#1]}
\newcommand{\curvaturesignselector}{s}
\newcommand{\curvaturesign}{(-1)^{\curvaturesignselector}}

\newcommand{\testfunction}{\varphi}
\newcommand{\testpdf}{p}

\newcommand{\positivepart}[1]{(#1)_{+}}

\newcommand{\goesto}{\rightarrow}
\newcommand{\goestoinfty}{\goesto \infty}
\newcommand{\ngoestoinfty}{\genindexn\goestoinfty}
\newcommand{\limninf}{\lim_{\ngoestoinfty}}

\newcommand{\norm}[1]{\lVert #1 \rVert}
\newcommand{\absoluteval}[1]{\left\vert #1 \right\vert}
\newcommand{\smallabsval}[1]{\vert #1 \vert}
\newcommand{\innerprod}[2]{\langle #1, #2 \rangle}

\newcommand{\supxinrd}{\sup_{\x \in \rd}}
\newcommand{\supxinbumpdomain}{\sup_{\x \in \bumpdomain}}
\newcommand{\smallsupnorm}[1]{\supxinrd \smallabsval{#1}}
\newcommand{\smallsupnormbumpdomain}[1]{\supxinbumpdomain \smallabsval{#1}}

\newcommand{\differential}{\mathrm{D}}
\newcommand{\diffoperator}{\partial}
\newcommand{\divergence}{\text{div}}
\newcommand{\rddivergence}[1]{\divergence_{\rd}\left( #1 \right)}
\newcommand{\divergenceof}[1]{\divergence(#1)}
\newcommand{\grad}[1]{\nabla#1}
\newcommand{\gradf}{\grad{\pdf}}
\newcommand{\normalizedgrad}{\normalizedobject{\nabla}}
\newcommand{\normalizedgradof}[1]{\normalizedgrad#1}
\newcommand{\normalizedgraddefinition}{\normalizedgradof{\pdf} = \grad{\pdf} / \sqrt{1 + \norm{\grad{\pdf}}^2}}
\newcommand{\hessian}[1]{\differential^{2}#1}
\newcommand{\laplacian}{\Delta}
\newcommand{\laplacianof}[1]{\laplacian#1}
\newcommand{\secondorderdifffunctional}[2]{\differential_{#1#2}}

\newcommand{\trace}[1]{\mathrm{tr}(#1)}
\newcommand{\determinant}[1]{\det({#1})}
\newcommand{\transpose}[1]{#1^\top}
\newcommand{\vectorize}{\mathrm{vec}}
\newcommand{\zerovector}{\boldsymbol{0}}

\newcommand{\graphf}{\mathcal{G}}
\newcommand{\graphfpair}{(\x, \pdfx)}
\newcommand{\setgraphf}{\{\graphfpair \in \rdpone: \x \in \rd\}}
\newcommand{\chart}{\mathcal{X}}
\newcommand{\embedding}{\mathcal{F}}
\newcommand{\graphfpoint}{p}
\newcommand{\tangent}{T}
\newcommand{\tangentspace}{\tangent_{\graphfpoint}\graphf}
\newcommand{\tangentbundle}{\tangent\graphf}
\newcommand{\basisvectorfield}[1]{\partial_{#1}}
\newcommand{\tangentvector}[2]{\basisvectorfield{#1}\vert_{#2}}
\newcommand{\ithtangentvectorp}{\tangentvector{\genindexi}{\graphfpoint}}
\newcommand{\ithdifferential}[1]{\differential_{#1}}
\newcommand{\vectorfieldx}{X}
\newcommand{\vectorfieldy}{Y}
\newcommand{\vectorfieldv}{\vec{V}}
\newcommand{\immersion}{\iota}
\newcommand{\tangentdifferential}[1]{d{#1}}
\newcommand{\immersedembedding}{\embedding^{\immersion}}
\newcommand{\dimmersionithbasis}{\tangentdifferential{\immersion}(\basisvectorfield{\genindexi})}
\newcommand{\embeddingfieldindex}[1]{\ithdifferential{#1}\immersedembedding}
\newcommand{\embeddingfield}{\embeddingfieldindex{\genindexi}}
\newcommand{\euclideanbasis}[1]{\mathbf{e}_{#1}}
\newcommand{\metric}{g}
\newcommand{\kronecker}{\delta}
\newcommand{\kroneckerdd}[2]{\kronecker_{#1#2}}
\newcommand{\kroneckerup}[2]{\kronecker^{#1}_{#2}}
\newcommand{\tangenthyperplane}[1]{\pi_{#1}}
\newcommand{\unitvec}[1]{\hat{#1}}
\newcommand{\normalvecfield}{N}
\newcommand{\downwardnormalvecfield}{\check{\normalvecfield}}
\newcommand{\christoffel}{\Gamma}
\newcommand{\christoffelwithindices}[3]{\christoffel_{#1#2}^{#3}}
\newcommand{\secondfundform}{A}
\newcommand{\shapeop}{S}
\newcommand{\chartcomp}[1]{#1 \comp \chart}
\newcommand{\ithdiffnormal}{\differential_{\genindexi}\downwardnormalvecfield}
\newcommand{\extrinsicdivergence}[1]{\divergence_{\graphf}^{\lightning}(#1)}
\newcommand{\intrinsicdivergence}[1]{\divergence_{\graphf}(#1)}
\newcommand{\normalcurvature}{\kappa}
\newcommand{\principalcurvature}{\normalcurvature}
\newcommand{\meancurvature}{\mathfrak{D}}
\newcommand{\gaussiancurvature}{\mathfrak{B}}
\newcommand{\connection}{\nabla^{\graphf}}
\newcommand{\metricmatrix}{G}
\newcommand{\inversemetricmatrix}{\metricmatrix^{-1}}
\newcommand{\shapeopmatrix}{S}
\newcommand{\secondfundformmatrix}{A}
\newcommand{\hessianmatrix}{\hessian{\pdf}}

\newcommand{\boundary}{\partial}

\newcommand{\naturals}{\mathbb{N}}
\newcommand{\integers}{\mathbb{Z}}
\newcommand{\positiveintegers}{\integers_{+}}
\newcommand{\extendedpositiveintegers}{\positiveintegers \cup \{\infty\}}

\newcommand{\closedzeroinfinterval}{[0, \infty)}
\newcommand{\openzeroinfinterval}{(0, \infty)}

\newcommand{\reals}{\mathbb{R}}
\newcommand{\rtwo}{\reals^2}
\newcommand{\rd}{\reals^{\dimension}}
\newcommand{\rdpone}{\reals^{\dimension + 1}}

\newcommand{\zerolevelset}[2]{%
	\{ \x \in #1 : #2(\x) = 0 \}%
}

\newcommand{\zeroupperlevelset}[2]{%
	\{ \x \in #1 : #2(\x) \geq 0 \}%
}

\newcommand{\zerolowerlevelset}[2]{%
	\{ \x \in #1 : #2(\x) \leq 0 \}%
}

\newcommand{\rdzerolevelset}[1]{\zerolevelset{\rd}{#1}}
\newcommand{\rdzeroupperlevelset}[1]{\zeroupperlevelset{\rd}{#1}}
\newcommand{\rdzerolowerlevelset}[1]{\zerolowerlevelset{\rd}{#1}}

\newcommand{\seta}{A}
\newcommand{\setb}{B}
\newcommand{\openset}{\mathcal{U}}
\newcommand{\vectorspace}{\mathcal{V}}

\newcommand{\bandwidth}{h}
\newcommand{\smoothedobject}[1]{#1_{\bandwidth}}
\newcommand{\smoothedpdf}{\smoothedobject{\pdf}}
\newcommand{\kernel}{K}
\newcommand{\kernelx}{\kernel(\x)}
\newcommand{\kernelh}{\smoothedobject{\kernel}}
\newcommand{\kdef}[2]{\hat{\pdf}_{#1, #2}}
\newcommand{\kdefnh}{\kdef{\samplesize}{\bandwidth}}
\newcommand{\kdefnhx}{\kdefnh(\x)}

\newcommand{\boundaryof}[1]{\boundary#1}
\newcommand{\bump}{\mathcal{B}}
\newcommand{\bumpof}[1]{\bump^{#1}}
\newcommand{\genericbumpboundary}{\boundaryof{\genericbump}}
\newcommand{\approxbump}{\approxobject{\bump}_{\samplesize, \bandwidth}^{\curvaturefunctional}}
\newcommand{\approxbumpboundary}{\boundaryof{\approxbump}}
\newcommand{\genericbump}{\bumpof{\curvaturefunctional}}
\newcommand{\concavebump}{\bumpof{\eigenvalue_1}}
\newcommand{\convexbump}{\bumpof{\eigenvalue_{\dimension}}}
\newcommand{\meancurvaturebump}{\bumpof{\normalizedgrad}}
\newcommand{\laplacianbump}{\bumpof{\laplacian}}
\newcommand{\hessiandeterminantbump}{\bumpof{\det}}
\newcommand{\bumpdomain}{\Theta}

\newcommand{\classcontinous}[2]{\mathcal{C}^{#1}(#2)}
\newcommand{\classlp}[1]{\mathcal{L}^{#1}}
\newcommand{\classbounded}[1]{\ell^{\infty}(#1)}

\newcommand{\eigenvalue}{\lambda}
\newcommand{\eigenvalueof}[2]{\eigenvalue_{#1}[#2]}
\newcommand{\eigenvalueofpdf}[1]{\eigenvalueof{#1}{\pdf}}
\newcommand{\largestpdfeigenvalue}{\eigenvalueofpdf{1}}
\newcommand{\smallestpdfeigenvalue}{\eigenvalueofpdf{\dimension}}
\newcommand{\eigenvalueofkdefnh}[1]{\eigenvalueof{#1}{\kdefnh}}
\newcommand{\eigenvalueofpdfx}[1]{\eigenvalueofpdf{#1}(\x)}
\newcommand{\diffconsecutiveeig}[2]{\{\eigenvalueof{#1}{#2}(\x) - \eigenvalueof{#1 + 1}{#2}(\x)\}}

\newcommand{\criterionfunction}{\Psi}
\newcommand{\approxcriterionfunction}{\approxobject{\criterionfunction}}
\newcommand{\solutionmanifold}{\mathcal{M}}
\newcommand{\approxsolutionmanifold}{\approxobject{\solutionmanifold}}

\newcommand{\zerovectorlevelset}[1]{\{ \x \in \rd : #1(\x) = 0 \}}

\newcommand{\distancepointset}[2]{d(#1, #2)}
\newcommand{\hausdorffdistance}[2]{\mathrm{Haus}(#1, #2)}
\newcommand{\norminf}[1]{\norm{#1}_{\infty}}
\newcommand{\norminfk}[2]{\norm{#1}_{\infty, #2}}
\newcommand{\directeddistance}[2]{\sup_{\x \in #1} \distancepointset{\x}{#2}}
\newcommand{\fattening}[2]{#1 \oplus #2}

\newcommand{\fatteningexcess}{\varepsilon}
\newcommand{\fatteningconstant}{\delta}

\newcommand{\fatbumpboundary}{\fattening{\genericbumpboundary}{\fatteningconstant}}
\newcommand{\criteriongradientconstant}{\lambda}
\newcommand{\fatmanifold}{\fattening{\solutionmanifold}{\fatteningconstant}}

\newcommand{\criterionhypotheses}{A}
\newcommand{\criterionhypothesesinline}{(\mathrm{\criterionhypotheses})}
\newcommand{\approxcriterionhypotheses}{B}

\newcommand{\underlyingcurvature}{\varphi}
\newcommand{\curvaturerdzerolevelset}[1]{\rdzerolevelset{\curvaturefunctionalof{#1}}}
\newcommand{\minbetweenextrasmoothingandtwo}{\min \{\extrasmoothing, 2\}}

\newcommand{\infdiffconsecutiveeig}[1]{\inf_{\x \in \fatbumpboundary} \diffconsecutiveeig{\genindexi}{#1}}
\newcommand{\erroreigenvalue}[1]{\sup_{\x \in \rd} \smallabsval{\eigenvalueof{#1}{\pdf}(\x) - \eigenvalueof{#1}{\kdefnh}(\x)}}

\newcommand{\norminfcriteriondiff}{\norminf{\approxcriterionfunction - \criterionfunction}}
\newcommand{\norminfcurvaturediff}{\norminfk{\curvaturefunctionalof{\kdefnh} - \curvaturefunctionalof{\pdf}}{\derivativeorder}}
\newcommand{\hausdorffdbetweenboundaries}{\hausdorffdistance{\approxbumpboundary}{\genericbumpboundary}}

\newcommand{\probability}{\mathbb{P}}
\newcommand{\probabilityof}[1]{\probability \left( #1 \right)}
\newcommand{\cdfof}[1]{F_{#1}}
\newcommand{\almostsureconverges}{\xrightarrow[\samplesizegoestoinfty]{\gls*{as}}}
\newcommand{\expectedvalue}{\mathbb{E}}
\newcommand{\expectedvalueof}[1]{\expectedvalue \left[ #1 \right]}
\newcommand{\smallexpectedvalueof}[1]{\expectedvalue [#1]}
\newcommand{\covariance}[2]{\mathrm{Cov}(#1, #2)}
\newcommand{\followdistribution}{\sim}
\newcommand{\equaldistribution}{\stackrel{d}{=}}
\newcommand{\weaklyconverges}[2]{#1 \leadsto #2}
\newcommand{\convergesinprob}[2]{#1 \xrightarrow{\probability} #2}
\newcommand{\kolmogorovdistance}[2]{\uprho_{\mathrm{\acrshort*{cdf}}} \left( #1, #2 \right)}

\newcommand{\bigoh}{O}
\newcommand{\littleoh}{o}
\newcommand{\littleohone}{\littleoh(1)}
\newcommand{\littleohinvsqrtn}{\littleoh(\samplesize^{-1/2})}
\newcommand{\sameorder}{\asymp}
\newcommand{\tendstoassamplesizegoestoinfty}{\xrightarrow[\samplesizegoestoinfty]{}}

\newcommand{\integralonrd}{\int_{\rd}}
\newcommand{\integralonrddx}[1]{\integralonrd #1 \ d{\x}}

\newcommand{\univariatekernel}{\upkappa}
\newcommand{\derivativevectorcomponent}{\beta}
\newcommand{\derivativevector}{\boldsymbol{\derivativevectorcomponent}}
\newcommand{\derivativevectororder}{\absoluteval{\derivativevector}}
\newcommand{\indexeddiffoperator}{\diffoperator^{\derivativevector}}
\newcommand{\partialx}[1]{\partial\pdfvarx_{#1}}
\newcommand{\partialxvectorindex}[1]{\partialx{#1}^{\derivativevectorcomponent_{#1}}}
\newcommand{\partialpdf}{\indexeddiffoperator\pdf}
\newcommand{\partialkdefnh}{\indexeddiffoperator\kdefnh}
\newcommand{\partialpdfx}{\partialpdf(\x)}
\newcommand{\partialkdefnhx}{\partialkdefnh(\x)}
\newcommand{\partialpdfxindex}[1]{\frac{\partial^2 \pdf}{\partialx{#1}^2}}
\newcommand{\samplesizegoestoinfty}{\samplesize\goestoinfty}
\newcommand{\limsamplesizeinf}{\lim_{\samplesizegoestoinfty}}
\newcommand{\positiveintegerspowd}{\positiveintegers^{\dimension}}
\newcommand{\positiveintegerspowdord}[1]{\positiveintegerspowd[#1]}
\newcommand{\supnormderivativekde}{\smallsupnorm{\partialkdefnhx - \smallexpectedvalueof{\partialkdefnhx}}}
\newcommand{\orderfraclog}[1]{\frac{\log \samplesize}{\samplesize \bandwidth^{\dimension + #1}}}
\newcommand{\inlineorderfraclog}[1]{\samplesize^{-1} \bandwidth^{-(\dimension + #1)} \log \samplesize}
\newcommand{\boundsupnormderivativekde}{\orderfraclog{2\absoluteval{\derivativevector}}}
\newcommand{\boundariascastro}{b}
\newcommand{\sqrtboundsupnormderivativekde}{\sqrt{\boundsupnormderivativekde}}
\newcommand{\totalsuperror}{\smallsupnorm{\partialkdefnhx - \partialpdfx}}
\newcommand{\bias}{\smallexpectedvalueof{\partialkdefnhx} - \partialpdfx}
\newcommand{\biassuperror}{\smallsupnorm{\bias}}
\newcommand{\biasintegrand}{\partialpdf(\x - \bandwidth\z) - \partialpdfx}
\newcommand{\indexedderivativevector}[1]{\derivativevector_{#1}}
\newcommand{\multivariatepartialpdf}[1]{\diffoperator^{\indexedderivativevector{1}, \dots, \indexedderivativevector{\numderivatives}}#1}
\newcommand{\indexeddiffoperatorindex}[1]{\diffoperator^{\indexedderivativevector{#1}} \pdf(\x)}

\newcommand{\holderexponent}{\alpha}
\newcommand{\integralradius}{\delta}

\newcommand{\minmaxexponent}{s}

\newcommand{\taylorremainder}[1]{\mathcal{R}_{#1}(\x, \z, \bandwidth)}

\newcommand{\bootstrap}[1]{#1^{*}}
\newcommand{\empiricalbootstrapprob}{\bootstrap{\probability}_{\samplesize}\{\originalsample\}}

\newcommand{\resamplesize}{m}
\newcommand{\kdefmh}{\kdef{\resamplesize}{\bandwidth}}

\newcommand{\originalprocess}{\mathfrak{X}}
\newcommand{\originalsample}{\originalprocess_{\samplesize}}

\newcommand{\bootstrapsupreumumnormerror}[1]{\bootstrap{\mathcal{E}_{\samplesize, \bandwidth}}[#1 | \originalsample]}
\newcommand{\bootstrapsupreumumnormerrorm}[1]{\bootstrap{\mathcal{E}_{\resamplesize, \bandwidth}}[#1 | \originalsample]}
\newcommand{\bootstrapkdefnhof}[1]{\bootstrap{\kdefnh}(#1 | \originalsample)}
\newcommand{\bootstrapkdefmhof}[1]{\bootstrap{\kdefmh}(#1 | \originalsample)}
\newcommand{\bootstrapsamplevar}{\bootstrap{\samplevar}}
\newcommand{\conditionalkdefnhof}[1]{\kdefnh(#1 | \originalsample)}

\newcommand{\oneminusconfidence}{\alpha}
\newcommand{\confidence}{1 - \oneminusconfidence}

\newcommand{\margin}{\zeta_{\samplesize, \bandwidth}^{\oneminusconfidence}}
\newcommand{\marginrv}{Z_{\samplesize, \bandwidth}}
\newcommand{\limitingmarginrv}{\mathcal{Z}}
\newcommand{\approxmargin}{\approxobject{\zeta}_{\samplesize, \bandwidth}^{\oneminusconfidence}}

\newcommand{\superrorcurvature}{\smallsupnormbumpdomain{\curvaturefunctionalof{\kdefnh}(\x) - \curvaturefunctionalof{\smoothedpdf}(\x)}}
\newcommand{\supreumumnormerror}[1]{\mathcal{E}_{\samplesize, \bandwidth}[#1]}
\newcommand{\supreumumnormerrorfunctional}[1]{\mathcal{S}_{\samplesize, \bandwidth}[#1]}

\newcommand{\absvalsecondderivatives}{\smallabsval{\differential_{\genindexi\genindexj} \kdefnh(\x) - \differential_{\genindexi\genindexj} \smoothedpdf(\x)}}
\newcommand{\supnormsecondderivativesij}{\supreumumnormerror{\secondorderdifffunctional{\genindexi}{\genindexj}}}
\newcommand{\bootstrapsupnormderivativesij}{\bootstrapsupreumumnormerror{\secondorderdifffunctional{\genindexi}{\genindexj}}}

\newcommand{\gaussianprocess}{\mathbb{B}}
\newcommand{\gaussianprocessbootstrap}{\gaussianprocess_{\originalsample}}

\newcommand{\empiricalprocess}{\mathbb{G}_{\samplesize}}
\newcommand{\empiricalprocessbootstrap}{\mathbb{G}_{\resamplesize}}

\newcommand{\gaussianprocessrv}{\mathbf{B}}
\newcommand{\gaussianprocessrvh}{\smoothedobject{\gaussianprocessrv}}
\newcommand{\normalizedgaussianprocessrvh}{\smoothedobject{\normalizedobject{\gaussianprocessrv}}}
\newcommand{\gaussianprocessrvhprime}{\gaussianprocessrvh'}
\newcommand{\gaussianprocessrvnh}{\gaussianprocessrv_{\samplesize, \bandwidth}\{\originalsample\}}
\newcommand{\sumsupgaussians}{\normalizedgaussianprocessrvh[\secondorderdifffunctional{\genindexi}{\genindexj}]}

\newcommand{\kernelsvcclass}{\mathcal{K}}
\newcommand{\coverableclass}{\mathcal{F}}
\newcommand{\measurableclass}{\mathcal{F}}
\newcommand{\envelopefunction}{\Psi}
\newcommand{\vcclass}{\mathcal{F}}
\newcommand{\vcclassh}{\vcclass_{\bandwidth}}
\newcommand{\vcclasshprime}{\vcclassh'}
\newcommand{\measurableclassmember}{\varphi}
\newcommand{\vcclassmember}{\varphi}
\newcommand{\vcbound}{b}
\newcommand{\vckernelsbound}{C}
\newcommand{\coveringnumber}{\mathcal{N}}
\newcommand{\coveringepsilon}{\epsilon}
\newcommand{\coveringconstanta}{A}
\newcommand{\coveringconstantnu}{\nu}
\newcommand{\coveringprobmeasure}{\mathbb{Q}}
\newcommand{\supovervcclass}{\sup_{\vcclassmember \in \vcclass}}
\newcommand{\supovervcclassof}[1]{\supovervcclass \vert #1(\vcclassmember) \vert}
\newcommand{\supovervcclassh}{\sup_{\vcclassmember \in \vcclassh}}
\newcommand{\supovervcclasshof}[1]{\supovervcclassh \vert #1(\vcclassmember) \vert}
\newcommand{\supovervcclasshprime}{\sup_{\vcclassmember \in \vcclasshprime}}
\newcommand{\supovervcclasshprimeof}[1]{\supovervcclasshprime \vert #1(\vcclassmember) \vert}

\newcommand{\processstdv}{\sigma}
\newcommand{\anticoncentrationconstant}{\mu}
\newcommand{\anticoncentrationconstantprime}{\anticoncentrationconstant'}
\newcommand{\chernozhukovuniversalconst}{A}
\newcommand{\chernozhukovfirstconst}{\chernozhukovuniversalconst_1}
\newcommand{\modifiedchernofirstconst}{\tilde{\chernozhukovuniversalconst}_1}
\newcommand{\chernozhukovsecondtconst}{\chernozhukovuniversalconst_2}
\newcommand{\chernozhukovfreeconst}{\gamma}

\newcommand{\genericdiffop}{\mathcal{D}}
\newcommand{\scaling}{\sqrt{\samplesize\bandwidth^{\dimension + \difforder}}}

\newcommand{\bootstrapconvergencerategen}[1]{%
	\left[%
		\frac{\log^7 \samplesize}{\samplesize \bandwidth^{\dimension + #1}}%
		\right]^{1/8}}

\newcommand{\bootstrapconvergencerateorder}{\bootstrapconvergencerategen{\difforder}}
\newcommand{\diffcdfgaussiansup}{\Omega_{\samplesize, \bandwidth}(\originalsample)}

\newcommand{\quantile}{\mathcal{Q}}
\newcommand{\quantilefuncofp}[2]{\quantile_{#2}(#1)}
\newcommand{\quantilefuncofrv}[2]{\quantile_{#2} \{ #1 \}}
\newcommand{\tvar}[2]{\mathrm{\acrshort*{tvar}}_{#1} \left\{ #2 \right\}}

\newcommand{\smoothedbump}{\smoothedobject{\bump}}
\newcommand{\smoothedbumpof}[1]{\smoothedbump^{#1}}
\newcommand{\genericsmoothedbump}{\smoothedbumpof{\curvaturefunctional}}

\newcommand{\upperconfidentbump}[1]{\bar{\bump}_{\samplesize, \bandwidth}^{#1}}
\newcommand{\genericupperconfidentbump}{\upperconfidentbump{\curvaturefunctional}(\margin)}
\newcommand{\genericupperconfidentbumpapprox}{\upperconfidentbump{\curvaturefunctional}(\approxmargin)}
\newcommand{\lowerconfidentbump}[1]{\ubar{\bump}_{\samplesize, \bandwidth}^{#1}}
\newcommand{\genericlowerconfidentbump}{\lowerconfidentbump{\curvaturefunctional}(\margin)}
\newcommand{\genericlowerconfidentbumpapprox}{\lowerconfidentbump{\curvaturefunctional}(\approxmargin)}

\newcommand{\confidentbumpset}[1]{%
	\{ \x \in \bumpdomain : \curvaturesign \curvaturefunctionalof{\kdefnh}(\x) \geq #1 \}%
}

\input{01-6-bibliography.tex}

\begin{document}

	\newcommand{\papertitle}{Bump hunting through density curvature features}

\title{\papertitle}
\date{}

\hypersetup{
  pdftitle={\papertitle},
}

	\ifnum\blind=0
		
\newcommand{\joseechacon}{Jos\'e E. Chac\'on}
\newcommand{\addressjoseechaon}{Departamento  de  Matem\'aticas,  Universidad  de  Extremadura, Badajoz, Spain.}
\newcommand{\emailjoseechacon}{jechacon@unex.es}
\newcommand{\orcidjoseechacon}{0000-0002-3675-1960}
\newcommand{\footjoseechacon}{$^{\dagger}$}
\newcommand{\footaddressjoseechaon}{\footjoseechacon\addressjoseechaon}
\newcommand{\footorcidjoseechacon}{\footjoseechacon\orcid{\orcidjoseechacon}}

\newcommand{\javierfdezserrano}{Javier Fern\'andez Serrano}
\newcommand{\addressjavierfdezserrano}{Departamento  de  Matem\'aticas,  Universidad  Aut\'onoma de Madrid, Madrid, Spain.}
\newcommand{\emailjavierfdezserrano}{javier.fernandezs01@estudiante.uam.es}
\newcommand{\orcidjavierfdezserrano}{0000-0001-5270-9941}
\newcommand{\footjavierfdezserrano}{$^{\ddagger}$}
\newcommand{\footaddressjavierfdezserrano}{\footjavierfdezserrano\addressjavierfdezserrano}
\newcommand{\footorcidjavierfdezserrano}{\footjavierfdezserrano\orcid{\orcidjavierfdezserrano}}

\author{\joseechacon\footjoseechacon}
\address{\footaddressjoseechaon}
\email{\footjoseechacon\emailjoseechacon \ \Letter}
\thanks{\footorcidjoseechacon}

\author{\javierfdezserrano\footjavierfdezserrano}
\address{\footaddressjavierfdezserrano}
\email{\footjavierfdezserrano\emailjavierfdezserrano}
\thanks{\footorcidjavierfdezserrano}

\hypersetup{
  pdfauthor={\joseechacon, \javierfdezserrano}
}

\ifnum\material=1
	\makeatletter
	\def\blfootnote{\xdef\@thefnmark{}\@footnotetext}
	\makeatother

	\AtEndDocument{
		\blfootnote{
			\begin{flushleft}
				\hspace{1em}\mbox{\textit{Authors}:\space} \textsc{\authors}.
				\\
				\bigskip
				\hspace{1em}\textsc{\footaddressjoseechaon}
				\\
				\hspace{1em}\textsc{\footaddressjavierfdezserrano}
				\\
				\hspace{1em}\mbox{\textit{E-mail addresses}:\space} \texttt{\emails}.
				\\
				\hspace{1em}\footorcidjoseechacon.
				\\
				\hspace{1em}\footorcidjavierfdezserrano.
			\end{flushleft}
		}
	}
\fi

	\fi

	\newcommand{\abstractmeta}{%
	Bump hunting deals with finding in sample spaces meaningful data subsets known as bumps.
	These have traditionally been conceived as modal or concave regions in the graph of the underlying density function.
	We define an abstract bump construct based on curvature functionals of the probability density.
	Then, we explore several alternative characterizations involving derivatives up to second order.
	In particular, a suitable implementation of Good and Gaskins' original concave bumps is proposed in the multivariate case.
	Moreover, we bring to exploratory data analysis concepts like the mean curvature and the Laplacian that have produced good results in applied domains.
	Our methodology addresses the approximation of the curvature functional with a plug-in kernel density estimator.
	We provide theoretical results that assure the asymptotic consistency of bump boundaries in the Hausdorff distance with affordable convergence rates.
	We also present asymptotically valid and consistent confidence regions bounding curvature bumps.
	The theory is illustrated through several use cases in sports analytics with datasets from the NBA, MLB and NFL.
	We conclude that the different curvature instances effectively combine to generate insightful visualizations.
}

\begin{abstract}
	\abstractmeta
\end{abstract}

\hypersetup {
  pdfsubject={\abstractmeta},
}

\newcommand{\keywordbumphunting}{bump hunting}
\newcommand{\keywordconcavity}{concavity}
\newcommand{\keywordgaussiancurvature}{Gaussian curvature}
\newcommand{\keywordkdederivatives}{kernel density derivative estimation}
\newcommand{\keywordlaplacian}{Laplacian}
\newcommand{\keywordmeancurvature}{mean curvature}

\keywords{%
	\keywordbumphunting,
	\keywordconcavity,
	\keywordgaussiancurvature,
	\keywordkdederivatives,
	\keywordlaplacian,
	\keywordmeancurvature
}

\hypersetup{
	pdfkeywords={%
		\keywordbumphunting,
		\keywordconcavity,
		\keywordgaussiancurvature,
		\keywordkdederivatives,
		\keywordlaplacian,
		\keywordmeancurvature
	},
}

\newcommand{\mscnonparamestimation}{62G05}
\newcommand{\mscnonparamasymptotics}{62G20}
\newcommand{\mscgeometry}{60D05}
\newcommand{\mscstatsdatascience}{62R07}

\subjclass[2020]{
	\mscnonparamestimation \ (Primary),
	\mscnonparamasymptotics,
	\mscgeometry,
	\mscstatsdatascience
}

	\maketitle

	\begin{refsection}

		\section{Introduction}

The subject of \gls*{bh} refers to the set estimation task~\cite{Baillo2000} of discovering meaningful data regions, called \textit{bumps}, in a sample space~\cite{Good1980}.
The most representative example is the modal regions in a \gls*{pdf}, which are literally bumps in its graph.
Even though the concept has a broader scope, \gls*{bh} remains relatively unexplored.

Consider the problem of identifying made shots on a basketball court.
Coaches, scouts and other personnel might be interested in extracting shooting patterns for adopting specific pre-game strategies, assessing talent or working on player development.
\figurename~\ref{fig:nba-shot-bumps} illustrates four different ways of constructing bumps with basketball shot data.
\figurename~\ref{fig:35-hdr-bump} and \figurename~\ref{fig:95-hdr-bump} correspond to~\citeauthor{Hyndman1996}'s classical \gls{hdr} configurations, while \figurename~\ref{fig:concave-bump} and \figurename~\ref{fig:laplacian-bump} follow our novel \textit{curvature}-based characterizations.
Each of them presents a distinctive perspective on the underlying shooting tendencies.
\figurename~\ref{fig:35-hdr-bump} and \figurename~\ref{fig:concave-bump} point at fine-grained locations, whereas \figurename~\ref{fig:95-hdr-bump} and \figurename~\ref{fig:laplacian-bump} cover entire influence areas.
Smaller regions suggest spots to prioritize in an offensive or defensive scheme.
The larger ones \textit{connect the dots}, revealing general trends.
Both views complement each other to offer a complete picture.

\renewcommand{\subfigurewidthtwoandone}{0.4\textwidth}
\renewcommand{\graphicswidthtwoandone}{0.7\textwidth}

\begin{figure}
	\centering
	\begin{subfigure}[b]{\subfigurewidthtwoandone}
		\centering
		\includegraphics[width=\graphicswidthtwoandone]{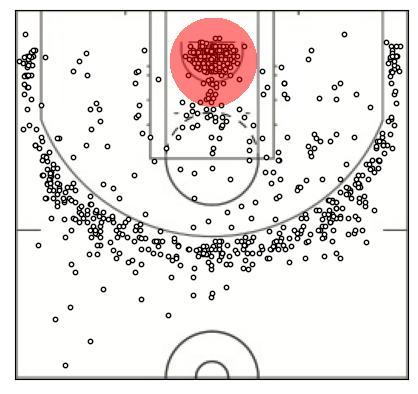}
		\caption{35\%-\gls*{hdr}}
		\label{fig:35-hdr-bump}
	\end{subfigure}
	\begin{subfigure}[b]{\subfigurewidthtwoandone}
		\centering
		\includegraphics[width=\graphicswidthtwoandone]{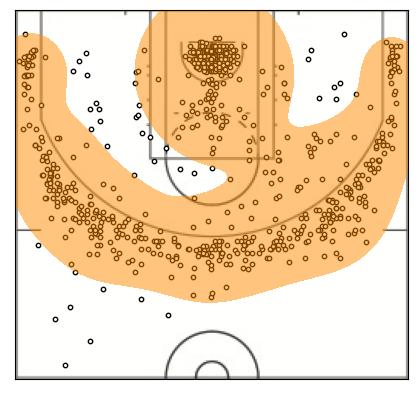}
		\caption{95\%-\gls*{hdr}}
		\label{fig:95-hdr-bump}
	\end{subfigure}
	\begin{subfigure}[b]{\subfigurewidthtwoandone}
		\centering
		\includegraphics[width=\graphicswidthtwoandone]{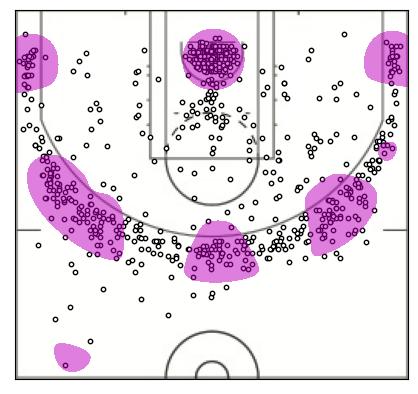}
		\caption{Concave bumps}
		\label{fig:concave-bump}
	\end{subfigure}
	\begin{subfigure}[b]{\subfigurewidthtwoandone}
		\centering
		\includegraphics[width=\graphicswidthtwoandone]{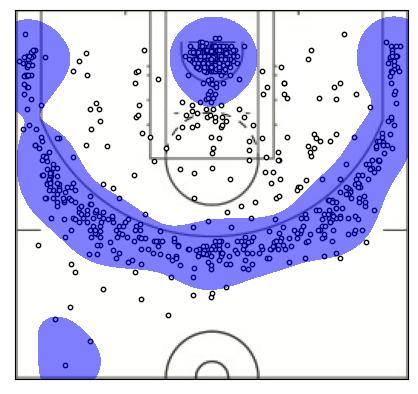}
		\caption{Laplacian bumps}
		\label{fig:laplacian-bump}
	\end{subfigure}
	\caption{
		Four ways of constructing bumps for basketball converted shot data.
		The exact 804 made shot locations are scattered across each sub-figure.
		The top left and right bumps correspond to \glspl*{hdr} comprising 35\% and 95\% of all observations.
		The bottom left bumps highlight regions where the \gls*{pdf} subgraph is locally concave.
		The bottom right bumps comprise points where the Laplacian of the underlying \gls*{pdf} takes negative values.
	}
	\label{fig:nba-shot-bumps}
\end{figure}

\subsection{Goals}

We propose a new \gls*{bh} curvature-based methodology addressing some blind spots of classical methods.
\figurename~\ref{fig:35-hdr-bump} and \figurename~\ref{fig:95-hdr-bump} either miss or mask relevant information.
The finer-grained 35\%-\gls*{hdr} does not include the perimeter concave bumps in \figurename~\ref{fig:concave-bump}.
Meanwhile, the 95\%-\gls*{hdr} fails to keep the short, mid and long ranges well separated, as opposed to the Laplacian bumps in \figurename~\ref{fig:laplacian-bump}.

\vspace{1em}
\paragraph{\textit{Contributions}}

The main contributions of this paper are:
\begin{itemize}[leftmargin=2em]
	\renewcommand{\labelitemi}{\tiny{\ding{228}}}
	\item Presenting a general set estimation framework for curvature-based \gls*{bh}.
	\item Extending concave bumps to the multivariate setting.
	\item Introducing mean curvature and Laplacian bumps.
	\item Deriving consistency convergence rates for curvature bump boundaries.
	\item Building valid and consistent confidence regions for curvature bumps.
	\item Showcasing the numerous applications of curvature-based \gls*{bh}.
\end{itemize}

\subsection{Related work}

One of the first \gls*{bh} references was due to~\citeauthor{Good1980} in~\citeyear{Good1980}~\cite{Good1980}.
They offered a premier definition of a bump as the concave region delimited between two inflection points.
Moreover, they suggested an extension to the multivariate case.
\figurename~\ref{fig:concave-bump} corresponds to our implementation of multivariate concave bumps.

In~\citeyear{Hyndman1996}, \citeauthor{Hyndman1996} introduced the concept of \gls*{hdr}, which he conceives as level sets of the \gls*{pdf} $\pdf$ that enclose a certain probability mass~\cite{Hyndman1996}.
More formally, the $(1 - \hyndalpha)$-level \gls*{hdr} is defined as $\hyndhdralpha = \hyndhdr$, where $\hyndfalpha$ is the largest value such that $\probabilityof{\randomvectorx \in \hyndhdralpha} \geq 1 - \hyndalpha$, and the \gls*{rv} $\randomvectorx$ is such that $\randomvectorx \followdistribution \pdf$.
\glspl*{hdr} satisfy the nice property of being the \textit{smallest} sets with a given probability mass.

In~\citeyear{Chaudhuri1999}, \citeauthor{Chaudhuri1999} presented \gls*{sizer}, envisioning bumps as places where the first derivative becomes zero~\cite{Chaudhuri1999}.
In~\citeyear{Chaudhuri2002},~\citeauthor{Chaudhuri1999} showcased the role of second derivatives in an unpublished manuscript~\cite{Chaudhuri2002}.
Also in~\citeyear{Godtliebsen2002},~\citeauthor{Godtliebsen2002} explored curvature features from a pointwise perspective by assessing Hessian eigenvalue sign combinations in the bivariate case~\cite{Godtliebsen2002}.
A multivariate extension to~\cite{Godtliebsen2002} was formulated in~\citeyear{Duong2008} by~\citeauthor{Duong2008}, targeting the pointwise significance of non-zero Hessian determinants~\cite{Duong2008}.
Lastly, in~\citeyear{Marron2021},~\citeauthor{Marron2021} elaborate on second derivatives in their book~\citetitle{Marron2021}~\cite{Marron2021}.

\subsection{Outline}

The new methodology is presented in Section~\ref{sec:methods}.
The \supplement{} provides the necessary differential geometry foundations.
In turn, Section~\ref{sec:asymptotics} is entirely dedicated to asymptotic consistency and inference results.
A sports analytics application is explored in Section~\ref{sec:application}.
The \supplement{} includes all the proofs and computational details.
We reflect on the proposed methodology in Section~\ref{sec:discussion}.

		\section{Methods}

\label{sec:methods}

Our methodology finds alternative ways of analyzing sample spaces by exploiting \glspl*{pdf}' curvature properties, adhering to~\citeauthor{Chaudhuri1999}'s defence of \gls*{pdf} derivatives.
Considering~\citeauthor{Hyndman1996}'s approach a well-established tool, we believe there are still some blind spots to address with curvature.

\citeauthor{Hyndman1996}'s \glspl*{hdr} have the advantage of always including \textit{global} modes.
However, they may generally miss \textit{local} modes if small enough; lowering the threshold $\hyndalpha$ might not capture them without obfuscating the \gls*{hdr}.
On the other hand, when varying $\hyndalpha$ works, questions remain on the specific value it should take.
Moreover, sometimes it is necessary to explore the whole range of $\hyndalpha \in (0, 1)$ to recover all the relevant \gls*{pdf} features~\cite{Stuetzle2003}.

Consider a $\dimension$-variate \gls*{pdf} $\functiondef{\pdf}{\rd}{\closedzeroinfinterval}$.
We define bumps as subsets of $\rd$ of the form
\begin{equation}
	\label{eq:abstract-bump}
	\genericbump
	=
	\rdzeroupperlevelset{\curvaturesign \curvaturefunctionalof{\pdf}}
	\,,
\end{equation}
for some functional $\curvaturefunctional$ measuring the \textit{curvature} of $\pdf$ at any point, and some sign selector $\curvaturesignselector \in \{0, 1\}$ that will usually be kept implicit.
If the gradient $\grad{\curvaturefunctionalof{\pdf}}$ does not vanish near the zero-level set of $\curvaturefunctionalof{\pdf}$, the bump boundary $\genericbumpboundary$ is retrieved by substituting the inequality with an equality sign in~\eqref{eq:abstract-bump}~\cite[Remark 3.1]{Qiao2020}; see Theorem~\ref{th:stability-theorem} ahead for a formal condition~\cite[Assumption G]{Chen2017}~\cite[]{Chen2022}.
Contrary to \glspl*{hdr}, the idea behind~\eqref{eq:abstract-bump} is that $\curvaturefunctional$ carries an implicit threshold, say zero, to determine if a point belongs to the bump, solving the arbitrariness of the choice of $\hyndalpha$ in \glspl*{hdr}.

Once some curvature functional is chosen, we propose to employ a kernel plug-in estimator of $\genericbump$, replacing $\pdf$ with its \gls*{kde} in~\eqref{eq:abstract-bump}.
Thus, given a sample $\nsample$ of \gls*{iid} random variables with \gls*{pdf} $\pdf$ and a bandwidth $\bandwidth > 0$, we consider the \gls*{kde} of $\pdf$ as
\begin{equation}
	\label{eq:kernel-estimator}
	\kdefnhx
	=
	\oneoversamplesize
	\sumitosamplesize
	\kernelh(\x - \samplevar_{\genindexi})
	=
	\frac{1}{\samplesize \bandwidth^{\dimension}}
	\sumitosamplesize
	\kernel \left( \frac{\x - \samplevar_{\genindexi}}{\bandwidth} \right)
	\,,
\end{equation}
for some kernel function $\kernel$, typically a $\dimension$-variate \gls*{pdf}.
Using~\eqref{eq:kernel-estimator}, we then define the plug-in estimator of~\eqref{eq:abstract-bump} as
\begin{equation}
	\label{eq:approx-bump}
	\approxbump
	=
	\rdzeroupperlevelset{\curvaturesign \curvaturefunctionalof{\kdefnh}}
	\,.
\end{equation}

To a first approximation, a \textit{scalar} bandwidth is chosen for simplicity.
\citeauthor{Chacon2018} demonstrated that, for $\dimension > 1$, unconstrained bandwidth matrices produce significant performance gains, especially in \gls*{kdde}~\cite[Section 5.2]{Chacon2018}.
Preliminary experiments seem to support their recommendation also for curvature-based \gls*{bh}.
Nonetheless, all the theoretical developments and, consequently, all the exhibition figures in this paper obey this simplification.
On the other hand, the kernel $\kernel$ has a lower impact on the results.
Most of the statements in Section~\ref{sec:asymptotics} do not impose a particular choice.
However, all of them are compatible with the Gaussian kernel (see~\cite{AriasCastro2016, Chen2015, Chen2016}), which is almost universally preferred in a multivariate setting~\cite[p. 15]{Chacon2018}.

For $\dimension = 1$,~\citeauthor{Chaudhuri2002} studied the functional $\curvaturefunctionalof{\pdf} = \pdfsecondderivative$, which leads to \textit{concave} bumps, if $\curvaturesignselector = 1$, or \textit{convex} dips, if $\curvaturesignselector = 0$.
Different alternatives arise in the multivariate case.
The geometrical concepts in the \supplement{} lay the grounds for characterizing bumps in alternative ways to \glspl*{hdr}.
Considering \glspl*{pdf} as hypersurfaces, notions like the mean and Gaussian curvatures find new usages in statistics.
\figurename~\ref{fig:mixture-bumps} illustrates the two main kinds of curvature bumps in this paper.
Even though $\curvaturefunctional$ may \textit{a priori} depend on partial derivatives of $\pdf$ of arbitrary order, the theory of hypersurfaces in the \supplement{} suggests that our quest for curvature features is essentially fulfilled with up to second derivatives of the \gls*{pdf} $\pdf$.

Given the connection of curvature with second derivatives, we propose targeting $\bandwidthtargetorder = 2$ in one of the standard bandwidth selectors~\cite{Chacon2013}.
The same heuristic worked well for \gls*{kde}-based applications such as mean shift clustering or feature significance testing~\cite[Chapter 6]{Chacon2018}.

\renewcommand{\subfigurewidthtwoandone}{0.4\textwidth}
\renewcommand{\graphicswidthtwoandone}{\textwidth}

\begin{figure}
	\centering
	\begin{subfigure}[b]{\subfigurewidthtwoandone}
		\centering
		\includegraphics[width=\graphicswidthtwoandone]{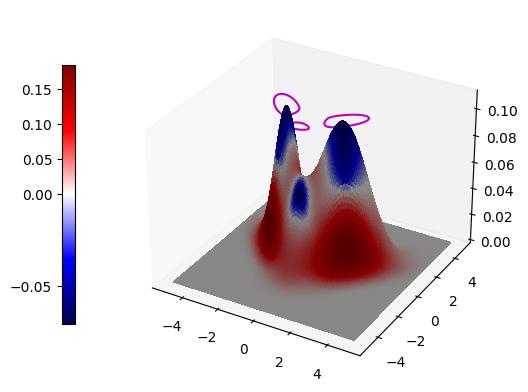}
		\caption{Concave bumps}
		\label{fig:mixture-bump-concave}
	\end{subfigure}
	\begin{subfigure}[b]{\subfigurewidthtwoandone}
		\centering
		\includegraphics[width=\graphicswidthtwoandone]{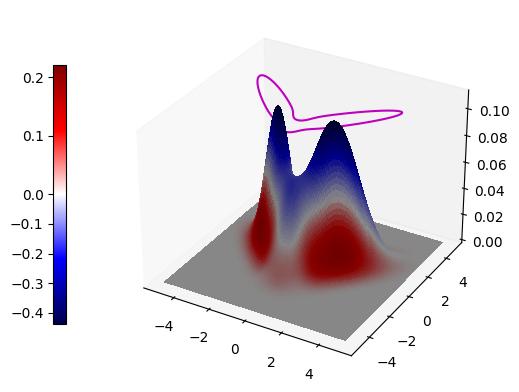}
		\caption{Mean curvature bump}
		\label{fig:mixture-bump-mean-curvature}
	\end{subfigure}
	\begin{subfigure}[b]{\subfigurewidthtwoandone}
		\centering
		\vspace{3 ex}
		\includegraphics[width=\graphicswidthtwoandone]{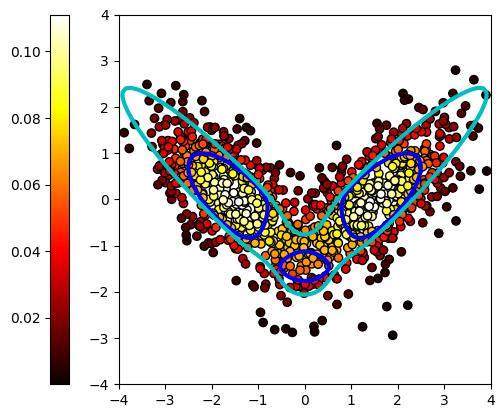}
		\caption{Both kinds of bumps}
		\label{fig:mixture-bump-scatter}
	\end{subfigure}
	\caption{
	Curvature bumps for a bivariate Gaussian mixture encompassing two equally-weighted components with means $\mean_1 = [-3/2, 0]$, $\mean_2 = [3/2, 0]$ and covariance matrices $\covariancematrix_1 = [1, -0.7; -0.7, 1]$, $\covariancematrix_2 = [1, 0.7; 0.7, 1]$.
	The top two sub-figures show the same graph of the \gls*{pdf} $\pdf$.
	The area colours refer to the values taken by a specific curvature functional $\curvaturefunctionalof{\pdf}$ at each point.
	For the left-hand picture, this function is the $\largestpdfeigenvalue$ that defines concave bumps~\eqref{eq:concave-bump}; on the right, it is the mean curvature $\divergenceof{\normalizedgradof{\pdf}}$ in~\eqref{eq:mean-curvature-bump}.
	The magenta \textit{halos} represent the zero level sets of those functionals and, thus, the corresponding bump boundaries.
	Concave and mean curvature bump boundaries show in blue and cyan in the bottom sub-figure, along with a 1,000-observation random sample from the mixture, where each point is coloured according to the value of $\pdf$.
	}
	\label{fig:mixture-bumps}
\end{figure}

\subsection{Concavity and convexity}

Given a sufficiently smooth \gls*{pdf} $\pdf$, let us define $\eigenvalueofpdf{\genindexi}$, for $\genindexi \in \{1, 2, \dots, \dimension\}$, as the function mapping $\x \in \rd$ to the $\genindexi$-th largest possibly repeated eigenvalue of $\hessian{\pdf}(\x)$, the Hessian matrix of $\pdf$ at $\x$, i.e.,
\begin{equation}
	\label{eq:ordered-eigenvalues}
	\eigenvalueofpdfx{1} \geq \eigenvalueofpdfx{2} \geq \dots \geq \eigenvalueofpdfx{\dimension}
	\,,
\end{equation}
for all $\x \in \rd$.
As mentioned in the \supplement{}, the eigenvalues of the Hessian (or the shape operator, equivalently) determine local concavity and convexity.
Let us assume that $\curvaturesign \eigenvalueofpdf{\genindexi} > 0$, for all $\genindexi$ on some subset $\openset \subset \rd$.
If $\curvaturesignselector = 0$, $\pdf$ will be locally convex, whereas if $\curvaturesignselector = 1$, it will be locally concave on $\openset$.
Considering the ordering of functions~\eqref{eq:ordered-eigenvalues}, we can express the former concave and convex bumps in terms of a single functional, aligned with a specific sign $\curvaturesignselector$, as, respectively,
\begin{multicols}{2}
	\noindent
	\begin{equation}
		\label{eq:concave-bump}
		\concavebump
		=
		\rdzerolowerlevelset{\largestpdfeigenvalue}
		\,,
	\end{equation}
	\columnbreak
	\begin{equation}
		\label{eq:convex-bump}
		\convexbump
		=
		\rdzeroupperlevelset{\smallestpdfeigenvalue}
		\,.
	\end{equation}
\end{multicols}
\vspace*{-\multicolsep}

The concave region~\eqref{eq:concave-bump} yields the most recognizable flavour of bumps in the literature, this time in a multivariate setting.
It is the method depicted in \figurename~\ref{fig:concave-bump}.
As for~\eqref{eq:convex-bump}, they are actually not bumps but \textit{dips}.
Assuming non-degenerate Hessians, concave bumps typically delineate areas near local \gls*{pdf} modes, while convex dips do with local minima.
Consequently, the former and the latter are known as \textit{peaks} and \textit{holes}~\cite[Table 1]{Godtliebsen2002}.

When concave bumps contain local modes, they make the most natural definition of a $\dimension$-dimensional neighbourhood.
Although straightforward, considering modal regions as $\fatteningexcess$-fattenings or enlargements (see Section~\ref{sec:notational-preliminaries} below) poses challenges regarding the choice of $\fatteningexcess > 0$, as similarly argued for $\hyndalpha$ in \glspl*{hdr}.
Besides, employing a single radius $\fatteningexcess$ limits the overall expressiveness of the bump.
On the other hand, if we saw modal regions as \textit{basins of attraction} instead~\cite{Chacon2015}, despite $\fatteningexcess$ disappearing and attaining more flexibility, we would not be pursuing a solution to a \gls*{bh} problem any more but a clustering one, giving up on the cohesive sense of bumps.
In this respect, concave bumps provide us with an elegant compromise answer.

Moreover, this modal vicinity notion seamlessly incorporates the missing mode scenario.
Concave bumps point out incipient modal regions as the central \textit{mouth} in \figurename~\ref{fig:mixture-bump-concave} and \figurename~\ref{fig:mixture-bump-scatter}, which does not contain a mode.
Such \textit{weak} modal regions are well-known in the context of univariate mode hunting as \textit{shoulders}, representing complicated cases~\cite{Cheng1999}.
As for \gls*{bh}, $\dimension$-dimensional shoulders deserve attention as evidence of hidden structure.
See the \acrshort*{nfl} application in the \supplement{} for an interpretable dynamic shoulder.
In turn, the mouth in \figurename~\ref{fig:mixture-bumps} is characteristic of mixtures whose components influence each other significantly.
All in all, concave bumps subsume the modal regions, having a slightly broader reach.

\subsection{Gradient divergence}

Concave bumps may be too restrictive in some use cases.
Imagine the \gls*{pdf} graph as a \textit{landscape}, with mountains being local high-density regions.
Concave bumps originate near mountain \textit{peaks}, missing most of the \textit{hillside}.
Mean curvature allows the discovery of entire mountain chains.

The shape operator is a linear map of the tangent space that measures how a manifold bends in different directions (see the \supplement{} for a formal definition).
Let us consider its eigenvalues: the principal curvatures.
Concavity requires all principal curvatures to be negative.
By contrast, the mean curvature adds them all so that only the net sign matters.
Computing curvature in this way fills the gaps between concave peaks in a \textit{long ridge}~\cite[Table 1]{Godtliebsen2002}, as depicted in \figurename~\ref{fig:mixture-bump-mean-curvature} and \figurename~\ref{fig:mixture-bump-scatter} in the form of a \textit{boomerang}.

The \supplement{} shows the connection between the mean curvature and divergence of the normalized version of the gradient $\normalizedgraddefinition$.
The divergence operator takes positive values when the argument field \textit{diverges} from a point, whereas the sign is negative when it \textit{converges}.
Therefore, we define the mean curvature bump as
\begin{equation}
	\label{eq:mean-curvature-bump}
	\meancurvaturebump
	=
	\rdzerolowerlevelset{\divergenceof{\normalizedgradof{\pdf}}}
	\,.
\end{equation}

When the gradient is slight, as is usually the case for \glspl*{pdf} (one can even tweak the scale of the random variables to make $\norm{\gradf}$ small), the Laplacian $\laplacianof{\pdf} = \divergenceof{\grad{\pdf}} = \sumitod \diffoperator^2 \pdf / \partialx{\genindexi}^2$ roughly approximates the mean curvature (see~\cite[Equation 5.28]{Folland2002}).
Hence, we define the Laplacian bump as
\begin{equation}
	\label{eq:laplacian-bump}
	\laplacianbump
	=
	\rdzerolowerlevelset{\laplacianof{\pdf}}
	\,.
\end{equation}
Note that $\concavebump \subset \laplacianbump$.
Even though~\eqref{eq:laplacian-bump} may be less intrinsic than~\eqref{eq:mean-curvature-bump}, it has a more straightforward form, for $\laplacian$ is a second-order linear differential operator on $\pdf$.
A discretized version of the Laplacian operator has been used for contour detection in image processing through the \textit{Laplacian-of-Gaussian} algorithm~\cite{Haralick1992}.
We have already seen an example of a Laplacian bump in \figurename~\ref{fig:laplacian-bump}.
The results would have been almost indistinguishable if the mean curvature had been employed.

The term \textit{ridge} was used above to convey a mountain range covering several peaks following~\cite{Godtliebsen2002}.
Ridges also refer in the statistical literature to a specific definition of higher-dimensional \gls*{pdf} modes~\cite{Chen2015}.
This concept of ridge shares with Laplacian and mean curvature bumps the ability to unveil filament-like structures.
However, ridges are intrinsically one-dimensional in their most typical form.
For them to extend to $\rd$, one would need to take an $\fatteningexcess$-enlargement, introducing some arbitrariness and rigidity with $\fatteningexcess$ that gradient divergence bumps do not have.
In our context, we will stick to the informal meaning of \textit{ridge} in the following sections.

\subsection{Intrinsic curvature}

The Gaussian curvature is an intrinsic measure derived from the shape operator (see the \supplement{} for a precise definition).
This and the Hessian determinant provide alternative ways to detect warps.
The analysis of these two notions is more subtle than in the previous sections: from the definition of Gaussian curvature in the \supplement{}, many sign combinations among the multiplied principal curvatures produce the same net sign.

The Gaussian curvature and the Hessian determinant differ by a positive factor; thus, if we set the bump detection threshold at zero, we can restrict our analysis to the latter.
In the bivariate case, the bump
\begin{equation}
	\label{eq:hessian-determinant-bump}
	\hessiandeterminantbump
	=
	\zeroupperlevelset{\rtwo}{\determinant{\hessian{\pdf}}}
\end{equation}
coincides with the union of~\eqref{eq:concave-bump} and~\eqref{eq:convex-bump}.
Therefore,~\eqref{eq:hessian-determinant-bump} is helpful for detecting both concave bumps and convex dips simultaneously.
We will refer to~\eqref{eq:hessian-determinant-bump} as a \textit{Gaussian bump}.

		\section{Asymptotics}

\label{sec:asymptotics}

This section will demonstrate the soundness of plug-in estimators in the asymptotic regime for curvature bumps.

\subsection{Consistency}

\label{sec:consistency}

We rely on a recent result by~\citeauthor{Chen2022} to prove consistency~\cite{Chen2022}.
Let
\begin{multicols}{2}
	\noindent
	\begin{equation}
		\label{eq:solution-manifold}
		\solutionmanifold
		=
		\zerovectorlevelset{\criterionfunction}
		\,,
	\end{equation}
	\columnbreak
	\begin{equation}
		\label{eq:approx-solution-manifold}
		\approxsolutionmanifold
		=
		\zerovectorlevelset{\approxcriterionfunction}
	\end{equation}
\end{multicols}
\vspace*{-\multicolsep}
be two solution manifolds defined by their criterion functions $\functiondef{\criterionfunction, \approxcriterionfunction}{\rd}{\reals}$, respectively.
\citeauthor{Chen2022}'s stability theorem shows that $\solutionmanifold$ and $\approxsolutionmanifold$ are near whenever the criterion functions and their derivatives are close.
In our context, $\criterionfunction$ will represent a curvature measure and $\approxcriterionfunction$ the corresponding kernel plug-in estimator so that $\solutionmanifold$ and $\approxsolutionmanifold$ are the boundaries of the associated curvature bumps.

\subsubsection{Notational preliminaries}

\label{sec:notational-preliminaries}

The theory of convergence in the uniform norm for \gls*{kdde} allows applying~\citeauthor{Chen2022}'s stability theorem to the curvature \gls*{bh} problem.

Vectors of nonnegative integers $\derivativevector = (\derivativevectorcomponent_1, \dots, \derivativevectorcomponent_{\dimension}) \in \positiveintegerspowd$ shall represent partial derivatives through $\partialpdf = \partial^{\derivativevectororder}\pdf / \partialxvectorindex{1} \cdots \partialxvectorindex{\dimension}$, where $\derivativevectororder = \sumitod \derivativevectorcomponent_{\genindexi}$.
Let us call $\positiveintegerspowdord{\derivativeorder} = \{\derivativevector \in \positiveintegerspowd : \derivativevectororder \leq \derivativeorder\}$.
We also include the case $\derivativevector = \zerovector$, which represents the identity.
Let us also define, for any derivative index vectors $\indexedderivativevector{1}, \dots, \indexedderivativevector{\numderivatives} \in \positiveintegerspowd$, the function $\functiondef{\multivariatepartialpdf{\pdf}}{\rd}{\reals^{\numderivatives}}$ as $\multivariatepartialpdf{\pdf}(\x) = (\indexeddiffoperatorindex{1}, \dots, \indexeddiffoperatorindex{\numderivatives})$.

We will denote $\classcontinous{\difforder}{\seta}$ the class of functions $\functiondef{\testfunction}{\seta \subset \rd}{\reals}$ with continuous partial derivatives up to $\difforder$-th order.
Likewise, we will say that a function $\functiondef{\testfunction}{\rd}{\reals}$ is Hölder continuous with exponent $\holderexponent \in (0, 1]$ if there exists $\constant \in \openzeroinfinterval$ such that $\absoluteval{\testfunction(\x) - \testfunction(\y)} \leq \constant \norm{\x - \y}^{\holderexponent}$, for all $\x, \y \in \rd$~\cite{Jiang2017}.
By convention, we include the case $\holderexponent = 0$ when Hölder continuity does not hold for any positive exponent.

For any $\functiondef{\testfunction}{\rd}{\reals}$ and some $\seta \subset \rd$, we denote $\norminf{\testfunction} = \sup_{\x \in \seta}{\smallabsval{\testfunction(\x)}}$, and we will indicate that the supremum is over $\seta$ by explicitly stating that  $\norminf{\testfunction}$ satisfies some property \textit{on} $\seta$.
Also, write $\norminfk{\testfunction}{\derivativeorder} = \max \left\{ \norminf{\indexeddiffoperator\testfunction} : \derivativevector \in \positiveintegerspowd, \derivativevectororder = \derivativeorder \right\}$.
All these norms will formalize how close the criterion functions and their respective derivatives are.

On the other hand, the stability theorem invokes some other concepts related to sets.
Let us define the distance from a point $x \in \rd$ to some subset $\seta \subset \rd$ as $\distancepointset{\x}{\seta} = \inf_{\y \in \seta} \norm{\x - \y}$, and the $\fatteningexcess$-fattening of a set $\seta \subset \rd$, where $\fatteningexcess > 0$, as $\fattening{\seta}{\fatteningexcess} = \{ \x \in \rd : \distancepointset{\x}{\seta} \leq \fatteningexcess \}$.
Finally, the \textit{Hausdorff distance} between two subsets $\seta, \setb \subset \rd$ is $\hausdorffdistance{\seta}{\setb} = \max \left\{ \directeddistance{\setb}{\seta}, \directeddistance{\seta}{\setb} \right\}$.

The problem of uniformly bounding the \gls*{kdde} error refers to finding an infinitesimal bound for $\totalsuperror$.
Note that the latter is bounded by the \textit{bias} $\biassuperror$ plus the \textit{stochastic error} $\supnormderivativekde$.
We will analyze both terms separately.

\subsubsection{Bias analysis}

Lemma~\ref{lm:bias-order} is an extended version of~\cite[Lemma 2]{AriasCastro2016} with alternative hypotheses to ensure consistency under less stringent differentiability assumptions.
Namely, we resort to Hölder and uniform continuity, following the example of~\cite{Jiang2017} and~\cite[Theorem 1.1, p. 42]{Nadaraya1989}.

\begin{lemma}
	\label{lm:bias-order}
	Let $\derivativevector \in \positiveintegerspowd$ be a partial derivative index vector.
	Let $\pdf$ be a \gls*{pdf} in $\classcontinous{\derivativevectororder + \extrasmoothing}{\rd}$, for some $\extrasmoothing \in \extendedpositiveintegers$, with all partial derivatives bounded up to $(\derivativevectororder + \extrasmoothing)$-th order.
	Assume that $\partialpdf$ is Hölder continuous on $\rd$ with exponent $\holderexponent \in [0, 1]$.
	If the exponent is $\holderexponent = 0$, then ultimately assume that $\partialpdf$ is uniformly continuous.
	Finally, let $\kdefnh$ be the \gls*{kde} of $\pdf$ based on a true \gls*{pdf} kernel $\kernel$ vanishing at infinity and satisfying the moment constraints
	\begin{multicols}{2}
		\noindent
		\begin{equation*}
			\integralonrddx{\x \ \kernelx}
			=
			\zerovector
			\,,
		\end{equation*}
		\columnbreak
		\begin{equation*}
			\integralonrddx{\absoluteval{\x_{\genindexi} \x_{\genindexj}} \ \kernelx}
			<
			\infty
			\,,
		\end{equation*}
	\end{multicols}
	\vspace*{-\multicolsep}
	for all $\genindexi, \genindexj \in \{1, \dots, \dimension\}$.
	Then,
	\begin{equation*}
		\biassuperror
		=
		\begin{cases}
			\bigoh(\bandwidth^{\minmaxexponent}),                  & \mathrm{if} \ \max\{\extrasmoothing, \holderexponent\} > 0 \\
			\littleohone \ \mathrm{as} \ \bandwidth \rightarrow 0, & \mathrm{otherwise}
		\end{cases}
		\,,
	\end{equation*}
	where $\minmaxexponent = \max \{\holderexponent, \minbetweenextrasmoothingandtwo \}$.
\end{lemma}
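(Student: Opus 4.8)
The plan is to reduce the bias to a single convolution integral and then bound it by a case-dependent Taylor argument keyed to the pair $(\extrasmoothing, \holderexponent)$. First I would use the \gls*{iid} assumption and~\eqref{eq:kernel-estimator} to write the mean estimator as a convolution, $\smallexpectedvalueof{\kdefnhx} = (\kernelh * \pdf)(\x)$. Since $\pdf \in \classcontinous{\derivativevectororder + \extrasmoothing}{\rd}$ has bounded derivatives and $\kernelh$ is integrable, differentiation under the integral sign transfers the operator onto $\pdf$, giving $\smallexpectedvalueof{\partialkdefnhx} = (\kernelh * \partialpdf)(\x)$ without any smoothness requirement on $\kernel$. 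After the substitution $\z = (\x - \y)/\bandwidth$ and using that $\kernel$ is a genuine \gls*{pdf}, so that $\integralonrd \kernel(\z)\,d\z = 1$, the bias takes the form
\begin{equation*}
	\bias = \integralonrd \kernel(\z)\left[\biasintegrand\right]\,d\z\,,
\end{equation*}
which already absorbs the zeroth-order term and holds uniformly in $\x$.

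Next I would split into cases according to how much smoothness is available. When $\extrasmoothing \geq 2$, I expand $\partialpdf$ around $\x$ to second order inside the bracket: the first-order term vanishes after integration thanks to the first moment condition $\integralonrd \z\,\kernel(\z)\,d\z = \zerovector$, while the quadratic remainder is dominated by the bounded partial derivatives of order $\derivativevectororder + 2$ together with the second moment condition $\integralonrd \absoluteval{\z_{\genindexi}\z_{\genindexj}}\,\kernel(\z)\,d\z < \infty$. This gives $\biassuperror = \bigoh(\bandwidth^{2})$, matching $\minmaxexponent = 2 = \minbetweenextrasmoothingandtwo$ because $\holderexponent \leq 1$. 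When $\extrasmoothing = 1$, a mean value expansion yields $\biasintegrand = -\bandwidth\,\transpose{\z}\,\grad{\partialpdf}(\x - \theta\bandwidth\z)$ for some $\theta \in (0,1)$; bounding $\grad{\partialpdf}$ by the assumed bound on the order-$(\derivativevectororder + 1)$ derivatives and integrating $\norm{\z}$ against $\kernel$ (finite by the second moment condition) gives $\biassuperror = \bigoh(\bandwidth)$, i.e.\ $\minmaxexponent = 1$.

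The remaining case $\extrasmoothing = 0$ is where only continuity of $\partialpdf$ is available, and it is the genuinely delicate part. If $\holderexponent > 0$, Hölder continuity bounds the bracket by $\constant\,\bandwidth^{\holderexponent}\norm{\z}^{\holderexponent}$, and since $\holderexponent \leq 1 \leq 2$ the factor $\norm{\z}^{\holderexponent}$ is $\kernel$-integrable by the second moment condition, so $\biassuperror = \bigoh(\bandwidth^{\holderexponent})$ with $\minmaxexponent = \holderexponent$. If instead $\holderexponent = 0$, no rate is available and the obstacle is to extract a uniform-in-$\x$ vanishing bound from mere uniform continuity. I would fix $\fatteningexcess > 0$, take $\fatteningconstant > 0$ from the modulus of continuity, and split the integral at $\norm{\z} = \fatteningconstant/\bandwidth$: on the inner region uniform continuity bounds the bracket by $\fatteningexcess$ uniformly in $\x$, contributing at most $\fatteningexcess$, whereas on the outer region the boundedness of $\partialpdf$ and the vanishing kernel tail $\int_{\norm{\z} \geq \fatteningconstant / \bandwidth} \kernel(\z)\,d\z \to 0$ as $\bandwidth \to 0$ handle the remainder. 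Sending $\bandwidth \to 0$ and then $\fatteningexcess \to 0$ yields $\biassuperror = \littleohone$, which is precisely the $\max\{\extrasmoothing, \holderexponent\} = 0$ branch of the statement.
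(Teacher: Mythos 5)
Your proposal is correct and follows essentially the same route as the paper's proof: reduce the bias to the convolution-form integral $\integralonrd [\biasintegrand]\,\kernel(\z)\,d\z$, then argue by cases on $(\extrasmoothing, \holderexponent)$ --- second-order Taylor expansion with the zero-mean condition killing the linear term for $\extrasmoothing \geq 2$, a first-order bound for $\extrasmoothing = 1$, H\"older continuity with a Jensen/Cauchy--Schwarz moment bound for $\extrasmoothing = 0$, $\holderexponent > 0$, and the classical uniform-continuity plus kernel-tail splitting for the $\littleohone$ case. The only deviations are cosmetic: the paper transfers $\indexeddiffoperator$ onto $\pdf$ by integration by parts (using the kernel vanishing at infinity) rather than the convolution differentiation identity, and uses the integral form of the Taylor remainder where you invoke the mean value theorem.
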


\subsubsection{Stochastic error analysis}

Lemma~\ref{lm:supnorm-bigoh} below appears as an auxiliary result in~\cite{AriasCastro2016} in the case $\maxdifforder = 3$, but the proof works for an arbitrary $\maxdifforder$.

\begin{lemma}[\citeauthor{AriasCastro2016},~\citeyear{AriasCastro2016}~\cite{AriasCastro2016}]
	\label{lm:supnorm-bigoh}
	Let $\pdf$ be a bounded \gls*{pdf} in $\rd$ and let $\kdefnh$ be the \gls*{kde} of $\pdf$.
	Fix a nonnegative integer $\maxdifforder$ as the maximum partial derivative order.
	Assume that $\kernel$ is a product kernel of the form $\kernel(\pdfvarx_1, \dots, \pdfvarx_{\dimension}) = \prod_{\genindexi = 1}^{\dimension} \univariatekernel_{\genindexi}(\pdfvarx_{\genindexi})$, where each $\univariatekernel_{\genindexi}$ is a univariate \gls*{pdf} of class $\classcontinous{\maxdifforder}{\reals}$.
	Further, assume that all the partial derivatives up to $\maxdifforder$-th order of $\kernel$ are of bounded variation and integrable on $\rd$.
	Then, there exists $\boundariascastro \in (0, 1)$ such that, if $\bandwidth \equiv \bandwidth_{\samplesize}$ is a sequence satisfying $\log \samplesize \leq \samplesize\bandwidth^{\dimension} \leq \boundariascastro\samplesize$, then
	\begin{equation*}
		\supnormderivativekde
		=
		\bigoh\left(\sqrtboundsupnormderivativekde\right)
		\,,
	\end{equation*}
	\gls*{as} for all $\derivativevector \in \positiveintegerspowdord{\maxdifforder}$.
\end{lemma}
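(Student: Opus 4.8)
The plan is to view $\supnormderivativekde$ as the supremum of a centered empirical process indexed by the evaluation point $\x$ and then to apply a Talagrand-type concentration inequality for uniformly bounded VC-type classes, following the template of Gin\'e--Guillou and Einmahl--Mason. Since each differentiation in $\x$ contributes a factor $\bandwidth^{-1}$, for a fixed multi-index $\derivativevector$ with $\derivativevectororder \le \maxdifforder$ I would write
\[
\partialkdefnhx - \smallexpectedvalueof{\partialkdefnhx}
=
\frac{1}{\samplesize \bandwidth^{\dimension + \derivativevectororder}}
\sumitosamplesize
\bigl\{
g_{\bandwidth, \x}(\samplevar_{\genindexi}) - \smallexpectedvalueof{g_{\bandwidth, \x}(\samplevar_{\genindexi})}
\bigr\},
\qquad
g_{\bandwidth, \x}(\y) = \indexeddiffoperator\kernel\!\left( \frac{\x - \y}{\bandwidth} \right),
\]
so that the supremum over $\x \in \rd$ equals, up to the deterministic normalization $(\samplesize \bandwidth^{\dimension + \derivativevectororder})^{-1}$, the supremum of the empirical process $\measurableclassmember \mapsto \sumitosamplesize \{ \measurableclassmember(\samplevar_{\genindexi}) - \smallexpectedvalueof{\measurableclassmember(\samplevar_{\genindexi})} \}$ over the class $\vcclassh = \{ g_{\bandwidth, \x} : \x \in \rd \}$.

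Next I would bound the complexity and the moments of $\vcclassh$. The product structure gives $\indexeddiffoperator\kernel(\pdfvarx_1, \dots, \pdfvarx_{\dimension}) = \prod_{\genindexi=1}^{\dimension} \univariatekernel_{\genindexi}^{(\derivativevectorcomponent_{\genindexi})}(\pdfvarx_{\genindexi})$, so each coordinate factor is a univariate function of bounded variation by hypothesis, and each is bounded (a continuous, integrable function of bounded variation on $\reals$ is bounded and vanishes at infinity). The key structural step is the entropy bound: translates-and-dilates of a uniformly bounded function of bounded variation generate a VC-type class, and finite products of VC-type classes are again VC-type; this yields constants $\coveringconstanta, \coveringconstantnu$ with $\coveringnumber(\vcclassh, \coveringepsilon) \le (\coveringconstanta/\coveringepsilon)^{\coveringconstantnu}$ uniformly in $\bandwidth$, for the finite envelope $U = \norminf{\indexeddiffoperator\kernel}$. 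For the moments, the change of variables $\z = (\x - \y)/\bandwidth$ together with boundedness of $\pdf$ gives
\[
\sup_{\x \in \rd} \smallexpectedvalueof{ g_{\bandwidth, \x}(\samplevar_1)^2 }
\le
\norminf{\pdf} \, \bandwidth^{\dimension} \integralonrd \bigl( \indexeddiffoperator\kernel(\z) \bigr)^2 \, d\z
=
\bigoh(\bandwidth^{\dimension}),
\]
so the class admits the variance bound $\processstdv^2 = \bigoh(\bandwidth^{\dimension})$.

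With a VC-type class of envelope $U = \bigoh(1)$ and variance $\processstdv^2 = \bigoh(\bandwidth^{\dimension})$, the standard entropy-integral bound makes the expected supremum of the empirical process of order $\sqrt{\samplesize \processstdv^2 \log(U/\processstdv)}$; using $\log(U/\processstdv) = \bigoh(\log(1/\bandwidth^{\dimension})) = \bigoh(\log \samplesize)$ (the last equality by the bandwidth lower bound), this is $\bigoh(\sqrt{\samplesize \bandwidth^{\dimension} \log \samplesize})$, and the competing linear-envelope term $U \log(U/\processstdv) = \bigoh(\log \samplesize)$ is dominated precisely when $\samplesize \bandwidth^{\dimension} \ge \log \samplesize$. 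Dividing by $\samplesize \bandwidth^{\dimension + \derivativevectororder}$ returns exactly the claimed rate $\sqrtboundsupnormderivativekde$. To upgrade this to an almost-sure statement I would plug the exponential tail of Talagrand's inequality, with deviation level of the same order $\sqrt{\samplesize \bandwidth^{\dimension} \log \samplesize}$, into a Borel--Cantelli argument: the Gaussian-type tail becomes $\samplesize^{-c}$ with $c$ as large as desired by inflating the constant, and the lower bound $\log \samplesize \le \samplesize \bandwidth^{\dimension}$ is exactly what guarantees $\samplesize \processstdv^2 \gtrsim \log \samplesize$ and hence summability in $\samplesize$; the upper bound $\samplesize \bandwidth^{\dimension} \le \boundariascastro \samplesize$ keeps $\bandwidth$ below one so that $\processstdv \le U$ and the entropy logarithm stays positive. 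Since there are only finitely many multi-indices with $\derivativevectororder \le \maxdifforder$, the individual almost-sure bounds combine into the stated uniform conclusion.

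The step I expect to be the main obstacle is the entropy bound: making rigorous, in the multivariate product setting, that the translated-and-rescaled kernel derivatives $\vcclassh$ form a VC-type class with covering numbers independent of $\bandwidth$, and carefully tracking how the VC characteristics combine across the $\dimension$ coordinate factors and across the orders $\derivativevectorcomponent_{\genindexi}$. Once $\vcclassh$ is certified VC-type with the variance control $\processstdv^2 = \bigoh(\bandwidth^{\dimension})$, the remaining concentration and Borel--Cantelli steps are the standard Einmahl--Mason machinery, and the admissible bandwidth window $\log \samplesize \le \samplesize \bandwidth^{\dimension} \le \boundariascastro \samplesize$ emerges naturally as the condition balancing the stochastic term against the envelope contribution while securing summable tails.
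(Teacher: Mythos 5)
Your proposal is correct and follows essentially the same route as the paper's source: the paper gives no proof of Lemma~\ref{lm:supnorm-bigoh} itself, deferring to~\cite{AriasCastro2016} (noting only that the argument for $\maxdifforder = 3$ extends to arbitrary $\maxdifforder$), and that cited proof is precisely the Gin\'e--Guillou/Einmahl--Mason machinery you reconstruct --- rescaled kernel derivatives as a uniformly bounded VC-type class via the bounded-variation and product structure, the variance bound $\processstdv^2 = \bigoh(\bandwidth^{\dimension})$, Talagrand-type concentration, and Borel--Cantelli, with the window $\log \samplesize \leq \samplesize\bandwidth^{\dimension} \leq \boundariascastro\samplesize$ playing exactly the balancing role you describe. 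The step you flag as the main obstacle (the $\bandwidth$-uniform entropy bound) is handled there just as you sketch, via the Nolan--Pollard-type result for translates and dilates of a univariate function of bounded variation combined with closure of VC-type classes under products.
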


Finally, note that Lemma~\ref{lm:supnorm-bigoh} also holds for a sufficiently small but constant $\bandwidth$.

\subsubsection{Total error analysis}

Combining Lemma~\ref{lm:bias-order} and Lemma~\ref{lm:supnorm-bigoh}, we obtain a general consistency result in the supremum norm for \gls*{kdde}.
We will focus on the Gaussian kernel for simplicity, but any other satisfying the conditions in both Lemma~\ref{lm:bias-order} and Lemma~\ref{lm:supnorm-bigoh} would do.

\begin{theorem}
	\label{th:derivative-convergence}
	Let $\derivativevector \in \positiveintegerspowd$ be a partial derivative index vector.
	Let $\pdf$ be a \gls*{pdf} in $\classcontinous{\derivativevectororder + \extrasmoothing}{\rd}$, for some $\extrasmoothing \in \extendedpositiveintegers$, with all partial derivatives bounded up to $(\derivativevectororder + \extrasmoothing)$-th order.
	Assume that $\partialpdf$ is Hölder continuous on $\rd$ with exponent $\holderexponent \in [0, 1]$.
	If the exponent is $\holderexponent = 0$, then ultimately assume that $\partialpdf$ is uniformly continuous.
	Let $\kdefnh$ be the \gls*{kde} of $\pdf$ based on the Gaussian kernel.
	Finally, let $\bandwidth \equiv \bandwidth_{\samplesize}$ be a sequence converging to zero as $\samplesizegoestoinfty$ and satisfying $\samplesize\bandwidth^{\dimension} \geq \log \samplesize$.
	Then,
	\begin{equation*}
		\totalsuperror
		=
		\begin{cases}
			\bigoh \left( \bandwidth^{\minmaxexponent} + \sqrtboundsupnormderivativekde \right),
			\
			 & \mathrm{if} \ \max\{\extrasmoothing, \holderexponent\} > 0
			\\
			\littleohone + \bigoh \left( \sqrtboundsupnormderivativekde \right),
			\
			 & \mathrm{otherwise}
		\end{cases}
		\,,
	\end{equation*}
	\gls*{as} as $\samplesizegoestoinfty$, where $\minmaxexponent = \max \{\holderexponent, \minbetweenextrasmoothingandtwo \}$.
	In particular,
	\begin{equation*}
		\totalsuperror \almostsureconverges 0 \ ,
		\ \mathrm{if} \
		\boundsupnormderivativekde \tendstoassamplesizegoestoinfty 0
		\,.
	\end{equation*}
\end{theorem}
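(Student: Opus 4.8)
The plan is to split the total error by the triangle inequality and then invoke the two preceding lemmas on the deterministic and stochastic pieces separately. Concretely, for every $\derivativevector$ and every $\x$, inserting and removing $\smallexpectedvalueof{\partialkdefnhx}$ yields
\begin{equation*}
	\totalsuperror
	\leq
	\biassuperror
	+
	\supnormderivativekde
	\,,
\end{equation*}
so it suffices to control each summand on the right, exactly the bias and the stochastic error flagged before the lemmas.

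First I would bound the bias term using Lemma~\ref{lm:bias-order}. This requires verifying that the Gaussian kernel meets that lemma's hypotheses: it is a genuine $\dimension$-variate \gls*{pdf} vanishing at infinity, it is symmetric about the origin so that $\integralonrddx{\x \ \kernelx} = \zerovector$, and every mixed second moment $\integralonrddx{\absoluteval{\x_{\genindexi} \x_{\genindexj}} \ \kernelx}$ is finite because the Gaussian has finite moments of all orders. Under the stated smoothness together with the Hölder condition (or, when $\holderexponent = 0$, the uniform continuity) on $\partialpdf$, Lemma~\ref{lm:bias-order} then gives $\biassuperror = \bigoh(\bandwidth^{\minmaxexponent})$ when $\max\{\extrasmoothing, \holderexponent\} > 0$ and $\biassuperror = \littleohone$ as $\bandwidth \rightarrow 0$ otherwise, with $\minmaxexponent = \max\{\holderexponent, \minbetweenextrasmoothingandtwo\}$.

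Next I would bound the stochastic term with Lemma~\ref{lm:supnorm-bigoh}. The Gaussian kernel again qualifies: it factorizes into univariate Gaussian densities, each of class $\classcontinous{\infty}{\reals}$, and every partial derivative is a Hermite polynomial times a Gaussian, hence integrable and of bounded variation on $\rd$. I must also reconcile the bandwidth conditions: Lemma~\ref{lm:supnorm-bigoh} asks for $\log \samplesize \leq \samplesize\bandwidth^{\dimension} \leq \boundariascastro\samplesize$, whereas here we assume only $\samplesize\bandwidth^{\dimension} \geq \log \samplesize$ and $\bandwidth \rightarrow 0$; the upper sandwich bound holds automatically for large $\samplesize$, since $\bandwidth \rightarrow 0$ forces $\bandwidth^{\dimension} < \boundariascastro$ eventually. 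Lemma~\ref{lm:supnorm-bigoh} then delivers $\supnormderivativekde = \bigoh(\sqrtboundsupnormderivativekde)$ \gls*{as}.

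Summing the two bounds produces the stated dichotomy. For the consistency corollary I would observe that $\bandwidth \rightarrow 0$ kills the bias (either $\bandwidth^{\minmaxexponent} \rightarrow 0$ or the $\littleohone$ term), while the hypothesis $\boundsupnormderivativekde \rightarrow 0$ forces $\sqrtboundsupnormderivativekde \rightarrow 0$, so the stochastic term vanishes \gls*{as}; together these give $\totalsuperror \almostsureconverges 0$. There is no single deep obstacle, since the heavy analytic work is already encapsulated in the two lemmas: the only real care lies in checking that the Gaussian kernel simultaneously satisfies both lemmas' differing regularity requirements and in matching the two bandwidth regimes, with the recombination of orders being routine.
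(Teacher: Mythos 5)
Your proposal is correct and takes essentially the same route as the paper, which obtains Theorem~\ref{th:derivative-convergence} precisely by combining Lemma~\ref{lm:bias-order} and Lemma~\ref{lm:supnorm-bigoh} through the bias/stochastic-error decomposition, with the Gaussian kernel checked against both lemmas' hypotheses. Your reconciliation of the bandwidth regimes --- noting that $\bandwidth \rightarrow 0$ makes the upper constraint $\samplesize\bandwidth^{\dimension} \leq \boundariascastro\samplesize$ hold automatically for large $\samplesize$ --- is exactly the one observation the combination requires.
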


\subsubsection{Manifold stability}

Theorem~\ref{th:stability-theorem} gathers the essential elements of~\citeauthor{Chen2022}'s stability theorem needed in our context.

\begin{theorem}[\citeauthor{Chen2022},~\citeyear{Chen2022}~\cite{Chen2022}]
	\label{th:stability-theorem}
	Let $\functiondef{\criterionfunction, \approxcriterionfunction}{\rd}{\reals}$ and let $\solutionmanifold$ and $\approxsolutionmanifold$ be as defined in~\eqref{eq:solution-manifold} and~\eqref{eq:approx-solution-manifold}, respectively.
	Assume that:
	\begin{enumerate}[label=$\criterionhypotheses$\arabic*.]
		\item There exists $\fatteningconstant > 0$ such that $\criterionfunction$ has bounded first-order derivatives on $\fatmanifold$.
		\item There exists $\criteriongradientconstant > 0$ such that $\norm{\grad{\criterionfunction}(\x)} > \criteriongradientconstant$, for all $\x \in \fatmanifold$.
		\item $\norminfcriteriondiff$ is sufficiently small on $\rd$.
	\end{enumerate}
	Moreover, suppose that:
	\begin{enumerate}[label=$\approxcriterionhypotheses$\arabic*.]
		\item $\approxcriterionfunction$ has bounded first-order derivatives on $\fatmanifold$.
		\item $\norminfk{\approxcriterionfunction - \criterionfunction}{1}$ is sufficiently small on $\fatmanifold$.
	\end{enumerate}
	Then, $\hausdorffdistance{\approxsolutionmanifold}{\solutionmanifold} = \bigoh(\norminfcriteriondiff)$.
\end{theorem}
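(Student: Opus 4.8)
The plan is to control the two directed distances whose maximum is $\hausdorffdistance{\approxsolutionmanifold}{\solutionmanifold}$ separately, each through the same device: a bound on the \emph{value} of a criterion function converts into a bound on the \emph{distance} to its zero set as soon as the gradient is bounded away from zero. Write $\fatteningexcess = \norminfcriteriondiff$ for the global sup-norm gap, which is finite and small by A3. The device is the unit-speed descending flow $\gamma$ with $\dot\gamma = -\mathrm{sign}(\criterionfunction)\,\grad{\criterionfunction}/\norm{\grad{\criterionfunction}}$: along it $\frac{d}{dt}\smallabsval{\criterionfunction\comp\gamma} = -\norm{\grad{\criterionfunction}} \leq -\criteriongradientconstant$ wherever A2 is available, so starting from a point with $\smallabsval{\criterionfunction} \leq \fatteningexcess$ the flow reaches a zero of $\criterionfunction$ within arc length $\fatteningexcess/\criteriongradientconstant$, hence within Euclidean distance $\fatteningexcess/\criteriongradientconstant = \bigoh(\fatteningexcess)$.

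First I would treat $\directeddistance{\solutionmanifold}{\approxsolutionmanifold}$. Fix $\x_0 \in \solutionmanifold$; then $\smallabsval{\approxcriterionfunction(\x_0)} = \smallabsval{\approxcriterionfunction(\x_0) - \criterionfunction(\x_0)} \leq \fatteningexcess$. Condition B2 gives $\norm{\grad{\approxcriterionfunction} - \grad{\criterionfunction}} \leq \norminfk{\approxcriterionfunction - \criterionfunction}{1}$ small on $\fatmanifold$, so together with A2 one obtains $\norm{\grad{\approxcriterionfunction}} \geq \criteriongradientconstant/2$ there, and B1 supplies the regularity to run the flow for $\approxcriterionfunction$. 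Since $\x_0 \in \solutionmanifold$ and the flow travels at most $2\fatteningexcess/\criteriongradientconstant < \fatteningconstant$ for $\fatteningexcess$ small, every flow point stays within $\fatteningconstant$ of $\x_0$, hence inside $\fatmanifold$ where the bounds hold; it therefore reaches some $\tilde\x \in \approxsolutionmanifold$ with $\norm{\x_0 - \tilde\x} \leq 2\fatteningexcess/\criteriongradientconstant$. Taking the supremum over $\x_0$ yields $\directeddistance{\solutionmanifold}{\approxsolutionmanifold} = \bigoh(\fatteningexcess)$.

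The reverse direction $\directeddistance{\approxsolutionmanifold}{\solutionmanifold}$ runs the same flow for $\criterionfunction$ itself, since for $\tilde\x \in \approxsolutionmanifold$ we have $\smallabsval{\criterionfunction(\tilde\x)} \leq \fatteningexcess$ and A1, A2 then deliver a point of $\solutionmanifold$ within $\fatteningexcess/\criteriongradientconstant$; but this is legitimate only once I know $\tilde\x$ lies in $\fatmanifold$. Establishing the containment $\approxsolutionmanifold \subseteq \fatmanifold$ is the step I expect to be the main obstacle, because A1 and A2 describe $\criterionfunction$ only on $\fatmanifold$ and say nothing \emph{a priori} about stray zeros of $\approxcriterionfunction$. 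The mechanism is a growth estimate extracted from A2: any zero $\tilde\x$ of $\approxcriterionfunction$ forces $\smallabsval{\criterionfunction(\tilde\x)} \leq \fatteningexcess$ by the global bound A3, while running the descending flow from a point $\x \in \boundaryof{(\fatmanifold)}$ with $\distancepointset{\x}{\solutionmanifold} = \fatteningconstant$ back to $\solutionmanifold$, and using that its arc length is at least $\distancepointset{\x}{\solutionmanifold}$, gives $\smallabsval{\criterionfunction(\x)} \geq \criteriongradientconstant\,\fatteningconstant$ on the boundary; with the threshold $\fatteningexcess < \criteriongradientconstant\,\fatteningconstant$ built into ``sufficiently small'' in A3, a connectedness argument then confines the relevant zeros to the interior. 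The one subtlety hidden throughout is that each descending flow must stay in $\fatmanifold$ until it meets $\solutionmanifold$; this is not circular, since monotone decay of $\smallabsval{\criterionfunction\comp\gamma}$ together with the arc-length-versus-displacement inequality keeps the total displacement below $\fatteningexcess/\criteriongradientconstant < \fatteningconstant$, which certifies that the flow never escapes the region where A2 holds. Combining the two directed bounds yields $\hausdorffdistance{\approxsolutionmanifold}{\solutionmanifold} = \bigoh(\fatteningexcess) = \bigoh(\norminfcriteriondiff)$, as claimed.
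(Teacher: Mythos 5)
Your first directed bound is sound, and it is the standard device (normalized gradient descent) that~\citeauthor{Chen2022}'s argument itself rests on: starting from $\x_0 \in \solutionmanifold$ you have $\smallabsval{\approxcriterionfunction(\x_0)} \leq \fatteningexcess$ by A3, hypotheses A2 and B2 keep $\norm{\grad{\approxcriterionfunction}} \geq \criteriongradientconstant/2$ on $\fatmanifold$, and because the flow starts \emph{on} $\solutionmanifold$, the displacement bound $2\fatteningexcess/\criteriongradientconstant < \fatteningconstant$ genuinely confines it to $\fatmanifold$ until it meets a zero of $\approxcriterionfunction$. Note that the paper does not reprove this theorem at all — it imports it from~\cite{Chen2022} — and its remark immediately after the statement, that hypotheses $\criterionhypothesesinline$ imply $\approxsolutionmanifold \subset \fattening{\solutionmanifold}{\fatteningexcess} \subset \fatmanifold$, names exactly the containment you correctly identified as the crux. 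Your second directed bound is fine \emph{conditionally} on that containment.

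The gap is your mechanism for the containment itself. The claimed boundary estimate $\smallabsval{\criterionfunction(\x)} \geq \criteriongradientconstant\,\fatteningconstant$ for $\distancepointset{\x}{\solutionmanifold} = \fatteningconstant$ does not follow from A1--A2: the descending flow started at a boundary point may exit $\fatmanifold$ immediately (the descent direction can point outward), and off $\fatmanifold$ there is no lower bound on $\norm{\grad{\criterionfunction}}$, so the flow need never reach $\solutionmanifold$ and the arc-length accounting collapses. What the flow dichotomy actually yields for $\x \in \fatmanifold$ is $\min\{\distancepointset{\x}{\solutionmanifold}, \distancepointset{\x}{\boundaryof{(\fatmanifold)}}\} \leq \smallabsval{\criterionfunction(\x)}/\criteriongradientconstant$, which is vacuous at the boundary: a narrow valley in which $\criterionfunction$ dips to a value $\eta \ll \criteriongradientconstant\fatteningconstant$ at a spot on $\boundaryof{(\fatmanifold)}$, with walls steep enough that A2 holds throughout $\fatmanifold$ and with $\criterionfunction$ flattening at small positive values just outside, is consistent with A1--A2 and falsifies your bound; subtracting $\eta$ times a slowly varying cutoff then produces $\approxcriterionfunction$ satisfying A3 and B2 with $\fatteningexcess = \eta$ yet owning a zero at distance $\approx \fatteningconstant$ from $\solutionmanifold$. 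For the same reason, your "total displacement below $\fatteningexcess/\criteriongradientconstant < \fatteningconstant$" certification only works when the starting point is within $\fatteningconstant - \fatteningexcess/\criteriongradientconstant$ of $\solutionmanifold$ (as in your first direction), not for putative zeros near the edge of $\fatmanifold$; and zeros of $\approxcriterionfunction$ lying wholly outside $\fatmanifold$ escape your connectedness argument entirely, since A3 only gives $\smallabsval{\criterionfunction} \leq \fatteningexcess$ there and A1--A2 say nothing off $\fatmanifold$ — a real concern in the intended application, where $\curvaturefunctionalof{\pdf}$ vanishes at infinity and so takes arbitrarily small values far from $\solutionmanifold$. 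The containment therefore cannot be extracted from the hypotheses by your route: it requires reading the "sufficiently small" in A3 relative to the behaviour of $\criterionfunction$ off $\fatmanifold$ (e.g.\ $\fatteningexcess < \inf\{\smallabsval{\criterionfunction(\x)} : \x \notin \fattening{\solutionmanifold}{\fatteningexcess'}\}$ for some $\fatteningexcess' < \fatteningconstant$, automatic on a compact domain), or imposing the gradient condition on a small-value set $\{\smallabsval{\criterionfunction} \leq \delta_0\}$ rather than on a metric neighbourhood, in which case monotone decay of $\smallabsval{\criterionfunction \comp \gamma}$ keeps the flow where the condition holds and containment is immediate — which is the form in which the cited result is actually usable, and why the paper defers to~\cite{Chen2022} rather than proving the statement.
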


We have introduced in Theorem~\ref{th:stability-theorem} a slight relaxation on the differentiability constraint for $\approxcriterionfunction$.
\citeauthor{Chen2022} supposes differentiability and bounds on $\rd$, whereas we allow for a narrower domain $\fatmanifold$.
This deviation is justified since hypotheses $\criterionhypothesesinline$ imply $\approxsolutionmanifold \subset \fattening{\solutionmanifold}{\fatteningexcess} \subset \fatmanifold$, where $\fatteningexcess < \fatteningconstant$.
Since \glspl*{pdf} typically vanish at infinity, it might be unfeasible to ask $\approxcriterionfunction = \curvaturefunctionalof{\kdefnh}$ to be differentiable everywhere.
This is the case for the eigenvalues~\eqref{eq:ordered-eigenvalues} in Proposition~\ref{prop:distinct-eigenvalues} below, where condition~\eqref{eq:eigenvalue-separability} would not hold if the infimum were taken over $\rd$.

Finally, putting all the pieces together, we get the following main result.

\begin{theorem}
	\label{th:bump-boundary-convergence}
	Assume the following:
	\begin{itemize}[leftmargin=2em]
		\renewcommand{\labelitemi}{\tiny$\blacklozenge$}
		\item Let $\curvaturefunctional$ be a curvature functional defined over $\dimension$-variate \glspl*{pdf} depending on their partial derivatives up to $\maxdifforder$-th order.
		      More formally, given a \gls*{pdf} $\testpdf$, we have $\curvaturefunctionalof{\testpdf} = \underlyingcurvature \comp \multivariatepartialpdf{\testpdf}$, for some $\functiondef{\underlyingcurvature}{\reals^{\numderivatives}}{\reals}$ and derivative index vectors $\indexedderivativevector{1}, \dots, \indexedderivativevector{\numderivatives} \in \positiveintegerspowdord{\maxdifforder}$.
		\item Let $\pdf$ be a \gls*{pdf} in $\classcontinous{\maxdifforder + \extrasmoothing}{\rd}$, for some $\extrasmoothing \in \{1, 2, \dots, \infty \}$, with all partial derivatives bounded up to $(\maxdifforder + \extrasmoothing)$-th order.
		      If $\extrasmoothing = 1$, further assume that the $(\maxdifforder + 1)$-th partial derivatives of $\pdf$ are either Hölder continuous with exponent $\holderexponent \in (0, 1]$ or uniformly continuous.
		\item Let $\kdefnh$ be the \gls*{kde} of $\pdf$ based on the Gaussian kernel.
		\item Let $\bandwidth \equiv \bandwidth_{\samplesize}$ converge to zero and satisfy $\limsamplesizeinf \inlineorderfraclog{2\maxdifforder + 2} = 0$.
	\end{itemize}
	Let the curvature bump boundary and its plug-in estimator, respectively, be
	\begin{multicols}{2}
		\noindent
		\begin{equation*}
			\genericbumpboundary
			=
			\curvaturerdzerolevelset{\pdf}
			\,,
		\end{equation*}
		\columnbreak
		\begin{equation*}
			\approxbumpboundary
			=
			\curvaturerdzerolevelset{\kdefnh}
			\,.
		\end{equation*}
	\end{multicols}
	\vspace*{-\multicolsep}
	Further, suppose that:
	\begin{itemize}[leftmargin=2em]
		\renewcommand{\labelitemi}{\tiny$\blacklozenge$}
		\item There exists $\fatteningconstant > 0$ such that $\underlyingcurvature \in \classcontinous{1}{\openset}$, for some open set $\openset \subset \reals^{\numderivatives}$ containing the images of $\fatbumpboundary$ under both $\multivariatepartialpdf{\pdf}$ and $\multivariatepartialpdf{\kdefnh}$ \gls*{as}
		\item There exists $\criteriongradientconstant > 0$ such that $\norm{\grad{\curvaturefunctionalof{\pdf}}(\x)} > \criteriongradientconstant$, for all $\x \in \fatbumpboundary$.
	\end{itemize}
	Then,
	\begin{equation*}
		\hausdorffdbetweenboundaries
		=
		\bigoh \left(
		\bandwidth^{\minbetweenextrasmoothingandtwo} + \sqrt{\orderfraclog{2\maxdifforder}}
		\right)
		\almostsureconverges
		0
		\,.
	\end{equation*}
\end{theorem}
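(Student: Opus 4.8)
The plan is to read the claim as an instance of \citeauthor{Chen2022}'s stability theorem (Theorem~\ref{th:stability-theorem}), taken with the criterion functions $\criterionfunction = \curvaturefunctionalof{\pdf}$ and $\approxcriterionfunction = \curvaturefunctionalof{\kdefnh}$, whose zero-level sets are exactly the two bump boundaries, so that $\solutionmanifold = \genericbumpboundary$, $\approxsolutionmanifold = \approxbumpboundary$ and $\fatmanifold = \fatbumpboundary$. Once the five hypotheses are in force, Theorem~\ref{th:stability-theorem} returns $\hausdorffdbetweenboundaries = \bigoh(\norminfcriteriondiff)$, so the work reduces to (i) bounding the curvature difference at the advertised rate on $\fatbumpboundary$ and (ii) checking the regularity conditions A1--A3 and B1--B2. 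All smallness will be verified on $\fatbumpboundary$ rather than on $\rd$, which is exactly the localization our relaxed statement of Theorem~\ref{th:stability-theorem} permits; indeed, the hypotheses $\criterionhypothesesinline$ then confine $\approxbumpboundary$ to $\fatbumpboundary$, so only local control feeds the Hausdorff estimate.

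First I would settle the rate. Writing $\curvaturefunctionalof{\testpdf} = \underlyingcurvature \comp \multivariatepartialpdf{\testpdf}$ and using that $\underlyingcurvature \in \classcontinous{1}{\openset}$ is locally Lipschitz on $\openset$, which by hypothesis contains both $\multivariatepartialpdf{\pdf}(\fatbumpboundary)$ and $\multivariatepartialpdf{\kdefnh}(\fatbumpboundary)$ almost surely, I would obtain, pointwise on $\fatbumpboundary$, the bound $\smallabsval{\curvaturefunctionalof{\kdefnh}(\x) - \curvaturefunctionalof{\pdf}(\x)} \leq L \norm{\multivariatepartialpdf{\kdefnh}(\x) - \multivariatepartialpdf{\pdf}(\x)}$ for a local Lipschitz constant $L$. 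Each of the $\numderivatives$ coordinates on the right is a difference of partial derivatives of $\kdefnh$ and $\pdf$ of order $\absoluteval{\indexedderivativevector{\genindexj}} \leq \maxdifforder$, whose uniform norm Theorem~\ref{th:derivative-convergence} bounds almost surely; the slowest coordinate, at the top order $\maxdifforder$, reproduces $\bigoh(\bandwidth^{\minbetweenextrasmoothingandtwo} + \sqrt{\orderfraclog{2\maxdifforder}})$. The bias exponent $\minmaxexponent = \max\{\holderexponent, \minbetweenextrasmoothingandtwo\}$ collapses to $\minbetweenextrasmoothingandtwo$ because $\holderexponent \leq 1 \leq \minbetweenextrasmoothingandtwo$. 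This yields both the quantitative rate and condition A3.

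Next I would dispatch the regularity conditions. A2 is assumed verbatim, and A1 follows from the chain rule, since $\grad{\curvaturefunctionalof{\pdf}}$ couples $\grad{\underlyingcurvature}$, evaluated on the bounded set $\multivariatepartialpdf{\pdf}(\fatbumpboundary) \subset \openset$, with the $(\maxdifforder + 1)$-th order derivatives of $\pdf$, all bounded because $\extrasmoothing \geq 1$. The delicate pair is B1 and B2, both involving first derivatives of $\curvaturefunctionalof{\kdefnh}$; differentiating $\underlyingcurvature \comp \multivariatepartialpdf{\kdefnh}$ raises the \gls*{kde} derivative order to $\maxdifforder + 1$. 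Invoking Theorem~\ref{th:derivative-convergence} at that order consumes one unit of smoothness, leaving $\extrasmoothing - 1$; this is precisely why, when $\extrasmoothing = 1$, the statement additionally asks the $(\maxdifforder + 1)$-th derivatives of $\pdf$ to be Hölder or uniformly continuous, so that the order-$(\maxdifforder + 1)$ bias still vanishes. The matching order-$(\maxdifforder + 1)$ stochastic term is $\bigoh(\sqrt{\orderfraclog{2\maxdifforder + 2}})$, which tends to zero exactly under the bandwidth hypothesis $\limsamplesizeinf \inlineorderfraclog{2\maxdifforder + 2} = 0$ --- the reason for the extra $+2$. Splitting $\grad{(\underlyingcurvature \comp \multivariatepartialpdf{\kdefnh})} - \grad{(\underlyingcurvature \comp \multivariatepartialpdf{\pdf})}$ into a term governed by the continuity of $\grad{\underlyingcurvature}$, paired with the now-bounded order-$(\maxdifforder + 1)$ derivatives of $\kdefnh$, and a term governed by the order-$(\maxdifforder + 1)$ derivative convergence then delivers the boundedness in B1 and the smallness of $\norminfk{\curvaturefunctionalof{\kdefnh} - \curvaturefunctionalof{\pdf}}{1}$ in B2.

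Finally I would assemble everything. Only finitely many derivative indices appear, so the full-probability events of Theorem~\ref{th:derivative-convergence} intersect to one on which all the estimates hold at once; on it, Theorem~\ref{th:stability-theorem} applies and gives $\hausdorffdbetweenboundaries = \bigoh(\norminfcriteriondiff)$ at the rate found above, which vanishes as $\samplesizegoestoinfty$ since $\bandwidth \to 0$ and $\orderfraclog{2\maxdifforder} \to 0$, implied a fortiori by the $+2$ condition. The main obstacle I anticipate is the B1--B2 bookkeeping: one must show almost surely that $\multivariatepartialpdf{\kdefnh}(\fatbumpboundary)$ not only lands in $\openset$ but stays within a compact subset on which $\grad{\underlyingcurvature}$ is uniformly bounded --- which rests on the uniform closeness of the top-order \gls*{kde} derivatives to the bounded derivatives of $\pdf$ --- and to carry the one-order-higher control required there without degrading the $C^{0}$ rate that governs the conclusion.
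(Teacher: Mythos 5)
Correct, and essentially the paper's own argument: you instantiate Theorem~\ref{th:stability-theorem} with $\criterionfunction = \curvaturefunctionalof{\pdf}$ and $\approxcriterionfunction = \curvaturefunctionalof{\kdefnh}$, verify $\criterionhypotheses$1--$\criterionhypotheses$3 and $\approxcriterionhypotheses$1--$\approxcriterionhypotheses$2 by running Theorem~\ref{th:derivative-convergence} at derivative orders $\maxdifforder$ and $\maxdifforder + 1$, and convert the uniform \gls*{kdde} rates into the Hausdorff rate through the locally Lipschitz behaviour of $\underlyingcurvature$ on a compact subset of $\openset$ in which the derivative images of $\pdf$ and $\kdefnh$ eventually lie almost surely. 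Your write-up is in fact more explicit than the paper's proof, which condenses the order-$(\maxdifforder + 1)$ bookkeeping --- the reason for the bandwidth exponent $\dimension + 2\maxdifforder + 2$ and for the extra H\"older/uniform-continuity hypothesis when $\extrasmoothing = 1$ --- into a brief appeal to uniform continuity of $\underlyingcurvature$ and $\grad{\underlyingcurvature}$ on a compact set.
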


The optimal bound is $\hausdorffdbetweenboundaries = \bigoh ([ \samplesize^{-1} \log \samplesize ]^{2 / (\dimension + 2\maxdifforder + 4)})$, achieved with $\bandwidth \sameorder [ \samplesize^{-1} \log \samplesize ]^{1 / (\dimension + 2\maxdifforder + 4)}$ ($\extrasmoothing \geq 2$).
The former coincides up to a logarithmic term with the optimum in \gls*{kdde} for $\maxdifforder$-th order partial derivatives according to the \textit{root mean integrated square error} criterion, which is $\bigoh(\samplesize^{-2 / (\dimension + 2\maxdifforder + 4)})$~\cite{Chacon2011}.

Theorem~\ref{th:bump-boundary-convergence} straightforwardly leads to bump boundary convergence results for the determinants and traces of the shape operator and the Hessian matrix.

\begin{example}
	Consider the Laplacian and Gaussian bumps~\eqref{eq:laplacian-bump} and~\eqref{eq:hessian-determinant-bump} for a bivariate \gls{pdf} $\functiondef{\pdf}{\reals^2}{\closedzeroinfinterval}$, with $\curvaturefunctionalof{\pdf}$ equal to, respectively,
	\begin{multicols}{2}
		\noindent
		\begin{equation*}
			\trace{\hessian{\pdf}}
			\equiv
			\laplacianof{\pdf}
			=
			\partialpdfxindex{1} + \partialpdfxindex{2}
			\,,
		\end{equation*}
		\columnbreak
		\begin{equation*}
			\determinant{\hessian{\pdf}}
			=
			\partialpdfxindex{1} \partialpdfxindex{2} - \left( \frac{\partial^2 \pdf}{\partialx{1}\partialx{2}} \right)^2
			\,.
		\end{equation*}
	\end{multicols}
	\vspace*{-\multicolsep}
	For the trace, the underlying derivative functional is $\underlyingcurvature(\derivativevar_1, \derivativevar_2) = \derivativevar_1 + \derivativevar_2$,
	considering $\indexedderivativevector{1} = (2, 0)$ and $\indexedderivativevector{2} = (0, 2)$.
	In turn, the functional is $\underlyingcurvature(\derivativevar_1, \derivativevar_2, \derivativevar_3) = \derivativevar_1 \derivativevar_2 - \derivativevar_3^2$ for the determinant, taking $\indexedderivativevector{1}$ and $\indexedderivativevector{2}$ as before plus $\indexedderivativevector{3} = (1, 1)$.
	In both cases, $\underlyingcurvature$ is an infinitely smooth function over $\openset = \reals^{\numderivatives}$, making every $\fatteningconstant > 0$ satisfy the requirement in Theorem~\ref{th:bump-boundary-convergence} without imposing additional hypotheses on the original \gls*{pdf} and its \gls*{kde}.
\end{example}

The case for the Hessian eigenvalues is more involved.
The functions $\eigenvalueofpdf{\genindexi}$ in~\eqref{eq:ordered-eigenvalues} are not generally $\rd$-differentiable.
To solve this differentiability issue, we will follow the standard assumption in~\citeauthor{Kato1995}'s book that, for every $\x \in \rd$, all the eigenvalues of $\hessian{\pdf}(\x)$ have multiplicity one~\cite[Theorem 5.16, p. 119]{Kato1995}.
We will ask for an even stronger hypothesis to ensure that all plug-in estimators $\eigenvalueofkdefnh{\genindexi}$ are eventually distinct everywhere for large $\samplesize$ \gls*{as}

\begin{proposition}
	\label{prop:distinct-eigenvalues}
	Let $\pdf$ be a \gls*{pdf} and let $\kdefnh$ be its \gls*{kde}.
	Let us assume that $\pdf$ and $\kdefnh$ satisfy all the conditions in Theorem~\ref{th:derivative-convergence} so that the second-order partial derivatives of $\pdf$ are consistently approximated with plug-in estimators.
	Let us call $\genericbumpboundary$ the bump boundary for the criterion function $\curvaturefunctional \equiv \eigenvalueofpdf{\genindexj}$, for some $\genindexj \in \{1, \dots, \dimension\}$.
	If there exists $\fatteningconstant > 0$ such that
	\begin{equation}
		\label{eq:eigenvalue-separability}
		\infdiffconsecutiveeig{\pdf} > 0
		\,,
	\end{equation}
	for all $\genindexi \in \{1, \dots, \dimension - 1\}$, then~\eqref{eq:eigenvalue-separability} also holds \gls*{as} for $\samplesize$ sufficiently large if we replace $\pdf$ by $\kdefnh$.
	In particular, both $\eigenvalueofpdf{\genindexj}$ and $\eigenvalueofkdefnh{\genindexj}$ are infinitely differentiable functions of the second-order partial derivatives of $\pdf$ and $\kdefnh$, respectively, on some neighbourhood $\fatbumpboundary$ \gls*{as} for $\samplesize$ sufficiently large.
\end{proposition}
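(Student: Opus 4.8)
The plan is to upgrade the uniform almost-sure convergence of the second-order partial derivatives furnished by Theorem~\ref{th:derivative-convergence} into uniform almost-sure convergence of the ordered Hessian eigenvalues, and then to carry the strictly positive spectral gap from $\pdf$ across to $\kdefnh$ through a triangle inequality. The separability~\eqref{eq:eigenvalue-separability} is precisely the statement that, on $\fatbumpboundary$, the Hessian of $\pdf$ has a simple spectrum with a uniform gap, and this is what eventually places the eigenvalue maps in the regime where analytic perturbation theory applies.

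First I would observe that the hypotheses of Theorem~\ref{th:derivative-convergence} are assumed for every second-order index vector $\derivativevector$ with $\derivativevectororder = 2$; since there are finitely many such vectors, on a single probability-one event every $\totalsuperror$ tends to zero. Hence the Hessians converge entrywise, uniformly in $\x$, and the spectral norm being controlled by a dimensional multiple of the largest entrywise gap gives
\begin{equation*}
	\sup_{\x \in \rd} \norm{\hessian{\kdefnh}(\x) - \hessian{\pdf}(\x)} \almostsureconverges 0
	\,.
\end{equation*}
Weyl's perturbation inequality for symmetric matrices then bounds $\smallabsval{\eigenvalueof{\genindexi}{\kdefnh}(\x) - \eigenvalueof{\genindexi}{\pdf}(\x)}$ by this spectral norm pointwise; the decreasing convention~\eqref{eq:ordered-eigenvalues} matches the $\genindexi$-th largest eigenvalues on both sides, so taking the supremum over $\x$ yields $\erroreigenvalue{\genindexi} \almostsureconverges 0$ for each $\genindexi \in \{1, \dots, \dimension\}$.

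Next I would propagate the gap. For every $\x$ and $\genindexi \in \{1, \dots, \dimension - 1\}$,
\begin{equation*}
	\eigenvalueof{\genindexi}{\kdefnh}(\x) - \eigenvalueof{\genindexi + 1}{\kdefnh}(\x)
	\geq
	\eigenvalueof{\genindexi}{\pdf}(\x) - \eigenvalueof{\genindexi + 1}{\pdf}(\x)
	- \erroreigenvalue{\genindexi} - \erroreigenvalue{\genindexi + 1}
	\,.
\end{equation*}
Taking the infimum over $\x \in \fatbumpboundary$, hypothesis~\eqref{eq:eigenvalue-separability} bounds the first difference below by some $\constant_{\genindexi} > 0$, while the two error terms are eventually below $\constant_{\genindexi}/2$ almost surely. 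Intersecting the resulting probability-one events over the finitely many indices $\genindexi$ shows that~\eqref{eq:eigenvalue-separability} holds for $\kdefnh$, almost surely, once $\samplesize$ is large enough.

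Finally, for the differentiability statement I would invoke analytic perturbation theory~\cite{Kato1995}: on the open set of real symmetric matrices with a simple spectrum, each ordered eigenvalue is a real-analytic, hence infinitely differentiable, function of the entries. The uniform gaps just established confine $\hessian{\pdf}(\x)$ --- and, almost surely for large $\samplesize$, $\hessian{\kdefnh}(\x)$ --- to this set for all $\x \in \fatbumpboundary$, so composing with the smooth assignments $\x \mapsto \hessian{\pdf}(\x)$ and $\x \mapsto \hessian{\kdefnh}(\x)$ makes $\eigenvalueofpdf{\genindexj}$ and $\eigenvalueofkdefnh{\genindexj}$ members of $\classcontinous{\infty}{\fatbumpboundary}$. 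The main obstacle is the smoothness transfer rather than the estimate: ordered eigenvalues are globally only Lipschitz and fail to be differentiable at crossings, so the conclusion hinges on keeping the spectrum simple, which forces one to propagate the strict gap uniformly on the fixed set $\fatbumpboundary$ --- the fattening of the boundary of $\pdf$, not of $\kdefnh$ --- over a single almost-sure event, exactly the set on which Theorem~\ref{th:bump-boundary-convergence} later requires plug-in eigenvalue smoothness.
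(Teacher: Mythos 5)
Your proposal is correct and follows essentially the same route as the paper's proof: uniform almost-sure consistency of the ordered eigenvalues via a matrix perturbation bound (you cite Weyl's inequality where the paper invokes the Wielandt--Hoffman theorem for Lipschitz continuity of ordered eigenvalue maps --- the same idea), the triangle inequality to transfer the spectral gap~\eqref{eq:eigenvalue-separability} to $\kdefnh$ on $\fatbumpboundary$, and Kato's perturbation theory for simple eigenvalues to conclude infinite differentiability. Your write-up is somewhat more explicit about intersecting the finitely many probability-one events and about reducing the spectral norm to entrywise derivative errors, but these are details the paper leaves implicit rather than a different argument.
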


\subsection{Inference}

In this section, we derive bootstrap inference for curvature bumps, following similar steps as in the scheme developed by~\citeauthor{Chen2017} for \gls*{pdf} level sets~\cite{Chen2017}.
To accommodate the required techniques, we will exclusively focus on curvature functionals $\curvaturefunctional$ deriving from the \gls*{pdf} Hessian $\hessianmatrix$.

\subsubsection{Inference scheme}

We will simplify the inference problem by targeting $\functiondef{\smoothedpdf}{\rd}{\closedzeroinfinterval}$, given by $\smoothedpdf(\x) = \smallexpectedvalueof{\kdefnhx}$, instead of $\pdf$, considering the bias negligible for a small $\bandwidth$.
There are compelling arguments favouring $\smoothedpdf$ against $\pdf$ for inference purposes (see~\cite[Section 2.2]{Chen2017} for a thorough discussion).

Let us call $\genericsmoothedbump$ the smoothed version of~\eqref{eq:abstract-bump} derived by replacing $\pdf$ with $\smoothedpdf$.
We will assume that $\genericsmoothedbump \subset \bumpdomain$, for some $\bumpdomain \subset \rd$, or at least that the inferential procedure focuses on $\genericsmoothedbump \cap \bumpdomain$.
Ideally, $\bumpdomain$ should be as \textit{small} as possible (hopefully $\bumpdomain \neq \rd$) so that the resulting confidence regions are \textit{efficient}.

Given $\oneminusconfidence \in (0, 1)$, a path for narrowing down a ($\confidence$)-level confidence region for $\genericsmoothedbump$ is constructing two sets
\begin{equation}
	\label{eq:bump-confidence-bounds}
	\begin{aligned}
		\genericupperconfidentbump
		 & =
		\confidentbumpset{-\margin}
		\\
		\genericlowerconfidentbump
		 & =
		\confidentbumpset{\margin}
	\end{aligned}
	\,,
\end{equation}
for some margin $\margin \in \closedzeroinfinterval$.
Note that $\genericlowerconfidentbump \subset \approxbump \subset \genericupperconfidentbump$, thus~\eqref{eq:bump-confidence-bounds} are set bounds for the $\approxbump$ in~\eqref{eq:approx-bump} approximating $\genericsmoothedbump$.
This \textit{vertical} scheme is similar to \citeauthor{Chen2017}'s second method for \gls*{pdf} level set inference~\cite{Chen2017} and a particular case of~\citeauthor{Mammen2013}'s universal approach~\cite{Mammen2013}.

Our inference results will establish conditions to ensure the previous set inequality eventually holds too with probability $\confidence$ when replacing $\approxbump$ with $\genericsmoothedbump$ while the set bounds~\eqref{eq:bump-confidence-bounds} draw nearer $\genericsmoothedbump$, namely
\begin{equation}
	\label{eq:asymptotic-validity-inference}
	\left\{
	\begin{aligned}
		\probabilityof{
			\genericlowerconfidentbumpapprox \subset \genericsmoothedbump \subset \genericupperconfidentbumpapprox
		}
		 & \geq
		\confidence + \littleohone
		\\
		\approxmargin
		 & =
		\littleohone
	\end{aligned}
	\right.
	\,,
\end{equation}
as $\samplesizegoestoinfty$, for some sequence $\{ \approxmargin \}_{\samplesize = 1}^{\infty}$.
The inference scheme~\eqref{eq:asymptotic-validity-inference} can be proven for all curvature bumps using Theorem~\ref{th:abstract-inference-indirect}.
From Section~\ref{sec:consistency}, it is an exercise to realize that, under the conditions in which~\eqref{eq:asymptotic-validity-inference} will hold, and with a few mild additional assumptions, the boundaries of the set bounds~\eqref{eq:bump-confidence-bounds} converge in the Hausdorff distance to $\boundaryof{\genericsmoothedbump}$.

In what follows, we will equivalently denote $\quantilefuncofrv{\randomvarx}{\p} \equiv \quantilefuncofp{\p}{\randomvarx}$ the $\p$-th quantile, $\p \in (0, 1)$, of the \gls*{rv} $\randomvarx$~\cite[p. 304]{vanderVaart1998}.

\begin{theorem}
	\label{th:abstract-inference-indirect}
	In the context described above, assume the following:
	\begin{enumerate}[label=\Roman*]
		\item \label{itm:bounding-rv} There exists a sequence of random variables $\{\marginrv\}_{\samplesize = 1}^{\infty}$ such that, for sufficiently large $\samplesize \in \naturals$, $\supreumumnormerrorfunctional{\curvaturefunctional} \equiv \superrorcurvature \leq \marginrv$ \gls*{as}
		      Let us further assume that $\rootsamplesize \marginrv$ converges weakly~\cite{vanderVaart1996} to some \gls*{rv} $\limitingmarginrv$ as $\samplesizegoestoinfty$, denoted by $\weaklyconverges{\rootsamplesize \marginrv}{\limitingmarginrv}$.
		      Suppose that $\limitingmarginrv$ has a continuous and strictly increasing \gls*{cdf}.
		\item \label{itm:margin-bounding-quantile} For each $\oneminusconfidence \in (0, 1)$, there is $\{ \margin \}_{\samplesize = 1}^{\infty}$ satisfying $\margin \geq \quantilefuncofrv{\marginrv}{\confidence}$, for all $\samplesize \in \naturals$, and $\limsamplesizeinf \margin = 0$.
		\item \label{itm:convergent-approx-margin} For each $\oneminusconfidence \in (0, 1)$, there is $\{ \approxmargin \}_{\samplesize = 1}^{\infty}$ satisfying $\smallabsval{\approxmargin - \margin} = \littleohinvsqrtn$ as $\samplesizegoestoinfty$.
	\end{enumerate}
	Then, for all $\oneminusconfidence \in (0, 1)$, the asymptotic validity of the inference scheme~\eqref{eq:asymptotic-validity-inference} holds.
\end{theorem}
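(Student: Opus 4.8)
The plan is to bound the containment probability in~\eqref{eq:asymptotic-validity-inference} from below by the probability of a single sup-norm event, to replace that event by one governed by the bounding variable $\marginrv$, and finally to control the latter through its weak limit $\limitingmarginrv$. I would begin with a purely deterministic sandwich: \emph{on the event} $\{\superrorcurvature \leq \approxmargin\}$ the inclusions $\genericlowerconfidentbumpapprox \subset \genericsmoothedbump \subset \genericupperconfidentbumpapprox$ hold. Indeed, since $\smallabsval{\curvaturesign} = 1$, the bound $\superrorcurvature \leq \approxmargin$ gives $\smallabsval{\curvaturesign\curvaturefunctionalof{\kdefnh}(\x) - \curvaturesign\curvaturefunctionalof{\smoothedpdf}(\x)} \leq \approxmargin$ for every $\x \in \bumpdomain$, so no case distinction on $\curvaturesignselector$ is needed. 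Reading off the definitions in~\eqref{eq:bump-confidence-bounds}, $\curvaturesign\curvaturefunctionalof{\kdefnh}(\x) \geq \approxmargin$ then forces $\curvaturesign\curvaturefunctionalof{\smoothedpdf}(\x) \geq 0$ (left inclusion), while $\curvaturesign\curvaturefunctionalof{\smoothedpdf}(\x) \geq 0$ forces $\curvaturesign\curvaturefunctionalof{\kdefnh}(\x) \geq -\approxmargin$ (right inclusion, using $\genericsmoothedbump \subset \bumpdomain$ to place the point in $\bumpdomain$). Hence $\probabilityof{\genericlowerconfidentbumpapprox \subset \genericsmoothedbump \subset \genericupperconfidentbumpapprox} \geq \probabilityof{\superrorcurvature \leq \approxmargin}$.

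Next I would trade the functional $\superrorcurvature$ for $\marginrv$. Hypothesis~\ref{itm:bounding-rv} gives $\superrorcurvature \leq \marginrv$ \gls*{as} for large $\samplesize$, so eventually $\{\marginrv \leq \approxmargin\} \subset \{\superrorcurvature \leq \approxmargin\}$, whence $\probabilityof{\superrorcurvature \leq \approxmargin} \geq \probabilityof{\marginrv \leq \approxmargin} = \probabilityof{\rootsamplesize\marginrv \leq \rootsamplesize\approxmargin}$. It then remains to prove $\liminf_{\samplesizegoestoinfty} \probabilityof{\rootsamplesize\marginrv \leq \rootsamplesize\approxmargin} \geq \confidence$, which is equivalent to the first line of~\eqref{eq:asymptotic-validity-inference}.

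For that, write $q_{\samplesize} = \rootsamplesize\,\quantilefuncofrv{\marginrv}{\confidence}$ for the $(\confidence)$-quantile of $\rootsamplesize\marginrv$ and $q^{\ast} = \quantilefuncofrv{\limitingmarginrv}{\confidence}$. By $\weaklyconverges{\rootsamplesize\marginrv}{\limitingmarginrv}$ and the continuity and strict monotonicity of $\cdfof{\limitingmarginrv}$, the standard quantile-convergence lemma (see, e.g.,~\cite{vanderVaart1998}) yields $q_{\samplesize} \to q^{\ast}$. Hypothesis~\ref{itm:margin-bounding-quantile} gives $\rootsamplesize\margin \geq q_{\samplesize}$, and hypothesis~\ref{itm:convergent-approx-margin} gives $r_{\samplesize} \eqdef \rootsamplesize(\approxmargin - \margin) = \littleohone$, so $\rootsamplesize\approxmargin \geq q_{\samplesize} + r_{\samplesize}$. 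Fixing $\fatteningexcess > 0$, eventually $q_{\samplesize} + r_{\samplesize} \geq q^{\ast} - \fatteningexcess$; by monotonicity of the distribution function of $\rootsamplesize\marginrv$ together with its convergence to $\cdfof{\limitingmarginrv}$ at the continuity point $q^{\ast} - \fatteningexcess$, this gives $\liminf_{\samplesizegoestoinfty}\probabilityof{\rootsamplesize\marginrv \leq \rootsamplesize\approxmargin} \geq \cdfof{\limitingmarginrv}(q^{\ast} - \fatteningexcess)$, and letting $\fatteningexcess \downarrow 0$ yields the bound $\confidence$ by continuity of $\cdfof{\limitingmarginrv}$ at $q^{\ast}$. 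The second line, $\approxmargin = \littleohone$, is immediate: $\margin \to 0$ by~\ref{itm:margin-bounding-quantile} and $\approxmargin - \margin = \littleohinvsqrtn = \littleohone$ by~\ref{itm:convergent-approx-margin}.

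The main obstacle is this last step, where the $\rootsamplesize$-scaling must be reconciled with the merely $\littleohinvsqrtn$ closeness of $\approxmargin$ to $\margin$. That order is exactly critical: multiplying by $\rootsamplesize$ makes the discrepancy $r_{\samplesize}$ vanish, but nothing smaller is assumed, so no slack is available. Making the argument rigorous requires both the quantile-convergence lemma—hence the continuity and strict monotonicity demanded of $\cdfof{\limitingmarginrv}$—and the convergence of the distribution functions of $\rootsamplesize\marginrv$ at continuity points of the limit, so that an $\littleohone$ shift of the evaluation point perturbs the limiting probability by only $\littleohone$. By contrast, the earlier steps are routine once the governing event $\{\superrorcurvature \leq \approxmargin\}$ is identified.
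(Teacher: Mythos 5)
Your proposal is correct and follows essentially the same route as the paper's proof: the deterministic sandwich showing $\{\superrorcurvature \leq \approxmargin\}$ forces the set inclusions, the a.s.\ replacement of $\superrorcurvature$ by $\marginrv$, the quantile bound from hypothesis~\ref{itm:margin-bounding-quantile}, the $\rootsamplesize$-rescaling absorbing the $\littleohinvsqrtn$ discrepancy, and the quantile-convergence lemma~\cite[Lemma 21.2]{vanderVaart1998} enabled by the continuous, strictly increasing \gls*{cdf} of $\limitingmarginrv$. The only (cosmetic) difference is that the paper passes from $\rootsamplesize\marginrv$ to $\limitingmarginrv$ via convergence in the Kolmogorov distance~\cite[Lemma 2.11]{vanderVaart1998}, whereas you evaluate the distribution functions at an $\fatteningexcess$-shifted continuity point and let $\fatteningexcess \downarrow 0$; you also spell out the containment implication that the paper leaves as an exercise.
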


The following sections will introduce theoretical results leading to bootstrap estimates $\approxmargin$ that can be feasibly computed in practice.

\citeauthor{Mammen2013}'s approach~\cite{Mammen2013} achieves a sharp asymptotic coverage probability $\confidence$ in~\eqref{eq:asymptotic-validity-inference}.
A key difference separating their proposal from~\citeauthor{Chen2017}'s and ours is that they manage to bootstrap from \pgls*{rv} that is a supremum over a neighbourhood of the level set, unlike $\supreumumnormerrorfunctional{\curvaturefunctional}$ in Theorem~\ref{th:abstract-inference-indirect}, which considers the whole $\bumpdomain$.
See~\cite{Qiao2019} for an overview of similar local strategies for level sets.
Based on that, \citeauthor{Mammen2013}'s method will generally be less conservative.

\subsubsection{Bootstrap outline}

The main point to fill the Theorem~\ref{th:abstract-inference-indirect} template is approximating the stochastic errors for second-order linear differential operators $\genericdiffop$
\begin{equation}
	\label{eq:inference-uniform-error}
	\supreumumnormerror{\genericdiffop}
	=
	\smallsupnormbumpdomain{\genericdiffop\kdefnh(\x) - \genericdiffop\smoothedpdf(\x)}
	\,,
\end{equation}
using bootstrap estimates
\begin{equation}
	\label{eq:inference-uniform-error-bootstrap}
	\bootstrapsupreumumnormerror{\genericdiffop}
	=
	\smallsupnormbumpdomain{\genericdiffop\bootstrapkdefnhof{\x} - \genericdiffop\conditionalkdefnhof{\x}}
	\,,
\end{equation}
where $\bootstrapkdefnhof{\funcarg}$ denotes the \gls*{kde} based on $\samplesize$ \gls*{iid} random variables $\bootstrapsamplevar_1, \dots, \bootstrapsamplevar_{\samplesize} \followdistribution \empiricalbootstrapprob$ of the empirical bootstrap probability measure $\empiricalbootstrapprob$ assigning equal masses $1/\samplesize$ to each component $\samplevarrealization_{\genindexi} \in \rd$ of a particular $\samplesize$-size \gls*{iid} realization $\originalsample = \{\samplevarrealization_1, \dots, \samplevarrealization_{\samplesize}\}$ from $\pdf$, and $\conditionalkdefnhof{\funcarg}$ is the realization of the \gls*{kde} based on $\originalsample$, i.e.,
\begin{multicols}{2}
	\noindent
	\begin{equation*}
		\bootstrapkdefnhof{\x}
		=
		\oneoversamplesize
		\sumitosamplesize
		\kernelh(\x - \bootstrapsamplevar_{\genindexi})
		\,,
	\end{equation*}
	\columnbreak
	\begin{equation*}
		\conditionalkdefnhof{\x}
		=
		\oneoversamplesize
		\sumitosamplesize
		\kernelh(\x - \samplevarrealization_{\genindexi})
		\,.
	\end{equation*}
\end{multicols}
\vspace*{-\multicolsep}
Assume that both~\eqref{eq:inference-uniform-error} and~\eqref{eq:inference-uniform-error-bootstrap} use the same kernel $\kernel$ everywhere.
Estimating confidence regions for curvature bumps will go through, directly or indirectly, approximating the \gls*{cdf} of~\eqref{eq:inference-uniform-error} with that of~\eqref{eq:inference-uniform-error-bootstrap}.

\subsubsection{Gaussian process approximation}

Lemma~\ref{lm:chernozhukov} below allows a \gls*{gp} approximation between the suprema~\eqref{eq:inference-uniform-error} and~\eqref{eq:inference-uniform-error-bootstrap}.
See~\cite{vanderVaart1996} for further knowledge about \glspl*{gp}.
The \textit{empirical process}~\cite{vanderVaart1996} on a sample $\nsample$ of \gls*{iid} $\dimension$-dimensional random variables indexed by a class $\measurableclass$ of measurable functions $\functiondef{\measurableclassmember}{\rd}{\reals}$ is defined as the functional $\empiricalprocess$ mapping a function $\measurableclassmember \in \measurableclass$ to the \gls*{rv}
\begin{equation*}
	\empiricalprocess(\measurableclassmember)
	=
	\frac{1}{\rootsamplesize}
	\sumitosamplesize
	\left(
	\measurableclassmember(\samplevar_{\genindexi})
	- \expectedvalueof{\measurableclassmember(\samplevar_{\genindexi})}
	\right)
	\,.
\end{equation*}
Lemma~\ref{lm:chernozhukov} invokes the \gls*{pm} and \gls*{vc}-type classes of functions.
We refer the reader to~\cite{vanderVaart1996} for the former and briefly define the latter, including the auxiliary Definition~\ref{def:covering-number}.

\begin{definition}[Covering number~\cite{vanderVaart1996}]
	\label{def:covering-number}
	Let $(\vectorspace, \norm{\funcarg})$ be a vector space with a seminorm and let $\coverableclass \subset \vectorspace$.
	We define the $\coveringepsilon$-\textit{covering number} of $\coverableclass$, denoted by $\coveringnumber(\coverableclass, \vectorspace, \coveringepsilon)$, as the minimum number of $\coveringepsilon$-balls of the form $\{ \x \in \vectorspace : \norm{\x - \y} < \coveringepsilon \}$, where $\y \in \vectorspace$, needed to cover $\coverableclass$.
\end{definition}

\begin{definition}[\acrshort{vc}-type class of functions~\cite{Chernozhukov2014}]
	Let $\measurableclass$ be a class of measurable functions $\functiondef{\measurableclassmember}{\rd}{\reals}$.
	Let $\envelopefunction$ be an \textit{envelope} function for $\measurableclass$, i.e., $\functiondef{\envelopefunction}{\rd}{\reals}$ measurable such that $\sup_{\measurableclassmember \in \measurableclass} \absoluteval{\measurableclassmember(\x)} \leq \envelopefunction(\x)$ for all $\x \in \rd$.
	An $\measurableclass$ class equipped with an envelope $\envelopefunction$ is called a \gls{vc}-type class if there exist $\coveringconstanta, \coveringconstantnu \in \openzeroinfinterval$ such that, for all $\coveringepsilon \in (0, 1)$,
	\begin{equation*}
		\sup_{\coveringprobmeasure} \coveringnumber(\measurableclass, \classlp{2}(\rd; \coveringprobmeasure), \coveringepsilon \norm{\envelopefunction}_{2, \coveringprobmeasure})
		\leq
		\left( \frac{\coveringconstanta}{\coveringepsilon} \right)^{\coveringconstantnu}
		\,,
	\end{equation*}
	where the supremum is taken over all finitely discrete probability measures $\coveringprobmeasure$ defined on $\rd$ and $\norm{\envelopefunction}_{2, \coveringprobmeasure} = ( \integralonrd \absoluteval{\envelopefunction}^2 d\coveringprobmeasure )^{1/2}$ is the seminorm of $\classlp{2}(\rd; \coveringprobmeasure)$.
\end{definition}

We will denote the Kolmogorov distance as $\kolmogorovdistance{\randomvarx}{\randomvary} = \sup_{\cdfarg \in \reals} \absoluteval{\cdfof{\randomvarx}(\cdfarg) - \cdfof{\randomvary}(\cdfarg)}$, where $\cdfof{\randomvarx}$ is the \gls*{cdf} of the \gls*{rv} $\randomvarx$.
Likewise, $\randomvarx \equaldistribution \randomvary$ will denote equality in distribution between the random variables.

\begin{lemma}[\citeauthor{Chernozhukov2014},~\citeyear{Chernozhukov2014}~\cite{Chernozhukov2014, Chen2015, Chen2016}]
	\label{lm:chernozhukov}
	Consider a sample $\nsample$ of \gls*{iid} random variables.
	Let $\vcclass$ be a \gls*{pm} and \gls*{vc}-type class of functions with constant envelope $\vcbound \in \openzeroinfinterval$.
	Let $\processstdv \in \openzeroinfinterval$ be such that $\supovervcclass{\expectedvalueof{\vcclassmember(\samplevar_1)^2}} \leq \processstdv^2 \leq \vcbound^2$.
	Let $\gaussianprocess$ be a centred tight \gls*{gp} with sample paths on the space of bounded functions $\classbounded{\vcclass}$, and with covariance function
	\begin{equation}
		\label{eq:covariance-gaussian-process}
		\covariance{\gaussianprocess(\vcclassmember_1)}{\gaussianprocess(\vcclassmember_2)}
		=
		\expectedvalueof{\vcclassmember_1(\samplevar_1) \vcclassmember_2(\samplevar_1)}
		-
		\expectedvalueof{\vcclassmember_1(\samplevar_1)}
		\expectedvalueof{\vcclassmember_2(\samplevar_1)}
		\,,
	\end{equation}
	for $\vcclassmember_1, \vcclassmember_2 \in \vcclass$.
	Then, there exists \pgls*{rv} $\gaussianprocessrv \equaldistribution \supovervcclassof{\gaussianprocess}$ such that, for all $\chernozhukovfreeconst \in (0, 1)$ and $\samplesize$ sufficiently large,
	\begin{equation*}
		\probabilityof{
			\absoluteval{\supovervcclassof{\empiricalprocess} - \gaussianprocessrv}
			>
			\chernozhukovfirstconst
			\frac{\vcbound^{1/3} \processstdv^{2/3} \log^{2/3} \samplesize}
			{\chernozhukovfreeconst^{1/3} \samplesize^{1/6}}
		}
		\leq
		\chernozhukovsecondtconst \chernozhukovfreeconst
		\,,
	\end{equation*}
	where $\empiricalprocess$ is based on $\nsample$, and $\chernozhukovfirstconst, \chernozhukovsecondtconst$ are universal constants.
\end{lemma}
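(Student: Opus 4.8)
The plan is to recognize Lemma~\ref{lm:chernozhukov} as a specialization, to \gls*{vc}-type classes with a constant envelope, of the abstract Gaussian coupling inequality for suprema of empirical processes due to \citeauthor{Chernozhukov2014}. Reproving that coupling from first principles would amount to redeveloping the entire Koml\'os--Major--Tusn\'ady-style machinery for function classes, so the sensible route is instead to verify that our hypotheses match theirs and then to track how their general entropy-dependent rate collapses to the explicit three-factor form stated here.

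First I would check the hypotheses demanded by the coupling theorem. The class $\vcclass$ is assumed \gls*{pm} and \gls*{vc}-type with constant envelope $\vcbound$, and the variance proxy satisfies $\supovervcclass \expectedvalueof{\vcclassmember(\samplevar_1)^2} \leq \processstdv^2 \leq \vcbound^2$; these are precisely the measurability and moment conditions under which the coupling applies. I would also observe that the covariance~\eqref{eq:covariance-gaussian-process} of the tight Gaussian process $\gaussianprocess$ coincides with the limiting covariance of the empirical process $\empiricalprocess$ indexed by $\vcclass$, so that $\gaussianprocess$ is exactly the Gaussian object whose supremum must appear in the coupling. Invoking the theorem on a suitably enriched probability space then produces a random variable $\gaussianprocessrv \equaldistribution \supovervcclassof{\gaussianprocess}$ together with a coupling bound between $\supovervcclassof{\empiricalprocess}$ and $\gaussianprocessrv$.

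Next I would specialize the rate. The general coupling inequality controls $\absoluteval{\supovervcclassof{\empiricalprocess} - \gaussianprocessrv}$ by an expression assembled from the entropy integral of $\vcclass$, the envelope $\vcbound$, the variance proxy $\processstdv$, and the free confidence level $\chernozhukovfreeconst$. For a \gls*{vc}-type class the covering number obeys $\coveringnumber(\vcclass, \classlp{2}, \coveringepsilon) \leq (\coveringconstanta / \coveringepsilon)^{\coveringconstantnu}$, so the entropy integral is only of logarithmic order; inserting this bound and absorbing $\coveringconstanta$, $\coveringconstantnu$ and the ratio $\log(\vcbound / \processstdv)$ into the $\log^{2/3} \samplesize$ factor yields precisely $\chernozhukovfirstconst \, \vcbound^{1/3} \processstdv^{2/3} \log^{2/3} \samplesize / (\chernozhukovfreeconst^{1/3} \samplesize^{1/6})$, while the failure probability is the $\chernozhukovsecondtconst \chernozhukovfreeconst$ on the right-hand side. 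The qualifier ``$\samplesize$ sufficiently large'' is what lets me discard the lower-order remainders carried by the general statement.

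The hard part will be the careful bookkeeping in this last step: the \citeauthor{Chernozhukov2014} coupling comes with an entropy integral and several additional terms, and I must check that for a \gls*{vc}-type class with constant envelope these collapse, for $\samplesize$ large, to the single displayed factor, with $\coveringconstanta$, $\coveringconstantnu$ and $\log(\vcbound/\processstdv)$ genuinely subsumed into $\log^{2/3}\samplesize$ and the free constants, so that $\chernozhukovfirstconst$ and $\chernozhukovsecondtconst$ can truly be taken universal. Making this reduction rigorous under the mild growth conditions guaranteed by $\processstdv^2 \leq \vcbound^2$ and the ambient bandwidth regime, rather than re-deriving the coupling itself, is where essentially all of the effort lies.
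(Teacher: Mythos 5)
Your proposal is correct and matches the paper's treatment: the paper offers no proof of Lemma~\ref{lm:chernozhukov} at all, presenting it as a direct quotation of the Gaussian coupling for suprema of empirical processes (Corollary 2.2 of~\cite{Chernozhukov2014}, in the simplified form already used in~\cite{Chen2015, Chen2016}), which is exactly the reduction you describe — verify the \gls*{pm}/\gls*{vc}-type and variance hypotheses, identify $\gaussianprocess$ via the covariance~\eqref{eq:covariance-gaussian-process}, and let the \gls*{vc}-type entropy bound collapse the general rate to $\vcbound^{1/3}\processstdv^{2/3}\log^{2/3}\samplesize\,/\,(\chernozhukovfreeconst^{1/3}\samplesize^{1/6})$, with the residual terms (including the extra $\samplesize^{-1}$ in the failure probability) absorbed by ``$\samplesize$ sufficiently large.'' One trivial slip: the covariance in~\eqref{eq:covariance-gaussian-process} is the exact covariance of $\empiricalprocess$ for every $\samplesize$, not merely its limit, and no bandwidth regime enters the lemma itself — that only appears in its later applications.
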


If we apply Lemma~\ref{lm:chernozhukov} to~\eqref{eq:inference-uniform-error}, we get the following result.

\begin{theorem}
	\label{th:gaussian-approximation-supnorm}
	Let $\genericdiffop$ denote any linear $\difforder$-th order differential operator.
	Let $\kernel \in \classcontinous{\difforder}{\rd}$ be a kernel with bounded $\difforder$-th derivatives.
	Further, suppose that the class
	\begin{equation}
		\label{eq:vc-class-kernels}
		\kernelsvcclass
		=
		\left \{
		\lambdafunction{\y \in \rd}{%
			\indexeddiffoperator\kernel \left( \frac{\x - \y}{\bandwidth} \right)
		}
		: \x \in \bumpdomain, \bandwidth > 0, \derivativevector \in \positiveintegerspowd, \derivativevectororder = \difforder
		\right \}
	\end{equation}
	is \gls*{vc}-type.
	Let $\bandwidth \equiv \bandwidth_{\samplesize}$ be a sequence with $\bandwidth \in (0, 1)$ and $\bandwidth^{-(\dimension + \difforder)} = \bigoh(\log \samplesize)$.
	Moreover, let $\gaussianprocess$ be a \gls*{gp} with the same properties as in Lemma~\ref{lm:chernozhukov} and indexed by
	\begin{equation}
		\label{eq:vc-class-scaled-kernels}
		\vcclassh = \left \{
		\lambdafunction{\y \in \rd}{%
			\frac{1}{\sqrt{\bandwidth^{\dimension + \difforder}}}
			\genericdiffop\kernel \left( \frac{\x - \y}{\bandwidth} \right)
		}
		: \x \in \bumpdomain
		\right \}
		\,.
	\end{equation}
	Then, there exists $\gaussianprocessrvh \equaldistribution \supovervcclasshof{\gaussianprocess}$ such that, for $\samplesize$ sufficiently large,
	\begin{equation*}
		\kolmogorovdistance{\scaling \supreumumnormerror{\genericdiffop}}{\gaussianprocessrvh}
		=
		\bigoh \left( \bootstrapconvergencerateorder \right)
		\tendstoassamplesizegoestoinfty
		0
		\,.
	\end{equation*}
	Moreover, if we fix $\bandwidth \in (0, 1)$ and define $\normalizedgaussianprocessrvh = \gaussianprocessrvh / \sqrt{\bandwidth^{\dimension + \difforder}}$, then $\rootsamplesize \ \supreumumnormerror{\genericdiffop}$ converges in probability to $\normalizedgaussianprocessrvh$, denoted $\convergesinprob{\rootsamplesize \ \supreumumnormerror{\genericdiffop}}{\normalizedgaussianprocessrvh}$, as $\samplesizegoestoinfty$.
\end{theorem}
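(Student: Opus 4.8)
The plan is to rewrite $\scaling\,\supreumumnormerror{\genericdiffop}$ as the supremum of an empirical process indexed by $\vcclassh$, apply the coupling of Lemma~\ref{lm:chernozhukov}, and then convert the resulting in-probability approximation into a bound on the Kolmogorov distance by a Gaussian anti-concentration inequality. First I would unfold the estimator: since $\genericdiffop$ has order $\difforder$, differentiating the scaled kernel produces a factor $\bandwidth^{-\difforder}$, and using $\smoothedpdf = \smallexpectedvalueof{\kdefnh}$ the centred error becomes an average of i.i.d.\ terms,
\begin{equation*}
	\genericdiffop\kdefnh(\x) - \genericdiffop\smoothedpdf(\x)
	=
	\frac{1}{\samplesize\bandwidth^{\dimension + \difforder}}
	\sumitosamplesize
	\left[
		\genericdiffop\kernel\!\left(\frac{\x - \samplevar_{\genindexi}}{\bandwidth}\right)
		-
		\smallexpectedvalueof{\genericdiffop\kernel\!\left(\frac{\x - \samplevar_{\genindexi}}{\bandwidth}\right)}
	\right]
	\,.
\end{equation*}
Writing $\vcclassmember_{\x}(\y) = \bandwidth^{-(\dimension+\difforder)/2}\,\genericdiffop\kernel((\x-\y)/\bandwidth)$, which is exactly a member of $\vcclassh$ as in~\eqref{eq:vc-class-scaled-kernels}, a direct computation gives $\supovervcclasshof{\empiricalprocess} = \scaling\,\supreumumnormerror{\genericdiffop}$, so the scaled supremum error is literally the supremum of the empirical process over $\vcclassh$.

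Next I would verify the hypotheses of Lemma~\ref{lm:chernozhukov} for $\vcclassh$. Pointwise measurability is routine. The \gls*{vc}-type property is inherited from the assumed \gls*{vc}-type property of $\kernelsvcclass$ in~\eqref{eq:vc-class-kernels}: passing from $\kernelsvcclass$ to $\vcclassh$ only rescales each function by the fixed constant $\bandwidth^{-(\dimension+\difforder)/2}$ and fixes the operator $\genericdiffop$, neither of which inflates covering numbers relative to the envelope, so the constants $\coveringconstanta, \coveringconstantnu$ transfer uniformly in $\bandwidth$. Boundedness of $\genericdiffop\kernel$ furnishes a constant envelope $\vcbound \asymp \bandwidth^{-(\dimension+\difforder)/2}$, and a change of variables together with boundedness of $\pdf$ yields a variance proxy with $\supovervcclassh\smallexpectedvalueof{\vcclassmember(\samplevar_1)^2} \leq \processstdv^2 \leq \vcbound^2$, as the lemma requires.

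With the hypotheses in place, Lemma~\ref{lm:chernozhukov} supplies, for each $\chernozhukovfreeconst \in (0,1)$, a variable $\gaussianprocessrvh \equaldistribution \supovervcclasshof{\gaussianprocess}$ coupled to $\supovervcclasshof{\empiricalprocess}$ within $\vcbound^{1/3}\processstdv^{2/3}\chernozhukovfreeconst^{-1/3}\samplesize^{-1/6}\log^{2/3}\samplesize$ (up to the universal constant $\chernozhukovfirstconst$) on an event of probability at least $1 - \chernozhukovsecondtconst\chernozhukovfreeconst$. To reach the Kolmogorov distance I would bound, for every $t \in \reals$, $\probabilityof{\supovervcclasshof{\empiricalprocess} \leq t}$ above and below by $\probabilityof{\gaussianprocessrvh \leq t \pm \delta}$, with $\delta$ the coupling deviation, plus the failure probability $\chernozhukovsecondtconst\chernozhukovfreeconst$; the two-sided shift is absorbed by the anti-concentration inequality for suprema of Gaussian processes, which bounds the probability that $\gaussianprocessrvh$ lies within $\delta$ of any point. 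Balancing $\chernozhukovfreeconst$ between the failure term and the anti-concentration term, and inserting the orders of $\vcbound$, $\processstdv$ and the constraint $\bandwidth^{-(\dimension+\difforder)} = \bigoh(\log\samplesize)$, yields $\bootstrapconvergencerateorder$, which tends to zero since the same constraint forces $\samplesize\bandwidth^{\dimension+\difforder}/\log^{7}\samplesize \goestoinfty$. For the final claim, fixing $\bandwidth$ makes $\vcclassh$ a fixed bounded \gls*{vc}-type, hence $\pdf$-Donsker, class, and the coupling deviation tends to zero in probability because $\vcbound, \processstdv$ are then constants; dividing $\supovervcclasshof{\empiricalprocess} = \scaling\,\supreumumnormerror{\genericdiffop}$ by the fixed factor $\sqrt{\bandwidth^{\dimension+\difforder}}$ and recalling $\normalizedgaussianprocessrvh = \gaussianprocessrvh/\sqrt{\bandwidth^{\dimension+\difforder}}$ gives $\convergesinprob{\rootsamplesize\,\supreumumnormerror{\genericdiffop}}{\normalizedgaussianprocessrvh}$.

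The main obstacle is the passage from the coupling to the Kolmogorov-distance bound: one must state the anti-concentration inequality in the correct scale-invariant form for suprema of non-standardized Gaussian processes and then perform the optimization over $\chernozhukovfreeconst$ so that the accumulated powers of $\log\samplesize$ and $\bandwidth$ collapse to $\bootstrapconvergencerateorder$. By comparison, the \gls*{vc}-type transfer from $\kernelsvcclass$ to $\vcclassh$ uniformly in $\bandwidth$ and the envelope and variance-proxy estimates are routine, though they must be carried out carefully.
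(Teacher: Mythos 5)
Your proposal is correct and takes essentially the same route as the paper's proof: you rewrite $\scaling \, \supreumumnormerror{\genericdiffop}$ as $\supovervcclasshof{\empiricalprocess}$, verify the \gls*{pm}, \gls*{vc}-type, envelope and variance-proxy conditions so that Lemma~\ref{lm:chernozhukov} yields the coupling with $\gaussianprocessrvh$, pass from the coupling to the Kolmogorov distance via Gaussian anti-concentration, optimize over $\chernozhukovfreeconst$ to collapse the logarithmic and bandwidth factors into $\bootstrapconvergencerateorder$, and obtain the fixed-$\bandwidth$ convergence in probability from the same coupling bound. The steps you flag as delicate are executed in the paper exactly as you anticipate — the anti-concentration conversion is~\cite[Lemma 10, supplementary material]{Chen2017}, derived from~\cite[Lemma 2.3]{Chernozhukov2014}, with $\expectedvalueof{\gaussianprocessrvh} = \bigoh(\sqrt{\log \samplesize})$ coming from Dudley's inequality under the constraint $\bandwidth^{-(\dimension + \difforder)} = \bigoh(\log \samplesize)$ — with only two small differences of detail: the paper obtains the \gls*{vc}-type property of the class for a general linear $\genericdiffop$ from closure of \gls*{vc}-type classes under summation and product (needed because $\genericdiffop\kernel$ is a linear combination of the $\indexeddiffoperator\kernel$ generating $\kernelsvcclass$, not a mere rescaling of a single member), and its variance proxy uses only the constant envelope, so the boundedness of $\pdf$ you invoke is not among, and need not be added to, the theorem's hypotheses.
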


A similar result establishes the asymptotic distribution for~\eqref{eq:inference-uniform-error-bootstrap}.

\begin{theorem}
	\label{th:gaussian-approximation-supnorm-bootstrap}
	Let $\genericdiffop$ denote any linear $\difforder$-th order differential operator.
	Let $\kernel \in \classcontinous{\difforder}{\rd}$ be a kernel with bounded $\difforder$-th derivatives.
	Further, suppose that the class $\kernelsvcclass$ in~\eqref{eq:vc-class-kernels} is \gls*{vc}-type.
	Moreover, let $\gaussianprocessbootstrap$ be a \gls*{gp} with the same properties as in Lemma~\ref{lm:chernozhukov}, indexed by $\vcclassh$ as in~\eqref{eq:vc-class-scaled-kernels}, and with covariance
	\begin{equation*}
		\covariance{\gaussianprocessbootstrap(\vcclassmember_1)}{\gaussianprocessbootstrap(\vcclassmember_2)}
		=
		\oneoversamplesize
		\sumitosamplesize
		\vcclassmember_1(\samplevarrealization_{\genindexi})
		\vcclassmember_2(\samplevarrealization_{\genindexi})
		-
		\frac{1}{\samplesize^2}
		\prod_{\genindexj = 1}^{2}
		\left(
		\sumitosamplesize
		\vcclassmember_{\genindexj}(\samplevarrealization_{\genindexi})
		\right)
		\,,
	\end{equation*}
	where $\samplevarrealization_{\genindexi}$ is the $\genindexi$-th observation in $\originalsample$.
	If $\bandwidth \equiv \bandwidth_{\samplesize}$ is a sequence with $\bandwidth \in (0, 1)$ and $\bandwidth^{-(\dimension + \difforder)} = \bigoh(\log \samplesize)$, then there exists $\gaussianprocessrvnh \equaldistribution \supovervcclasshof{\gaussianprocessbootstrap}$ such that, for $\samplesize$ sufficiently large,
	\begin{equation*}
		\kolmogorovdistance{\scaling \bootstrapsupreumumnormerror{\genericdiffop}}{\gaussianprocessrvnh}
		=
		\bigoh \left( \bootstrapconvergencerateorder \right)
		\tendstoassamplesizegoestoinfty
		0
		\,.
	\end{equation*}
\end{theorem}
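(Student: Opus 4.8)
The plan is to recast the scaled bootstrap supremum $\scaling\,\bootstrapsupreumumnormerror{\genericdiffop}$ as the supremum of a \emph{bootstrap} empirical process over the class $\vcclassh$ in~\eqref{eq:vc-class-scaled-kernels}, and then invoke the empirical‑bootstrap counterpart of Lemma~\ref{lm:chernozhukov} conditionally on $\originalsample$. The whole argument runs parallel to the proof of Theorem~\ref{th:gaussian-approximation-supnorm}, with the population law of $\pdf$ replaced throughout by the empirical bootstrap measure $\empiricalbootstrapprob$; this substitution is exactly what turns the population covariance~\eqref{eq:covariance-gaussian-process} into the empirical covariance displayed in the statement.

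First I would carry out the algebraic reduction. Writing $\vcclassmember_{\x}(\y) = \bandwidth^{-(\dimension + \difforder)/2}\genericdiffop\kernel((\x - \y)/\bandwidth)$ for each $\x \in \bumpdomain$, the same computation as in the direct case, based on $\genericdiffop\kernelh(\x - \y) = \bandwidth^{-(\dimension + \difforder)}(\genericdiffop\kernel)((\x - \y)/\bandwidth)$, gives
\begin{equation*}
	\scaling\left(\genericdiffop\bootstrapkdefnhof{\x} - \genericdiffop\conditionalkdefnhof{\x}\right)
	=
	\bootstrap{\empiricalprocess}(\vcclassmember_{\x})
	\,,
\end{equation*}
where $\bootstrap{\empiricalprocess}$ is the empirical process built from the bootstrap sample $\bootstrapsamplevar_1, \dots, \bootstrapsamplevar_{\samplesize} \followdistribution \empiricalbootstrapprob$ and centred at its conditional mean $\oneoversamplesize\sumitosamplesize\vcclassmember_{\x}(\samplevarrealization_{\genindexi})$, which is precisely $\genericdiffop\conditionalkdefnhof{\x}$ up to the scaling. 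Taking the supremum over $\x \in \bumpdomain$ yields $\scaling\,\bootstrapsupreumumnormerror{\genericdiffop} = \supovervcclasshof{\bootstrap{\empiricalprocess}}$.

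Next I would verify the coupling hypotheses conditionally on the data. The class $\vcclassh$ inherits the \gls*{vc}-type property from $\kernelsvcclass$ in~\eqref{eq:vc-class-kernels}, since it is a rescaling by the deterministic factor $\bandwidth^{-(\dimension + \difforder)/2}$; its constant envelope is $\vcbound \sameorder \bandwidth^{-(\dimension + \difforder)/2}$, finite because $\genericdiffop\kernel$ is bounded. The relevant variance proxy, now the empirical variance of $\vcclassmember_{\x}$ under $\empiricalbootstrapprob$, concentrates uniformly over $\x \in \bumpdomain$ around its population counterpart $\processstdv^2 \sameorder \bandwidth^{-\difforder}$ by a uniform law of large numbers over the \gls*{vc} class, and $\processstdv^2 \leq \vcbound^2$ holds for $\bandwidth \in (0,1)$. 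Feeding these ingredients into the Chernozhukov–Chetverikov–Kato coupling in its empirical‑bootstrap form~\cite{Chernozhukov2014, Chen2015, Chen2016}, conditionally on $\originalsample$, produces $\gaussianprocessrvnh \equaldistribution \supovervcclasshof{\gaussianprocessbootstrap}$ with $\gaussianprocessbootstrap$ centred Gaussian of empirical covariance — exactly as stated. Combining the coupling rate with the anti‑concentration inequality (used, as in Theorem~\ref{th:gaussian-approximation-supnorm}, to pass from a coupling in probability to a Kolmogorov distance bound) delivers the advertised $\bigoh(\bootstrapconvergencerateorder)$, which tends to zero under $\bandwidth^{-(\dimension + \difforder)} = \bigoh(\log \samplesize)$.

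The main obstacle I anticipate is the conditional nature of the coupling. Lemma~\ref{lm:chernozhukov} is stated for an \gls*{iid} sample drawn from a fixed law, whereas here the ``population'' is the random measure $\empiricalbootstrapprob$. One must therefore control the \gls*{vc}-type constants, the envelope, and especially the empirical variance proxy \emph{uniformly with high probability} over the draw of $\originalsample$, so that the coupling bound survives after integrating out the data. The uniform concentration of the empirical variance and the stability of the \gls*{vc} constants under resampling constitute the technical crux; the remaining steps are verbatim parallels of the direct case.
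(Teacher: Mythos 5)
Your reduction and overall plan coincide with the paper's proof: rewrite $\scaling \, \bootstrapsupreumumnormerror{\genericdiffop}$ as the supremum of a bootstrap empirical process over $\vcclassh$, apply Lemma~\ref{lm:chernozhukov} conditionally with $\empiricalbootstrapprob$ playing the role of the population law (which is exactly what turns~\eqref{eq:covariance-gaussian-process} into the empirical covariance in the statement), and pass to the Kolmogorov distance via anti-concentration as in Theorem~\ref{th:gaussian-approximation-supnorm}. Where the proposal goes astray is in your diagnosis of the ``technical crux.'' You propose to control the empirical variance proxy by a uniform law of large numbers so that the coupling ``survives after integrating out the data.'' This is both unnecessary and incompatible with the statement being proved: the theorem is purely conditional and, as the paper stresses immediately after stating it, holds for \emph{any} observations $\originalsample$ — not merely with high probability over their draw (randomness of $\originalsample$ enters only later, in Theorem~\ref{th:bootstrap-approximation}). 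A ULLN-based argument would deliver only a with-high-probability version, would implicitly require $\originalsample$ to be an i.i.d.\ sample from a bounded density — hypotheses absent from the theorem — and your population proxy $\processstdv^2 \sameorder \bandwidth^{-\difforder}$ is beside the point under $\empiricalbootstrapprob$, where no density exists. The correct observation, and the one the paper makes, is that no concentration is needed at all: the variance proxy is bounded \emph{deterministically} by the squared envelope, uniformly in the underlying probability measure (indeed $\frac{1}{\samplesize}\sum_{\genindexi=1}^{\samplesize} \vcclassmember(\samplevarrealization_{\genindexi})^2 \leq \sup_{\y} \vcclassmember(\y)^2$ for every realization), and the \gls*{vc} characteristics of $\vcclassh$ are fixed independently of the data; hence every constant and threshold in Lemma~\ref{lm:chernozhukov} is universal or depends only on $\vcclassh$.

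Two further points where the paper's route differs usefully from yours. First, you leave implicit something the paper treats explicitly: conditioning on $\originalsample$ while the same index $\samplesize$ drives both the data and the resample requires care, because the ``for $\samplesize$ sufficiently large'' in the coupling must not depend on the conditioning data. The paper handles this by decoupling — fixing $\originalsample$, running the argument for an independent resample size $\resamplesize \goestoinfty$, and only then setting $\resamplesize = \samplesize$, which is legitimate precisely because of the data-free constants noted above. Second, rather than feeding $\vcclassh$ with its $\bandwidth$-dependent envelope $\sameorder \bandwidth^{-(\dimension+\difforder)/2}$ directly into the coupling, the paper applies Lemma~\ref{lm:chernozhukov} to the rescaled class $\vcclasshprime$ with a constant envelope and proxy $\processstdv \sameorder \sqrt{\bandwidth^{\dimension + \difforder}}$, then divides through by $\sqrt{\bandwidth^{\dimension + \difforder}}$; this keeps all bookkeeping deterministic and produces the rate $\bootstrapconvergencerateorder$ cleanly. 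With the decoupling device and the deterministic envelope bound in place of your concentration step, your outline matches the paper's proof.
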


Theorem~\ref{th:gaussian-approximation-supnorm-bootstrap} holds for \textit{any} observations $\originalsample$.
The applicability of this theorem relies on the assumption that $\weaklyconverges{\gaussianprocessrvnh}{\gaussianprocessrvh}$ \gls*{as}
This connection crystallises in the following result, which can be straightly derived from Theorem~\ref{th:gaussian-approximation-supnorm} and Theorem~\ref{th:gaussian-approximation-supnorm-bootstrap}.

\begin{theorem}
	\label{th:bootstrap-approximation}
	Let $\genericdiffop$ denote any linear $\difforder$-th order differential operator.
	Let $\kernel \in \classcontinous{\difforder}{\rd}$ be a kernel with bounded $\difforder$-th derivatives.
	Further, suppose that the class $\kernelsvcclass$ in~\eqref{eq:vc-class-kernels} is \gls*{vc}-type.
	Let $\bandwidth \equiv \bandwidth_{\samplesize}$ be a sequence with $\bandwidth \in (0, 1)$ and $\bandwidth^{-(\dimension + \difforder)} = \bigoh(\log \samplesize)$.
	Moreover, let us write $\diffcdfgaussiansup \equiv \kolmogorovdistance{\gaussianprocessrvnh}{\gaussianprocessrvh}$, where $\gaussianprocessrvh$ and $\gaussianprocessrvnh$ are as in Theorem~\ref{th:gaussian-approximation-supnorm} and Theorem~\ref{th:gaussian-approximation-supnorm-bootstrap}, respectively.
	Let us allow $\originalsample$ to vary as a random sample from the \gls*{pdf} $\pdf$ underlying the covariance structure~\eqref{eq:covariance-gaussian-process} of $\gaussianprocessrvh$.
	Further, suppose that $\diffcdfgaussiansup = \littleohone$ \gls*{as} under the previous hypotheses on $\bandwidth$.
	Then, for $\samplesize$ sufficiently large,
	\begin{equation*}
		\kolmogorovdistance{\scaling \bootstrapsupreumumnormerror{\genericdiffop}}{\gaussianprocessrvh}
		=
		\bigoh \left( \diffcdfgaussiansup + \bootstrapconvergencerateorder \right)
		\almostsureconverges
		0
		\,.
	\end{equation*}
\end{theorem}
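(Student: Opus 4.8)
The plan is to treat Theorem~\ref{th:bootstrap-approximation} as a corollary of the two preceding Gaussian-approximation results, bridged by the intermediate bootstrap Gaussian variable $\gaussianprocessrvnh$. The only tool I would need is the triangle inequality for the Kolmogorov distance, which is valid because $\kolmogorovdistance{\funcarg}{\funcarg}$ is a supremum over $\cdfarg \in \reals$ of the gap between two cumulative distribution functions, so that the scalar triangle inequality for the absolute value applies pointwise in $\cdfarg$ and survives taking the supremum. Instantiating it with the scaled bootstrap error, the bootstrap Gaussian variable, and the target Gaussian variable gives
\[
\kolmogorovdistance{\scaling \bootstrapsupreumumnormerror{\genericdiffop}}{\gaussianprocessrvh}
\leq
\kolmogorovdistance{\scaling \bootstrapsupreumumnormerror{\genericdiffop}}{\gaussianprocessrvnh}
+
\kolmogorovdistance{\gaussianprocessrvnh}{\gaussianprocessrvh}
\,.
\]

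Next I would bound the two summands separately. The first is handled by Theorem~\ref{th:gaussian-approximation-supnorm-bootstrap}, yielding $\kolmogorovdistance{\scaling \bootstrapsupreumumnormerror{\genericdiffop}}{\gaussianprocessrvnh} = \bigoh(\bootstrapconvergencerateorder)$; since that theorem holds for \emph{any} realization $\originalsample$, this estimate is uniform over the data and therefore holds \gls*{as} once $\originalsample$ is drawn at random from $\pdf$. The second summand is, by the very definition adopted in the statement, equal to $\diffcdfgaussiansup$, which the hypotheses assume to be $\littleohone$ \gls*{as} Combining the two contributions gives $\kolmogorovdistance{\scaling \bootstrapsupreumumnormerror{\genericdiffop}}{\gaussianprocessrvh} = \bigoh(\diffcdfgaussiansup + \bootstrapconvergencerateorder)$. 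To finish, I would check that this bound vanishes: $\diffcdfgaussiansup = \littleohone$ by assumption, while the bandwidth condition $\bandwidth^{-(\dimension + \difforder)} = \bigoh(\log \samplesize)$ forces $\samplesize \bandwidth^{\dimension + \difforder} \gtrsim \samplesize / \log \samplesize$, so that $\log^7 \samplesize / (\samplesize \bandwidth^{\dimension + \difforder}) \lesssim \log^8 \samplesize / \samplesize \to 0$ and hence $\bootstrapconvergencerateorder \to 0$.

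I do not expect any genuine analytic obstacle, since the result is assembled purely from Theorem~\ref{th:gaussian-approximation-supnorm} and Theorem~\ref{th:gaussian-approximation-supnorm-bootstrap}. The only point demanding care is the almost-sure bookkeeping: one must transfer the data-uniform estimate of Theorem~\ref{th:gaussian-approximation-supnorm-bootstrap} into an \gls*{as} statement over the random draw of $\originalsample$, and then lean on the standing assumption $\diffcdfgaussiansup = \littleohone$ \gls*{as} This last hypothesis is where the real substance lies, since it encodes the nontrivial fact that the empirical bootstrap covariance consistently approximates the population covariance~\eqref{eq:covariance-gaussian-process} underlying $\gaussianprocessrvh$; the present theorem merely propagates that consistency through the two Gaussian couplings.
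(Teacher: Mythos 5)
Your proof is correct and coincides with the paper's intended argument: the paper gives no separate written proof, stating only that the result ``can be straightly derived'' from Theorem~\ref{th:gaussian-approximation-supnorm} and Theorem~\ref{th:gaussian-approximation-supnorm-bootstrap}, and your triangle-inequality decomposition of the Kolmogorov distance through $\gaussianprocessrvnh$ --- using the data-uniform bound of Theorem~\ref{th:gaussian-approximation-supnorm-bootstrap} (whose constants, as established in its proof, depend only on $\vcclassh$ and not on $\originalsample$) together with the standing assumption $\diffcdfgaussiansup = \littleohone$ \gls*{as} --- is exactly that derivation. Your final check that $\bandwidth^{-(\dimension + \difforder)} = \bigoh(\log \samplesize)$ forces $\bootstrapconvergencerateorder \tendstoassamplesizegoestoinfty 0$ correctly completes the almost-sure convergence claim.
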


We can state sufficient conditions under which $\diffcdfgaussiansup$ would converge to zero \gls*{as}
Corollary~\ref{cor:bootstrap-approximation} gathers all the previous findings in an easy, ready-to-use form.

\begin{corollary}
	\label{cor:bootstrap-approximation}
	In the hypotheses of Theorem~\ref{th:bootstrap-approximation}, if we further take a constant $\bandwidth$ and define $\normalizedgaussianprocessrvh = \gaussianprocessrvh / \sqrt{\bandwidth^{\dimension + \difforder}}$, then
	\begin{equation*}
		\begin{aligned}
			\kolmogorovdistance{\rootsamplesize \ \supreumumnormerror{\genericdiffop}}{\normalizedgaussianprocessrvh}
			 & \tendstoassamplesizegoestoinfty
			0
			\\
			\kolmogorovdistance{\rootsamplesize \ \bootstrapsupreumumnormerror{\genericdiffop}}{\normalizedgaussianprocessrvh}
			 & \almostsureconverges
			0
		\end{aligned}
		\,.
	\end{equation*}
	In particular, $\weaklyconverges{\rootsamplesize \ \bootstrapsupreumumnormerror{\genericdiffop}}{\normalizedgaussianprocessrvh}$ \gls*{as}
	Moreover, $\convergesinprob{\rootsamplesize \ \supreumumnormerror{\genericdiffop}}{\normalizedgaussianprocessrvh}$.
	Finally, $\normalizedgaussianprocessrvh$ has a continuous and strictly increasing \gls*{cdf}.
\end{corollary}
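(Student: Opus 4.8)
The plan is to reduce both displayed limits to the conclusions of Theorem~\ref{th:gaussian-approximation-supnorm} and Theorem~\ref{th:bootstrap-approximation}, exploiting the elementary fact that the Kolmogorov distance is unchanged when both of its arguments are multiplied by the same positive constant. Because $\bandwidth$ is now held constant, the scaling factor factorises as $\scaling = \rootsamplesize \sqrt{\bandwidth^{\dimension + \difforder}}$, and by construction $\gaussianprocessrvh = \sqrt{\bandwidth^{\dimension + \difforder}} \, \normalizedgaussianprocessrvh$. Writing $c = \sqrt{\bandwidth^{\dimension + \difforder}} > 0$, both $\scaling \supreumumnormerror{\genericdiffop} = c \, \rootsamplesize \ \supreumumnormerror{\genericdiffop}$ and $\gaussianprocessrvh = c \, \normalizedgaussianprocessrvh$ carry the same factor $c$, so that
\begin{equation*}
	\kolmogorovdistance{\rootsamplesize \ \supreumumnormerror{\genericdiffop}}{\normalizedgaussianprocessrvh}
	=
	\kolmogorovdistance{\scaling \supreumumnormerror{\genericdiffop}}{\gaussianprocessrvh}
	\,,
\end{equation*}
and the same identity holds verbatim with $\supreumumnormerror{\genericdiffop}$ replaced by $\bootstrapsupreumumnormerror{\genericdiffop}$.

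First I would check that a constant $\bandwidth \in (0, 1)$ complies with the standing hypotheses: the constraint $\bandwidth^{-(\dimension + \difforder)} = \bigoh(\log \samplesize)$ holds trivially, since its left-hand side is constant while $\log \samplesize \goestoinfty$, and consequently $\bootstrapconvergencerateorder \tendstoassamplesizegoestoinfty 0$. The rescaling identity then turns the bound of Theorem~\ref{th:gaussian-approximation-supnorm} into the first, deterministic limit, and the bound of Theorem~\ref{th:bootstrap-approximation}---whose hypothesis $\diffcdfgaussiansup = \littleohone$ \gls*{as} is assumed throughout---into $\kolmogorovdistance{\rootsamplesize \ \bootstrapsupreumumnormerror{\genericdiffop}}{\normalizedgaussianprocessrvh} = \bigoh(\diffcdfgaussiansup + \bootstrapconvergencerateorder) \almostsureconverges 0$.

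The remaining assertions follow readily. Convergence to zero in the Kolmogorov distance is uniform convergence of the distribution functions towards that of $\normalizedgaussianprocessrvh$, hence in particular convergence in distribution; as the bootstrap limit holds \gls*{as} over the draw of $\originalsample$, for almost every sample path the conditional laws of $\rootsamplesize \ \bootstrapsupreumumnormerror{\genericdiffop}$ converge weakly to $\normalizedgaussianprocessrvh$, that is, $\weaklyconverges{\rootsamplesize \ \bootstrapsupreumumnormerror{\genericdiffop}}{\normalizedgaussianprocessrvh}$ \gls*{as} The statement $\convergesinprob{\rootsamplesize \ \supreumumnormerror{\genericdiffop}}{\normalizedgaussianprocessrvh}$ is nothing but the final, constant-bandwidth assertion of Theorem~\ref{th:gaussian-approximation-supnorm}, which transfers unchanged.

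It remains to establish that $\normalizedgaussianprocessrvh$ has a continuous and strictly increasing \gls*{cdf}, and this is the step I expect to require the most care, as it draws on properties of Gaussian suprema external to the paper's own estimates. Here $\normalizedgaussianprocessrvh = c^{-1} \gaussianprocessrvh$ is a positive multiple of $\gaussianprocessrvh \equaldistribution \supovervcclasshof{\gaussianprocess}$, the supremum of the absolute value of the centred tight \gls*{gp} $\gaussianprocess$ over the family $\vcclassh$ of scaled kernel-derivative functions. Provided this process is non-degenerate---which holds in our setting because $\vcclassh$ comprises nontrivial functions---the law of the supremum of a Gaussian process is absolutely continuous and its distribution function is strictly increasing on the interior of its support; since multiplication by the positive constant $c^{-1}$ preserves both properties, the claim for $\normalizedgaussianprocessrvh$ follows, closing the argument.
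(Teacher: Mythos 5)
Your scale-invariance reduction of the two displayed limits, the transfer of the convergence-in-probability claim from the constant-bandwidth part of Theorem~\ref{th:gaussian-approximation-supnorm}, and the observation that a constant $\bandwidth \in (0, 1)$ trivially satisfies $\bandwidth^{-(\dimension + \difforder)} = \bigoh(\log \samplesize)$ are all fine. The genuine gap is in the bootstrap limit: you read $\diffcdfgaussiansup = \littleohone$ almost surely as an inherited hypothesis of Theorem~\ref{th:bootstrap-approximation} and invoke it verbatim. The corollary is intended the other way around: as announced just before its statement (``We can state sufficient conditions under which $\diffcdfgaussiansup$ would converge to zero \gls*{as}''), constant $\bandwidth$ is such a sufficient condition, and the entire content of the paper's proof is the verification of it --- the proof opens with ``It suffices to check that $\diffcdfgaussiansup$ converges to zero.'' Indeed, the corollary is later applied in Theorem~\ref{th:inference-laplacian-bumps} (and, through the same mechanism, in Theorems~\ref{th:inference-eigenvalue-bumps} and~\ref{th:inference-gaussian-bumps}), whose hypotheses do \emph{not} include $\diffcdfgaussiansup = \littleohone$; under your reading the corollary would be trivialized and could not support those downstream results. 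The missing argument runs: by the law of large numbers the empirical covariances converge almost surely to the population covariances~\eqref{eq:covariance-gaussian-process}, so the bootstrap process $\gaussianprocessbootstrap$ converges weakly to $\gaussianprocess$; the continuous mapping theorem~\cite[Theorem 1.3.6]{vanderVaart1996} passes this to the suprema; and, because $\gaussianprocessrvh$ has a continuous \gls*{cdf}, weak convergence upgrades to Kolmogorov-distance convergence by~\cite[Lemma 2.11]{vanderVaart1998}, which is exactly $\diffcdfgaussiansup = \littleohone$ \gls*{as}

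Note that continuity of the \gls*{cdf} of $\gaussianprocessrvh$ is therefore load-bearing twice: inside the step just described and in the corollary's final assertion. Your treatment of that assertion points at the right phenomenon but is too loose: ``absolutely continuous and strictly increasing on the interior of the support'' for a generic Gaussian supremum is weaker than what is claimed and does not by itself exclude an atom at the left endpoint of the support. The paper instead cites a result tailored to the sup-functional of a Gaussian process,~\cite[Corollary 1.3 + Remark 4.1]{Gaenssler2007}, with non-degeneracy secured because the regularity assumptions on $\kernel$ rule out $\kernelsvcclass = \{ \lambdafunction{\y}{0} \}$, so that $\gaussianprocessrvh$ cannot be a Dirac mass at zero; the continuity and strict increase then pass to $\normalizedgaussianprocessrvh$ by positive scaling, as you correctly note at the end.
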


\subsubsection{Inference for curvature bumps}

The results from the previous section hold the key to ensuring~\eqref{eq:asymptotic-validity-inference} for curvature bumps.

\vspace{1em}
\paragraph{\textit{Laplacian bumps}}

Theorem~\ref{th:inference-laplacian-bumps} straightly follows from Corollary~\ref{cor:bootstrap-approximation} and Theorem~\ref{th:abstract-inference-indirect}.

\begin{theorem}
	\label{th:inference-laplacian-bumps}
	Let us fix $\bandwidth \in (0, 1)$.
	Let $\bootstrapsupreumumnormerror{\funcarg}$ be as defined in~\eqref{eq:inference-uniform-error-bootstrap} with \gls*{kde} based on a kernel $\kernel \in \classcontinous{2}{\rd}$ with bounded second derivatives.
	Taking $\difforder = 2$, suppose that the class $\kernelsvcclass$ in~\eqref{eq:vc-class-kernels} is \gls*{vc}-type.
	For any $\oneminusconfidence \in (0, 1)$, define the margin $\approxmargin = \quantilefuncofrv{\bootstrapsupreumumnormerror{\laplacian}}{\confidence}$.
	Then, for all $\oneminusconfidence \in (0, 1)$, the asymptotic validity of the inference scheme~\eqref{eq:asymptotic-validity-inference} holds \gls*{as} for the smoothed version of the Laplacian bump~\eqref{eq:laplacian-bump}.
\end{theorem}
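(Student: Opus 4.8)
The plan is to verify the three hypotheses~\ref{itm:bounding-rv}--\ref{itm:convergent-approx-margin} of Theorem~\ref{th:abstract-inference-indirect} for the curvature functional $\curvaturefunctional \equiv \laplacian$ and then invoke that theorem verbatim. Everything hinges on Corollary~\ref{cor:bootstrap-approximation}, which I would apply with the linear second-order differential operator $\genericdiffop = \laplacian$ and $\difforder = 2$. First I would note that the present assumptions are precisely those under which the corollary applies: a kernel $\kernel \in \classcontinous{2}{\rd}$ with bounded second derivatives, the \gls*{vc}-type class $\kernelsvcclass$ from~\eqref{eq:vc-class-kernels}, and a fixed $\bandwidth \in (0, 1)$; the growth condition $\bandwidth^{-(\dimension + 2)} = \bigoh(\log \samplesize)$ is automatic here, since $\bandwidth^{-(\dimension + 2)}$ is constant while $\log \samplesize \goestoinfty$. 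Because $\laplacian$ is linear, the curvature error $\superrorcurvature$ for $\curvaturefunctional = \laplacian$ coincides with the operator error $\supreumumnormerror{\laplacian}$ of~\eqref{eq:inference-uniform-error}. Consequently, the corollary supplies exactly the ingredients I need: $\normalizedgaussianprocessrvh$ has a continuous, strictly increasing \gls*{cdf}; $\rootsamplesize\,\supreumumnormerror{\laplacian}$ converges in probability, and a fortiori weakly, to $\normalizedgaussianprocessrvh$; and, conditionally on $\originalsample$, $\weaklyconverges{\rootsamplesize\,\bootstrapsupreumumnormerror{\laplacian}}{\normalizedgaussianprocessrvh}$ \gls*{as}

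For hypothesis~\ref{itm:bounding-rv} I would take $\marginrv \eqdef \supreumumnormerror{\laplacian}$, so that $\supreumumnormerrorfunctional{\laplacian} \leq \marginrv$ holds with equality, while $\weaklyconverges{\rootsamplesize\marginrv}{\normalizedgaussianprocessrvh}$ is immediate from the previous paragraph; thus $\limitingmarginrv = \normalizedgaussianprocessrvh$ provides the required continuous, strictly increasing \gls*{cdf}. For hypothesis~\ref{itm:margin-bounding-quantile} the natural choice is the deterministic margin $\margin \eqdef \quantilefuncofrv{\marginrv}{\confidence}$, which renders $\margin \geq \quantilefuncofrv{\marginrv}{\confidence}$ trivially true. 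To obtain $\limsamplesizeinf \margin = 0$ I would apply the standard quantile-convergence lemma~\cite[p. 304]{vanderVaart1998}: since $\weaklyconverges{\rootsamplesize\marginrv}{\normalizedgaussianprocessrvh}$ and the limit \gls*{cdf} is continuous and strictly increasing, $\quantilefuncofrv{\rootsamplesize\marginrv}{\confidence} \goesto \quantilefuncofrv{\normalizedgaussianprocessrvh}{\confidence}$, a finite constant, so that $\margin = \samplesize^{-1/2}\,\quantilefuncofrv{\rootsamplesize\marginrv}{\confidence} \goesto 0$.

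Hypothesis~\ref{itm:convergent-approx-margin} is the crux, as it is where the bootstrap enters. With $\approxmargin = \quantilefuncofrv{\bootstrapsupreumumnormerror{\laplacian}}{\confidence}$ as prescribed, I would apply quantile convergence once more, now to the bootstrap leg: conditionally on $\originalsample$, $\weaklyconverges{\rootsamplesize\,\bootstrapsupreumumnormerror{\laplacian}}{\normalizedgaussianprocessrvh}$ \gls*{as}, so that $\rootsamplesize\approxmargin = \quantilefuncofrv{\rootsamplesize\,\bootstrapsupreumumnormerror{\laplacian}}{\confidence} \goesto \quantilefuncofrv{\normalizedgaussianprocessrvh}{\confidence}$ \gls*{as} Combined with $\rootsamplesize\margin \goesto \quantilefuncofrv{\normalizedgaussianprocessrvh}{\confidence}$ from the preceding step, this yields $\rootsamplesize\,\smallabsval{\approxmargin - \margin} \goesto 0$, i.e.\ $\smallabsval{\approxmargin - \margin} = \littleohinvsqrtn$, \gls*{as} The delicate points are guaranteeing that both quantile sequences converge to the \emph{same} value $\quantilefuncofrv{\normalizedgaussianprocessrvh}{\confidence}$ --- which is exactly why the strict monotonicity and continuity of the \gls*{cdf} of $\normalizedgaussianprocessrvh$ are indispensable --- and propagating the \gls*{as} qualifier attached to the bootstrap convergence, since it fixes the \gls*{as} mode of the final statement. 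With hypotheses~\ref{itm:bounding-rv}--\ref{itm:convergent-approx-margin} established, Theorem~\ref{th:abstract-inference-indirect} delivers the asymptotic validity of~\eqref{eq:asymptotic-validity-inference} for every $\oneminusconfidence \in (0, 1)$, as claimed.
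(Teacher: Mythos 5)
Your proposal is correct and follows exactly the route the paper intends: the paper gives no separate proof of Theorem~\ref{th:inference-laplacian-bumps}, stating only that it ``straightly follows from Corollary~\ref{cor:bootstrap-approximation} and Theorem~\ref{th:abstract-inference-indirect}'', and your verification of hypotheses \ref{itm:bounding-rv}--\ref{itm:convergent-approx-margin} via $\marginrv = \supreumumnormerror{\laplacian}$ (using linearity of $\laplacian$), the quantile-convergence lemma, and the a.s.\ bootstrap weak limit is precisely the intended expansion. You also correctly flag the two points the one-line remark glosses over: the fixed-$\bandwidth$ case trivially satisfying the growth condition, and the propagation of the a.s.\ qualifier from the bootstrap convergence to the final statement.
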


\vspace{1em}
\paragraph{\textit{Concave bumps \& convex dips}}

Concave bumps and convex dips are more involved.
To obtain a parallel result to Theorem~\ref{th:inference-laplacian-bumps}, we will borrow the \gls*{tvar} concept from financial risk management~\cite{Dhaene2006}.
The \gls*{tvar} at level $\p \in (0, 1)$ of \pgls*{rv} $\randomvarx$ is defined as
\begin{equation*}
	\tvar{\p}{\randomvarx}
	=
	\frac{1}{1 - \p}
	\int_{\p}^1
	\quantilefuncofp{\q}{\randomvarx}
	\ d\q
	\,.
\end{equation*}
The \gls*{tvar} is utilised to \textit{aggregate} risks governed by an \textit{unknown} dependence structure, for it satisfies $\tvar{\p}{\randomvarx} \geq \quantilefuncofrv{\randomvarx}{\p}$ and is sub-additive~\cite{Dhaene2006}.
Contrary to quantiles, weak convergence does not guarantee \gls*{tvar} convergence.
Lemma~\ref{lm:tvar-convergence} requires the random variables to be \gls*{aui}~\cite[p. 17]{vanderVaart1998}.

\begin{lemma}
	\label{lm:tvar-convergence}
	Let $\{\randomvarx_{\genindexn}\}_{\genindexn = 1}^{\infty}$ be an \gls*{aui} sequence of random variables satisfying $\weaklyconverges{\randomvarx_{\genindexn}}{\randomvarx}$ for some \gls*{rv} $\randomvarx$ with a strictly increasing \gls*{cdf}.
	Then, $\limninf \tvar{\p}{\randomvarx_{\genindexn}} = \tvar{\p}{\randomvarx}$ for all $\p \in (0, 1)$, being the limit finite.
\end{lemma}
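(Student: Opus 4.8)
The plan is to pass to the inverse-transform (quantile) representation of the \gls*{tvar}, in which the integrand becomes a single sequence of random variables on one fixed probability space, and then to interchange limit and integral by Vitali's convergence theorem, for which the \gls*{aui} hypothesis is tailor-made.

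First I would fix $U \sim \mathrm{Uniform}(0,1)$ on the space $((0,1), \mathrm{Leb})$ and record the elementary identity
\begin{equation*}
	\tvar{\p}{\randomvarx_{\genindexn}}
	=
	\frac{1}{1 - \p}
	\expectedvalueof{
		\quantilefuncofp{U}{\randomvarx_{\genindexn}} \, \mathbf{1}\{U > \p\}
	}
	\,,
\end{equation*}
and likewise for the limit $\randomvarx$. The crucial feature is the distributional identity $\quantilefuncofp{U}{\randomvarx_{\genindexn}} \equaldistribution \randomvarx_{\genindexn}$ (and $\quantilefuncofp{U}{\randomvarx} \equaldistribution \randomvarx$), which transports the whole problem onto the \emph{fixed} space $((0,1), \mathrm{Leb})$, where the object of interest is the expectation of a single sequence of functions.

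Next I would establish pointwise convergence of the quantile functions. Since $\randomvarx$ has a continuous and strictly increasing \gls*{cdf}, its quantile function $\q \mapsto \quantilefuncofp{\q}{\randomvarx}$ is continuous on $(0,1)$; because $\weaklyconverges{\randomvarx_{\genindexn}}{\randomvarx}$ is equivalent to convergence of the quantile functions at every continuity point of the limit \cite{vanderVaart1998}, we obtain $\quantilefuncofp{\q}{\randomvarx_{\genindexn}} \to \quantilefuncofp{\q}{\randomvarx}$ for \emph{every} $\q \in (0,1)$. Hence $\quantilefuncofp{U}{\randomvarx_{\genindexn}} \to \quantilefuncofp{U}{\randomvarx}$ almost everywhere on $((0,1), \mathrm{Leb})$, and the same holds after multiplying by $\mathbf{1}\{U > \p\}$. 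For the uniform-integrability ingredient I would use that uniform integrability depends only on the laws: since $\quantilefuncofp{U}{\randomvarx_{\genindexn}} \equaldistribution \randomvarx_{\genindexn}$, the \gls*{aui} hypothesis on $\{\randomvarx_{\genindexn}\}$ is inherited by $\{\quantilefuncofp{U}{\randomvarx_{\genindexn}}\}$, and dominating by $\smallabsval{\quantilefuncofp{U}{\randomvarx_{\genindexn}}}$ shows the truncated family $\quantilefuncofp{U}{\randomvarx_{\genindexn}} \mathbf{1}\{U > \p\}$ is uniformly integrable as well (discarding at most finitely many terms if needed). Combining a.e.\ convergence with uniform integrability, Vitali's theorem yields $L^1$-convergence, hence convergence of the expectations above and thus $\limninf \tvar{\p}{\randomvarx_{\genindexn}} = \tvar{\p}{\randomvarx}$. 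Finiteness of the limit follows since \gls*{aui} forces $\sup_{\genindexn} \expectedvalueof{\smallabsval{\randomvarx_{\genindexn}}} < \infty$, which with weak convergence gives $\expectedvalueof{\smallabsval{\randomvarx}} < \infty$ and therefore $\int_{\p}^{1} \smallabsval{\quantilefuncofp{\q}{\randomvarx}} \, d\q \leq \expectedvalueof{\smallabsval{\randomvarx}} < \infty$.

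The main obstacle is the interchange of limit and integral: weak convergence alone controls the quantiles only pointwise and says nothing about the integrability of $\q \mapsto \quantilefuncofp{\q}{\randomvarx_{\genindexn}}$ over $[\p, 1]$, so some uniform tail control is indispensable—this is precisely why the plain statement can fail for TVaR even though it holds for quantiles. The \gls*{aui} hypothesis is exactly the condition that, through the distributional identity $\quantilefuncofp{U}{\randomvarx_{\genindexn}} \equaldistribution \randomvarx_{\genindexn}$, converts into uniform integrability of the integrands and unlocks Vitali's theorem; once this is in place, the remaining steps are routine.
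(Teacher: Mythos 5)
Your proof is correct, but it takes a genuinely different route from the paper's. The paper starts from the decomposition $\tvar{\p}{\randomvarx_{\genindexn}} = \quantilefuncofrv{\randomvarx_{\genindexn}}{\p} + (1-\p)^{-1}\expectedvalueof{\positivepart{\randomvarx_{\genindexn} - \quantilefuncofrv{\randomvarx_{\genindexn}}{\p}}}$ (Theorem 2.1 of Dhaene et al.), handles the quantile term exactly as you do (continuity of the limit quantile function plus Lemma 21.2 of van der Vaart), and then treats the expected-shortfall term inside the weak-convergence toolkit: Slutsky's lemma shows $\{\randomvarx_{\genindexn} - \quantilefuncofrv{\randomvarx_{\genindexn}}{\p}\}_{\genindexn}$ is \gls*{aui}, domination by the absolute value transfers this to the positive parts, and Theorem 2.20 of van der Vaart (weak convergence plus \gls*{aui} implies convergence of means) combined with the continuous mapping theorem applied to $\positivepart{\funcarg}$ yields the limit. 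You instead transport the whole problem onto the fixed space $((0,1), \mathrm{Leb})$ via the inverse-transform identity $\quantilefuncofp{U}{\randomvarx_{\genindexn}} \equaldistribution \randomvarx_{\genindexn}$, obtain almost-everywhere convergence of the integrands from quantile convergence at \emph{every} $\q \in (0,1)$, and finish with a single application of Vitali's theorem, correctly noting that \gls*{aui} is a property of the laws alone (so it is inherited by the quantile transforms), that it survives multiplication by the fixed indicator $\mathbf{1}\{U > \p\}$, and that the ``asymptotic'' qualifier permits discarding finitely many possibly non-integrable early terms. Your version is the more unified argument — one limit--integral interchange instead of a decomposition plus three separate convergence facts — and your finiteness argument (Fatou-type bound giving $\expectedvalueof{\smallabsval{\randomvarx}} < \infty$) matches the paper's conclusion. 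What the paper's route buys is that it needs quantile convergence only at the single point $\p$, and that the Dhaene identity and the quantile/\acrshort*{tvar} comparison it invokes are reused elsewhere in the inference section (notably in the proof of Theorem~\ref{th:inference-eigenvalue-bumps}), so the two proofs differ in organization rather than in the underlying ingredients, which in both cases are quantile convergence under a strictly increasing limit \gls*{cdf} and the \gls*{aui} control of tails.
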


Then, Lemma~\ref{lm:tvar-convergence} allows proving the main result.

\begin{theorem}
	\label{th:inference-eigenvalue-bumps}
	Let us fix $\bandwidth \in (0, 1)$.
	Let $\supreumumnormerror{\funcarg}$ and $\bootstrapsupreumumnormerror{\funcarg}$ be as defined in~\eqref{eq:inference-uniform-error} and~\eqref{eq:inference-uniform-error-bootstrap} with \gls*{kde} based on the same kernel $\kernel \in \classcontinous{2}{\rd}$ with bounded second derivatives.
	Taking $\difforder = 2$, suppose that the class $\kernelsvcclass$ in~\eqref{eq:vc-class-kernels} is \gls*{vc}-type.
	For any $\oneminusconfidence \in (0, 1)$, define the margin
	\begin{equation*}
		\approxmargin
		=
		\sumitod
		\sum_{\genindexj = 1}^{\dimension}
		\tvar{\confidence}{\bootstrapsupnormderivativesij}
		\,,
	\end{equation*}
	where $\secondorderdifffunctional{\genindexi}{\genindexj}$ denotes second-order partial differentiation in the $\genindexi$ and $\genindexj$ variables.
	Moreover, let us assume the following:
	\begin{enumerate}
		\item \label{itm:continuous-cdf} Letting $\normalizedgaussianprocessrvh[\secondorderdifffunctional{\genindexi}{\genindexj}]$ be the \gls*{rv} such that $\convergesinprob{\rootsamplesize \ \supnormsecondderivativesij}{\sumsupgaussians}$, the sum \gls*{rv} $\limitingmarginrv = \sumijtod \normalizedgaussianprocessrvh[\secondorderdifffunctional{\genindexi}{\genindexj}]$ has a continuous and strictly increasing \gls*{cdf}.
		      \vspace{1ex}
		\item \label{itm:aui-errors} For each pair $(\genindexi, \genindexj)$, we have:
		      \vspace{1ex}
		      \begin{itemize}[leftmargin=2em]
			      \renewcommand{\labelitemi}{\tiny$\blacklozenge$}
			      \item $\{ \rootsamplesize \ \supnormsecondderivativesij \}_{\samplesize = 1}^{\infty}$ is \gls*{aui}
			      \item $\{ \rootsamplesize \ \bootstrapsupnormderivativesij \}_{\samplesize = 1}^{\infty}$ is \gls*{aui} \gls*{as}
		      \end{itemize}
	\end{enumerate}
	\vspace{1ex}
	Then, for all $\oneminusconfidence \in (0, 1)$, the asymptotic validity of the inference scheme~\eqref{eq:asymptotic-validity-inference} holds \gls*{as} for the smoothed version of the concave bump~\eqref{eq:concave-bump} and the convex dip~\eqref{eq:convex-bump}.
\end{theorem}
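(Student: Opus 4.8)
The plan is to verify the three conditions of Theorem~\ref{th:abstract-inference-indirect} for the curvature functional $\curvaturefunctional = \eigenvalueofpdf{\genindexk}$, with $\genindexk = 1$ producing the concave bump~\eqref{eq:concave-bump} and $\genindexk = \dimension$ the convex dip~\eqref{eq:convex-bump}; both cases run through the same argument because it hinges on a perturbation bound valid for every ordered eigenvalue. First I would control the curvature error by the second-order derivative errors. By Weyl's inequality the $\genindexk$-th ordered eigenvalue of a symmetric matrix is $1$-Lipschitz in the spectral norm of the perturbation, and the spectral norm is dominated by the Frobenius norm, hence by the sum of the absolute entry differences. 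Applying this pointwise to the Hessians $\hessian{\kdefnh}(\x)$ and $\hessian{\smoothedpdf}(\x)$, then taking the supremum over $\bumpdomain$ and bounding the supremum of a sum by the sum of suprema, yields
\[
	\superrorcurvature
	\le
	\sumijtod \supnormsecondderivativesij
	\,,
\]
whose right-hand side I denote $\marginrv$. By Corollary~\ref{cor:bootstrap-approximation} each $\rootsamplesize \supnormsecondderivativesij$ converges in probability to $\sumsupgaussians$, so $\rootsamplesize \marginrv$ converges in probability, and therefore weakly, to $\limitingmarginrv = \sumijtod \sumsupgaussians$, whose continuity and strict monotonicity are granted by hypothesis~\ref{itm:continuous-cdf}. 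This is exactly condition~\ref{itm:bounding-rv}.

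Next I would build the margin and verify condition~\ref{itm:margin-bounding-quantile}. The components $\supnormsecondderivativesij$ are built from a single sample, so their joint dependence is unknown and the quantile of $\marginrv$ is not controlled by the individual quantiles. This is where the \gls*{tvar} enters: since it dominates the quantile and is sub-additive,
\[
	\quantilefuncofrv{\marginrv}{\confidence}
	\le
	\tvar{\confidence}{\marginrv}
	\le
	\sumijtod \tvar{\confidence}{\supnormsecondderivativesij}
	\,,
\]
and I would take $\margin$ equal to the last sum, so that $\margin \ge \quantilefuncofrv{\marginrv}{\confidence}$ automatically. To see $\limsamplesizeinf \margin = 0$, I would use the positive homogeneity of the \gls*{tvar} to write $\tvar{\confidence}{\supnormsecondderivativesij} = \rootsamplesize^{-1} \tvar{\confidence}{\rootsamplesize \supnormsecondderivativesij}$ and invoke Lemma~\ref{lm:tvar-convergence}: the sequence $\rootsamplesize \supnormsecondderivativesij$ is \gls*{aui} by hypothesis~\ref{itm:aui-errors}, converges weakly to $\sumsupgaussians$ by Corollary~\ref{cor:bootstrap-approximation}, and this limit has a strictly increasing \gls*{cdf}, again by Corollary~\ref{cor:bootstrap-approximation}; hence $\tvar{\confidence}{\rootsamplesize \supnormsecondderivativesij}$ tends to the finite constant $\tvar{\confidence}{\sumsupgaussians}$, and the prefactor $\rootsamplesize^{-1}$ drives $\margin$ to zero.

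For condition~\ref{itm:convergent-approx-margin} I would compare the computable margin $\approxmargin = \sumijtod \tvar{\confidence}{\bootstrapsupnormderivativesij}$ with $\margin$. Homogeneity gives
\[
	\absoluteval{\approxmargin - \margin}
	\le
	\rootsamplesize^{-1}
	\sumijtod
	\absoluteval{
		\tvar{\confidence}{\rootsamplesize \bootstrapsupnormderivativesij}
		-
		\tvar{\confidence}{\rootsamplesize \supnormsecondderivativesij}
	}
	\,.
\]
Lemma~\ref{lm:tvar-convergence} applies to both sequences inside the bracket, the non-bootstrap one as above and the bootstrap one because $\weaklyconverges{\rootsamplesize \bootstrapsupnormderivativesij}{\sumsupgaussians}$ \gls*{as} (Corollary~\ref{cor:bootstrap-approximation}) together with the \gls*{aui} property in hypothesis~\ref{itm:aui-errors}; both quantities converge to the common limit $\tvar{\confidence}{\sumsupgaussians}$, so every bracketed difference is $\littleohone$ \gls*{as} Multiplying by $\rootsamplesize^{-1}$ gives $\absoluteval{\approxmargin - \margin} = \littleohinvsqrtn$ \gls*{as}, which is condition~\ref{itm:convergent-approx-margin}. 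With the three conditions in place, Theorem~\ref{th:abstract-inference-indirect} yields the asymptotic validity of~\eqref{eq:asymptotic-validity-inference} for the smoothed concave bump and convex dip.

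The main obstacle is the unknown dependence among the coordinate errors $\supnormsecondderivativesij$, which forbids a naive sum-of-quantiles margin; the device that resolves it is the sub-additive, quantile-dominating \gls*{tvar}, at the price of a conservative (but still vanishing) margin. A secondary difficulty is that weak convergence does not by itself pass to \glspl*{tvar}, since these are tail integrals rather than continuity functionals of the \gls*{cdf}; this is precisely why the \gls*{aui} hypotheses and Lemma~\ref{lm:tvar-convergence} are indispensable. The Weyl reduction, though elementary, is the structural step that reduces everything to the second-derivative suprema already handled by Corollary~\ref{cor:bootstrap-approximation}.
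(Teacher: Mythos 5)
Your proposal is correct and follows essentially the same route as the paper's proof: the same entrywise reduction $\superrorcurvature \leq \sumijtod \supnormsecondderivativesij$, the same \gls*{tvar} quantile-domination and sub-additivity device with positive homogeneity, and the same appeals to Corollary~\ref{cor:bootstrap-approximation} and Lemma~\ref{lm:tvar-convergence} to verify the three conditions of Theorem~\ref{th:abstract-inference-indirect}. The only cosmetic difference is that you invoke Weyl's inequality (spectral norm) where the paper cites the Wielandt--Hoffman theorem (Frobenius norm); both yield the identical bound after dominating by the entrywise sum.
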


The assumptions (\ref{itm:continuous-cdf}) and (\ref{itm:aui-errors}) seem natural.
Hypothesis (\ref{itm:continuous-cdf}) asks a sum of nonnegative random variables with continuous and strictly increasing \glspl{cdf} to have a continuous and strictly increasing \gls{cdf} too, which should be valid except in pathological cases.
Similarly, knowing both sequences in hypothesis (\ref{itm:aui-errors}) converge weakly, being \gls{aui} amounts to the convergence of their expectations~\cite[Theorem 2.20]{vanderVaart1998}.

\vspace{1em}
\paragraph{\textit{Gaussian bumps}}

A similar result to Theorem~\ref{th:inference-eigenvalue-bumps} holds for Gaussian bumps~\eqref{eq:hessian-determinant-bump}.

\begin{theorem}
	\label{th:inference-gaussian-bumps}
	Consider the same hypotheses in Theorem~\ref{th:inference-eigenvalue-bumps} in the case $\dimension = 2$.
	Assume a Gaussian kernel $\kernel$.
	Further, assume that the true \gls*{pdf} $\pdf$ is bounded.
	Let $\constant$ be a constant such that $\constant > (\pi\bandwidth^4)^{-1}$.
	For any $\oneminusconfidence \in (0, 1)$, define the margin
	\begin{equation*}
		\approxmargin
		=
		\constant \
		\sumijtotwo
		\tvar{\confidence}{\bootstrapsupnormderivativesij}
		\,.
	\end{equation*}
	Then, for all $\oneminusconfidence \in (0, 1)$, the asymptotic validity of the inference scheme~\eqref{eq:asymptotic-validity-inference} holds \gls*{as} for the smoothed version of the Gaussian bump~\eqref{eq:hessian-determinant-bump}.
\end{theorem}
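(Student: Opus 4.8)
The plan is to fit the Gaussian bump into the template of Theorem~\ref{th:abstract-inference-indirect}, following the proof of Theorem~\ref{th:inference-eigenvalue-bumps} step for step; the only genuinely new ingredient is controlling the \emph{nonlinear} determinant functional by the individual second-derivative errors. Writing $a_{ij} = \secondorderdifffunctional{\genindexi}{\genindexj}\kdefnh(\x)$ and $b_{ij} = \secondorderdifffunctional{\genindexi}{\genindexj}\smoothedpdf(\x)$, I would first record the algebraic identity
\begin{equation*}
	\determinant{\hessian{\kdefnh}} - \determinant{\hessian{\smoothedpdf}}
	=
	a_{11}(a_{22} - b_{22}) + b_{22}(a_{11} - b_{11}) - (a_{12} + b_{12})(a_{12} - b_{12})
	\,,
\end{equation*}
valid for $\dimension = 2$ because the Hessian is symmetric. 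The differences $a_{ij} - b_{ij}$ are exactly the quantities controlled by $\supreumumnormerror{\secondorderdifffunctional{\genindexi}{\genindexj}}$, so it remains to bound the multiplicative coefficients $\smallabsval{a_{11}}$, $\smallabsval{b_{22}}$ and $\smallabsval{a_{12} + b_{12}}$ uniformly in $\x$. Here the Gaussian kernel is indispensable: its second partial derivatives are bounded in modulus by $(2\pi)^{-1}$, so after the $\bandwidth^{-(\dimension + 2)} = \bandwidth^{-4}$ scaling one gets $\sup_{\x}\smallabsval{\secondorderdifffunctional{\genindexi}{\genindexj}\kdefnh(\x)} \leq (2\pi\bandwidth^4)^{-1}$, and the same bound for $\smoothedpdf$ since it is an expectation of the former. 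Hence $\smallabsval{a_{11}}, \smallabsval{b_{22}} \leq (2\pi\bandwidth^4)^{-1}$ and $\smallabsval{a_{12} + b_{12}} \leq (\pi\bandwidth^4)^{-1} < \constant$, which after taking the supremum yields the deterministic inequality
\begin{equation*}
	\superrorcurvature
	\leq
	\constant
	\sumijtotwo
	\supreumumnormerror{\secondorderdifffunctional{\genindexi}{\genindexj}}
	\,,
\end{equation*}
where the symmetric double sum harmlessly double-counts the cross term.

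With this bound in hand, the remaining steps transcribe the eigenvalue argument almost verbatim. I would set $\marginrv = \constant \sumijtotwo \supreumumnormerror{\secondorderdifffunctional{\genindexi}{\genindexj}}$, so hypothesis~\ref{itm:bounding-rv} of Theorem~\ref{th:abstract-inference-indirect} holds by the inequality above; its weak-limit requirement follows from Corollary~\ref{cor:bootstrap-approximation} applied to each $\genericdiffop = \secondorderdifffunctional{\genindexi}{\genindexj}$ with $\difforder = 2$, giving $\convergesinprob{\rootsamplesize\,\supreumumnormerror{\secondorderdifffunctional{\genindexi}{\genindexj}}}{\normalizedgaussianprocessrvh[\secondorderdifffunctional{\genindexi}{\genindexj}]}$ and hence $\convergesinprob{\rootsamplesize\marginrv}{\constant\,\limitingmarginrv}$, the limit inheriting a continuous, strictly increasing \gls*{cdf} from hypothesis~(\ref{itm:continuous-cdf}) since positive scaling preserves both properties.

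For the margin, the unknown joint dependence among the four error suprema is absorbed, as in Theorem~\ref{th:inference-eigenvalue-bumps}, through the subadditivity and positive homogeneity of the \gls*{tvar} together with its domination of quantiles: taking $\margin = \constant\rootsamplesize^{-1}\sumijtotwo \tvar{\confidence}{\rootsamplesize\,\supreumumnormerror{\secondorderdifffunctional{\genindexi}{\genindexj}}}$ gives $\quantilefuncofrv{\marginrv}{\confidence} \leq \margin$, verifying hypothesis~\ref{itm:margin-bounding-quantile}, while Lemma~\ref{lm:tvar-convergence} (invoking the \gls*{aui} property of hypothesis~(\ref{itm:aui-errors})) forces $\rootsamplesize\margin$ to a finite limit and thus $\margin \to 0$. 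Finally, the computable bootstrap margin $\approxmargin = \constant \sumijtotwo \tvar{\confidence}{\bootstrapsupnormderivativesij}$ is matched to $\margin$ by the second half of Corollary~\ref{cor:bootstrap-approximation}, which supplies $\weaklyconverges{\rootsamplesize\,\bootstrapsupnormderivativesij}{\normalizedgaussianprocessrvh[\secondorderdifffunctional{\genindexi}{\genindexj}]}$ \gls*{as}; a second application of Lemma~\ref{lm:tvar-convergence} under the bootstrap \gls*{aui} assumption drives $\rootsamplesize\approxmargin$ to the \emph{same} finite limit as $\rootsamplesize\margin$, so $\smallabsval{\approxmargin - \margin} = \littleohinvsqrtn$, which is hypothesis~\ref{itm:convergent-approx-margin}. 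Theorem~\ref{th:abstract-inference-indirect} then delivers the asymptotic validity~\eqref{eq:asymptotic-validity-inference} for the smoothed Gaussian bump.

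The crux of the whole argument, and its only real departure from the concave-bump case, is the first step. Because $\determinant{\hessian{\funcarg}}$ is quadratic rather than linear in the second derivatives, its perturbation cannot be read off directly from the matrix-entry errors via a Weyl-type inequality as the eigenvalues are; the linearization must instead be paired with a uniform bound on the Hessian entries of the estimators. Restricting to the Gaussian kernel and demanding $\constant > (\pi\bandwidth^4)^{-1}$ is precisely what makes those multiplicative coefficients controllable, and I expect verifying this bound uniformly over $\bumpdomain$ (rather than the downstream \gls*{tvar} bookkeeping, which is routine once Step~1 is in place) to be the main obstacle.
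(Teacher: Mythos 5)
Your proposal is correct and follows the paper's architecture — reduce the determinant error to $\constant$ times the sum of the entrywise second-derivative sup errors, then run the \gls*{tvar} bookkeeping of Theorem~\ref{th:inference-eigenvalue-bumps} unchanged — but it resolves the key perturbation step by a genuinely different and somewhat cleaner device. The paper does not expand the determinant by hand: it invokes the determinant perturbation bound of~\cite[Theorem 2.12]{Ipsen2008} together with matrix-norm comparisons from~\cite{Wilkinson1988}, obtaining $\supreumumnormerrorfunctional{\determinant{\hessian{\funcarg}}} \leq 2 \max\{\norminfk{\smoothedpdf}{2}, \norminfk{\kdefnh}{2}\} \supxinbumpdomain \sumijtotwo \absvalsecondderivatives$, and then controls the norm factor by checking $\norminfk{\kernelh}{2} = (2\pi\bandwidth^4)^{-1} < \constant/2$; crucially, for $\kdefnh$ it gets the bound $\norminfk{\kdefnh}{2} < \constant/2$ only \emph{eventually, almost surely}, via Lemma~\ref{lm:supnorm-bigoh} — which is exactly why the theorem assumes the true \gls*{pdf} $\pdf$ bounded. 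Your explicit $2\times 2$ cofactor identity, combined with the elementary deterministic observation that $\smallabsval{\secondorderdifffunctional{\genindexi}{\genindexj}\kdefnh(\x)} \leq (2\pi\bandwidth^4)^{-1}$ for every $\samplesize$ and every sample (the \gls*{kde}'s second derivatives are averages of scaled kernel derivatives, and $\smoothedpdf$ is an expectation thereof), reaches the same inequality $\superrorcurvature \leq \constant \sumijtotwo \supnormsecondderivativesij$ without ever using boundedness of $\pdf$, without Lemma~\ref{lm:supnorm-bigoh}, and without the ``for $\samplesize$ large, almost surely'' qualifier — hypothesis~\ref{itm:bounding-rv} of Theorem~\ref{th:abstract-inference-indirect} then holds deterministically. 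What you give up is generality: your identity exploits $\dimension = 2$ and Hessian symmetry, whereas the Ipsen--Rehman route extends to arbitrary dimension with the same structure; since the theorem is stated only for $\dimension = 2$, this costs nothing here. Your downstream verification of hypotheses \ref{itm:margin-bounding-quantile} and \ref{itm:convergent-approx-margin} (subadditivity and positive homogeneity of \gls*{tvar}, Lemma~\ref{lm:tvar-convergence} under assumption~(\ref{itm:aui-errors}), and Corollary~\ref{cor:bootstrap-approximation} matching the bootstrap and population weak limits) coincides with the paper's, which it incorporates by the closing remark that the proof proceeds as in Theorem~\ref{th:inference-eigenvalue-bumps}.
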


		\section{Application}

\label{sec:application}

We will explore a \textit{sports analytics} application for $\dimension = 2$ in the \gls*{nba}.
See the \supplement{} for additional applications with $\dimension \in \{1, 3\}$ in two American leagues: the \gls*{nfl} and the \gls*{mlb}.
Each player and team has its own style, a form of DNA.
Following the biological analogy, if a single gene activates a trait in natural DNA, even minor bumps in data may reveal essential features.

All three sports applications are representative of the use of kernel methods for \gls*{eda}.
Moreover, our proposal has a marked visual intent, thus excelling in low dimensions.
In this context, the \textit{curse of dimensionality} that harms kernel methods, demanding larger sample sizes to retain precision, becomes less relevant~\cite[Section 2.8]{Chacon2018}.

\vspace{1em}
\paragraph{\textbf{Bivariate made shots in the NBA}}

Most people are familiar with basketball's \gls*{3pl}, behind which a made shot earns not two but three points.
Sports analytics have demonstrated that attempting more of these shots is well worth the risk, given the increased efficiency of three-point shooters.
This trend has recently changed the basketball landscape, especially in the \gls*{nba}.

\citeauthor{Chacon2020} exemplified univariate multimodality with shooting distances to the basket in the \gls*{nba}~\cite{Chacon2020}.
We could see that the highest mode in a \gls*{pdf} model of all shots for the 2014-2015 season peaked beyond the \gls*{3pl}.
Looking at shots from a bivariate perspective will reveal the \gls*{3pl} not as two separate modes but as a \textit{ridge}~\cite{Chacon2018}.

We will examine bumps from shot data by the three best scorers in the 2015-2016 \gls*{nba} season: Stephen Curry, James Harden and Kevin Durant.
\figurename~\ref{fig:nba-surface} and \figurename~\ref{fig:nba-scatter} present different perspectives on concave and Laplacian bumps.
Setting the near-the-rim shots aside, the three players have different shooting DNAs.
Stephen Curry (\figurename~\ref{fig:scatter-curry}) operates beyond the \gls*{3pl}, covering the entire ridge.
He also demonstrates good range with even some half-court shots.
However, he barely uses the mid-range area.
His shooting patterns are mostly symmetrical.
James Harden (\figurename~\ref{fig:scatter-harden}) has similar trends to Curry's.
He almost covers the \gls*{3pl} while leaning towards some mid-range areas without half-court shots.
Some notable asymmetries are present.
Kevin Durant (\figurename~\ref{fig:scatter-durant}) has a more balanced game between mid and long shots.
He shoots facing the basket mainly, with lower usage of lateral shots.

\figurename~\ref{fig:nba-confidence-sets} complements the previous figures with confidence sets.
As refers to concave bumps, a wholly or partially ring-shaped area around the basket can be excluded with confidence for the three players.
Apart from the shots near the rim, we cannot find other spots likely contained in the concave bumps.
Regarding Laplacian bumps, the lower bound confidence sets become more relevant, even far apart from the rim.
For Curry, up to four high-confidence spots appear beyond the \gls*{3pl}, including the left-field corner; for Harden, the number of outside high-confidence spots decreases to two, while for Durant, there is only one.

\renewcommand{\subfigurewidthtwoandone}{0.4\textwidth}
\renewcommand{\graphicswidthtwoandone}{\textwidth}

\begin{figure}
	\centering
	\begin{subfigure}[b]{\columnwidth}
		\centering
		\includegraphics[width=\subfigurewidthtwoandone]{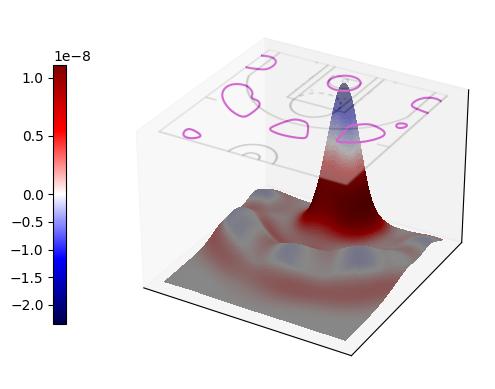}%
		\hspace{1 ex}
		\includegraphics[width=\subfigurewidthtwoandone]{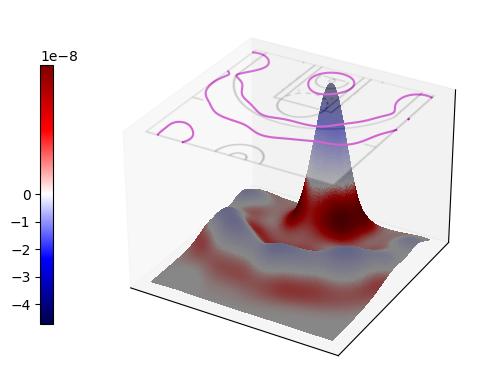}
		\caption{Stephen Curry}
	\end{subfigure}
	\begin{subfigure}[b]{\columnwidth}
		\centering
		\includegraphics[width=\subfigurewidthtwoandone]{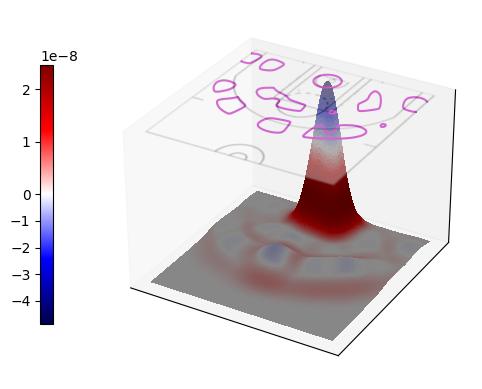}%
		\hspace{1 ex}
		\includegraphics[width=\subfigurewidthtwoandone]{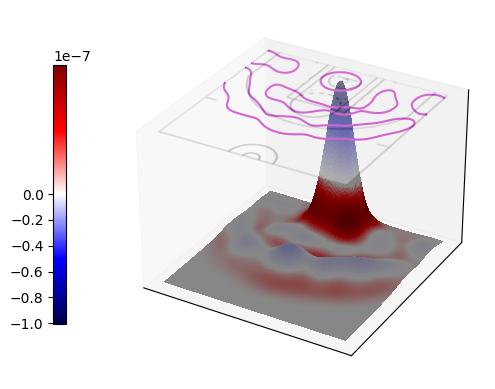}
		\caption{James Harden}
	\end{subfigure}
	\begin{subfigure}[b]{\columnwidth}
		\centering
		\includegraphics[width=\subfigurewidthtwoandone]{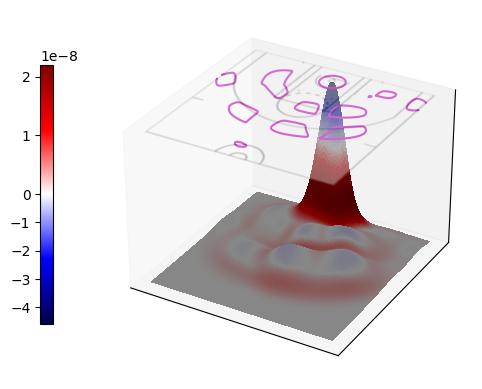}%
		\hspace{1 ex}
		\includegraphics[width=\subfigurewidthtwoandone]{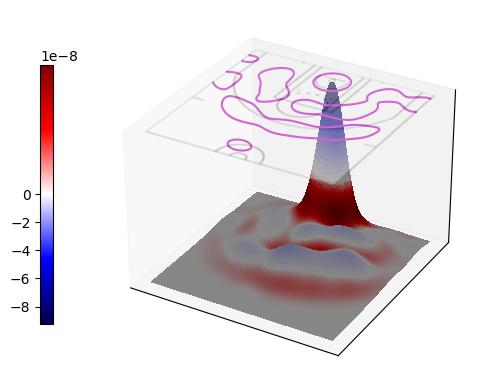}
		\caption{Kevin Durant}
	\end{subfigure}
	\caption{
		Concave and Laplacian bumps for Stephen Curry, James Harden and Kevin Durant.
		The three sub-figures have the same structure.
		On the left are concave bumps~\eqref{eq:concave-bump}; on the right are Laplacian bumps~\eqref{eq:laplacian-bump}.
		On either side, the two-dimensional surface is the fitted \gls*{kde} \gls*{pdf}.
		The area colour refers to the curvature functional value at each point.
		The bump boundaries appear as lines on a flat basketball court at the top.
	}
	\label{fig:nba-surface}
\end{figure}

\renewcommand{\subfigurewidthtwoandone}{0.4\textwidth}
\renewcommand{\graphicswidthtwoandone}{\textwidth}

\begin{figure}
	\centering
	\begin{subfigure}[b]{\subfigurewidthtwoandone}
		\centering
		\includegraphics[width=\graphicswidthtwoandone]{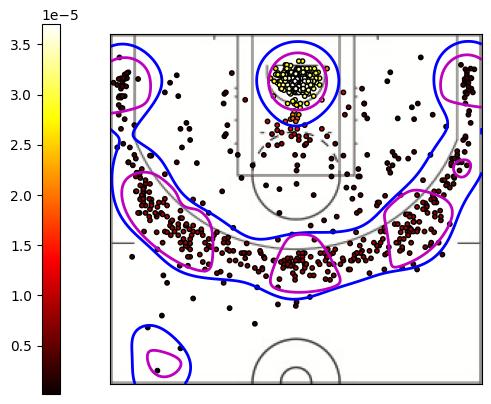}
		\caption{Stephen Curry}
		\label{fig:scatter-curry}
	\end{subfigure}
	\hspace{1 ex}
	\begin{subfigure}[b]{\subfigurewidthtwoandone}
		\centering
		\includegraphics[width=\graphicswidthtwoandone]{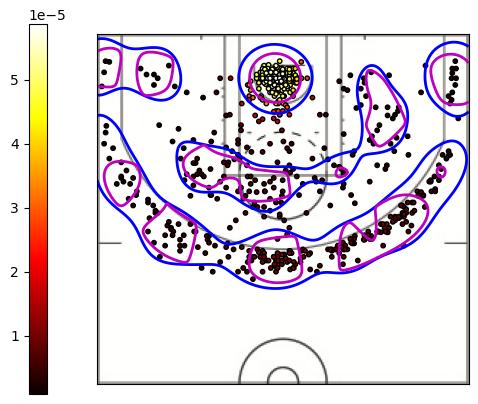}
		\caption{James Harden}
		\label{fig:scatter-harden}
	\end{subfigure}
	\begin{subfigure}[b]{\subfigurewidthtwoandone}
		\centering
		\vspace{3 ex}
		\includegraphics[width=\graphicswidthtwoandone]{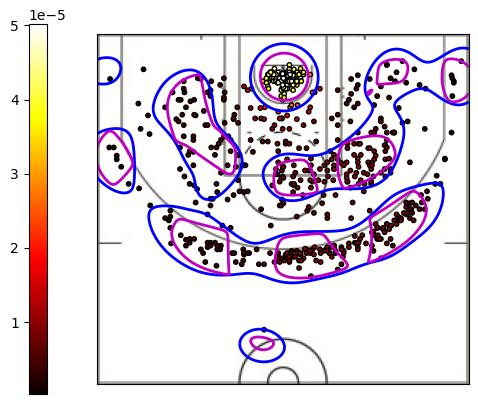}
		\caption{Kevin Durant}
		\label{fig:scatter-durant}
	\end{subfigure}
	\caption{
		Shot scatter data with concave and Laplacian bumps for Stephen Curry, James Harden and Kevin Durant.
		The three sub-figures have the same structure.
		Each point corresponds to a made shot location.
		The number of observations is 804, 710 and 698 for Stephen Curry, James Harden and Kevin Durant.
		The lines represent bump boundaries: magenta for concave bumps~\eqref{eq:concave-bump}; blue, Laplacian bumps~\eqref{eq:laplacian-bump}.
		The colour of the dots in the scatter plot conveys the value of the \gls*{kde} \gls*{pdf} at each point.
	}
	\label{fig:nba-scatter}
\end{figure}

\renewcommand{\subfigurewidthtwoandone}{0.4\textwidth}
\renewcommand{\graphicswidthtwoandone}{0.7\textwidth}

\begin{figure}
	\centering
	\begin{subfigure}[b]{\graphicswidthtwoandone}
		\centering
		\includegraphics[width=\subfigurewidthtwoandone]{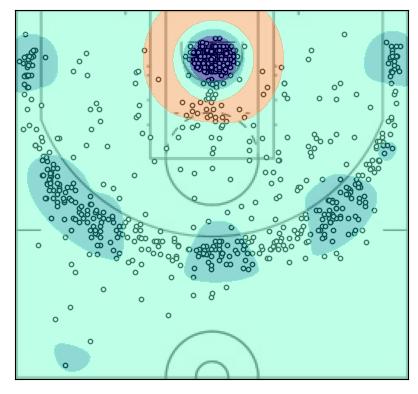}%
		\hspace{1 cm}
		\includegraphics[width=\subfigurewidthtwoandone]{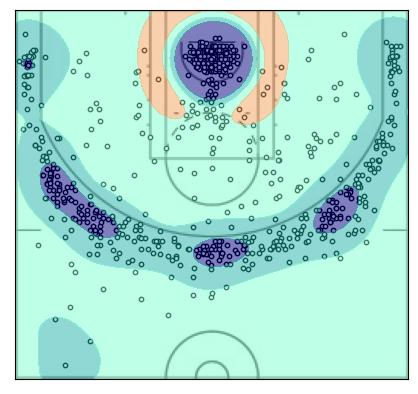}
		\caption{Stephen Curry}
	\end{subfigure}
	\begin{subfigure}[b]{\graphicswidthtwoandone}
		\centering
		\includegraphics[width=\subfigurewidthtwoandone]{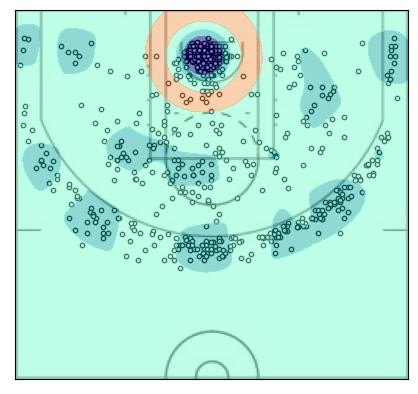}%
		\hspace{1 cm}
		\includegraphics[width=\subfigurewidthtwoandone]{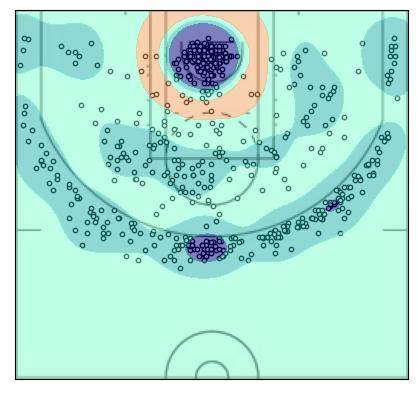}
		\caption{James Harden}
	\end{subfigure}
	\begin{subfigure}[b]{\graphicswidthtwoandone}
		\centering
		\includegraphics[width=\subfigurewidthtwoandone]{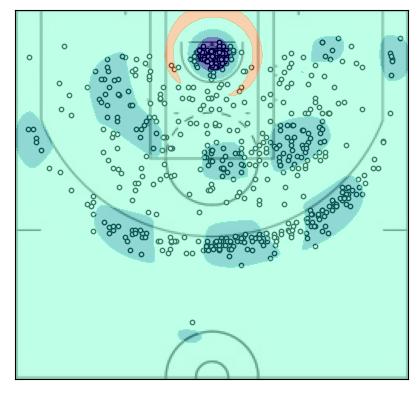}%
		\hspace{1 cm}
		\includegraphics[width=\subfigurewidthtwoandone]{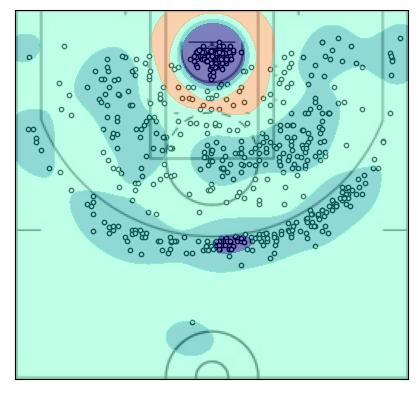}
		\caption{Kevin Durant}
	\end{subfigure}
	\caption{
		Confidence sets for Stephen Curry, James Harden and Kevin Durant's bumps.
		The three sub-figures have the same structure.
		On the left, 90\%-confidence sets for concave bumps~\eqref{eq:concave-bump}; on the right, 90\%-confidence sets for Laplacian bumps~\eqref{eq:laplacian-bump}.
		The confidence margins are based on 200 bootstrap samples, each with the same resample size as the original one.
		On either side, the area colours convey the same meaning.
		The non-blue \textit{sandy} areas fall \textit{outside} the confidence set bounds; the blue-coloured areas lie \textit{inside} the confidence region.
		The darkest blue corresponds to the lower bound confidence set: a set that is likely contained in the bump.
		The remaining blue areas cover the upper bound confidence set: a set that likely contains the bump.
		Finally, the mid-light blue colour points out the estimated bump.
	}
	\label{fig:nba-confidence-sets}
\end{figure}

		\section{Discussion}

\label{sec:discussion}

Our curvature \gls*{bh} methodology represents the next step in density \gls*{bh} techniques, a path opened by~\citeauthor{Good1980}~\cite{Good1980} and consolidated with~\citeauthor{Hyndman1996}~\cite{Hyndman1996} and~\citeauthor{Chaudhuri1999}~\cite{Chaudhuri1999}.
Rather than sticking to a purely probabilistic view on \glspl*{pdf}, our proposal thrives on sound geometry principles that have produced good results in applied areas like image processing~\cite{Haralick1992}.

Our work strongly relies on \gls*{kdde}, continuing the exploration of applications for higher-order partial derivatives of the \gls*{pdf}~\cite{Chacon2013}.
On the other hand, we bring to curvature \gls*{bh} some cutting-edge techniques for level set estimation and inference that extend the pointwise-oriented initial works by~\citeauthor{Godtliebsen2002}~\cite{Godtliebsen2002} and~\citeauthor{Duong2008}~\cite{Duong2008}.

The presented curvature framework shows great applicability from a theoretical standpoint.
Under mild assumptions, the mean curvature, Laplacian and Gaussian bumps are consistent with affordable convergence rates.
The confidence regions for Laplacian bumps are also asymptotically valid and consistent.
The cases for Gaussian bumps (inference), concave bumps and convex dips (consistency and inference) are slightly more technical.
Notwithstanding, pathological cases should not often appear in practice.

The \gls*{nba} application shows promise for \gls*{eda} and \textit{clustering}.
\figurename~\ref{fig:scatter-curry} presents a most pleasing result, identifying the \gls*{3pl} area and the most relevant shooting spots.
Both bumps are valuable and combine to produce insightful visualizations.
Comparing the pictures in \figurename~\ref{fig:nba-scatter}, we see that curvature bumps capture the players' rich shooting DNAs.
Despite the ultimately unavoidable threat of the \textit{curse of dimensionality} in \gls*{kde} settings~\cite{Chacon2018}, the relatively small sample sizes did not detract from the accuracy of the results.

Our methodology's apparent least impressive achievement is confidence regions despite asymptotic guarantees.
In \figurename~\ref{fig:nba-confidence-sets}, the upper-bound confidence sets tend to be conservative.
This was not wholly unexpected, as~\citeauthor{Chen2017} warned~\cite{Chen2017}.
The margin is especially coarse for the concave bumps.
In practice, we can mitigate this effect by splitting the bump and calculating the margin over smaller domains, employing a pilot estimation for guidance.
Nonetheless, further research following~\citeauthor{Mammen2013}'s universal approach~\cite{Mammen2013} should yield even better results.

		\ifnum\blind=0
			\section*{Acknowledgements}

The first author's research has been supported by the MICINN grant PID2019-109387GB-I00 and the Junta de Extremadura grant GR21044.
The second author would like to thank Amparo Ba\'{i}llo Moreno for her advice as a doctoral counsellor at the Autonomous University of Madrid.
Finally, we thank two anonymous reviewers for their helpful comments.

		\fi
	\end{refsection}

	\renewcommand*{\mkbibcompletename}[1]{\textsc{#1}}
	\printbibliography[section=1, filter=references, title={References}, sorting=nyt]
	\renewcommand*{\mkbibcompletename}[1]{#1}

	\ifnum\material=1
		\begin{refsection}
			\cleardoublepage

\pagenumbering{roman}
\setcounter{page}{1}
\setcounter{equation}{0}
\renewcommand{\theequation}{S\arabic{equation}}

\part*{Supplementary material}

The following \textit{appendices} are provided as \textit{supplementary material} to the manuscript \textit{\MyTitle}.
Theorem and equation numbers refer to the manuscript.
The numbers for new equations are prefixed.
References are included at the end.
Notation and acronyms are reused from the manuscript.

Appendix~\ref{sec:proofs} contains proof of the results in the manuscript.
Appendix~\ref{sec:curvature} introduces and motivates the geometrical concepts behind the curvature bumps in Section~\ref{sec:methods}.
Appendix~\ref{sec:extra-applications} illustrates the \gls*{bh} methodology for $\dimension \in \{1, 3\}$ with data from the \gls*{nfl} and the \gls*{mlb}.
Finally, Appendix~\ref{sec:computing} adds some comments on the underlying algorithms and data and acknowledges computational resources.

			\appendix

			\section{Proofs}

\label{sec:proofs}

\subsection{Consistency}

\begin{proof}[Proof of Lemma~\ref{lm:bias-order}]
	Since the partial derivatives of $\pdf$ and $\kernel$ are bounded and vanish at infinity, respectively, integration by parts and a change of variables yields
	\begin{align}
		\smallexpectedvalueof{\partialkdefnhx}
		 & =
		\integralonrd \indexeddiffoperator[\kernelh(\funcarg - \y)](\x) \ \pdf(\y) \ d\y
		\nonumber                     \\
		 & =
		\integralonrd (-1)^{\derivativevectororder} \ \indexeddiffoperator[\kernelh(\x - \funcarg)](\y) \ \pdf(\y) \ d\y
		\nonumber                     \\
		 & =
		(-1)^{\derivativevectororder} \integralonrd (-1)^{\derivativevectororder} \ \kernelh(\x - \y) \ \partialpdf(\y) \ d\y
		\label{eq:kde-expected-value} \\
		 & =
		\frac{1}{\bandwidth^{\dimension}} \integralonrd \kernel \left( \frac{\x - \y}{\bandwidth} \right) \ \partialpdf(\y) \ d\y
		\nonumber                     \\
		 & =
		\integralonrd \partialpdf(\x - \bandwidth\z) \ \kernel(\z) \ d\z
		\nonumber
		\,.
	\end{align}
	Therefore,
	\begin{equation}
		\label{eq:bias-integral-form}
		\bias
		=
		\integralonrd
		[\biasintegrand]
		\ \kernel(\z)
		\ d\z
		\\
		\,.
	\end{equation}

	Now, depending on the differentiability of $\pdf$, we have one of the following Taylor expansions, using the integral form of the remainder~\cite[Theorem 2.68]{Folland2002}:
	\begin{equation}
		\label{eq:bias-taylor-expansion}
		\biasintegrand
		=
		\begin{cases}
			- \grad{\partialpdfx}\z \ \bandwidth
			+ \taylorremainder{2} \ \bandwidth^2,
			 & \mathrm{if} \ \extrasmoothing \geq 2
			\\
			\taylorremainder{1} \ \bandwidth,
			 & \mathrm{if} \ \extrasmoothing = 1
		\end{cases}
		\,,
	\end{equation}
	where
	\begin{equation}
		\label{eq:integral-taylor-remainder}
		\taylorremainder{\minmaxexponent}
		=
		\begin{cases}
			\int_0^1
			(1 - \vart) \ \transpose{\z} \hessian{\partialpdf}(\x - \vart\bandwidth\z) \z
			\ d\vart,
			 & \mathrm{if} \ \minmaxexponent = 2
			\\
			-\int_0^1
			\grad{\partialpdf}(\x - \vart\bandwidth\z) \z
			\ d\vart,
			 & \mathrm{if} \ \minmaxexponent = 1
		\end{cases}
		\,.
	\end{equation}
	The $\z$ term vanishes after plugging~\eqref{eq:bias-taylor-expansion} into~\eqref{eq:bias-integral-form} because of the zero-mean constraint on $\kernel$, yielding the following explicit expression for the bias:
	\begin{equation}
		\label{eq:bias-integral-form-with-h}
		\bias
		=
		\left(
		\integralonrd
		\taylorremainder{\minmaxexponent}
		\ \kernel(\z)
		\ d\z
		\right)
		\bandwidth^{\minmaxexponent}
		\,,
	\end{equation}
	where $\minmaxexponent = \minbetweenextrasmoothingandtwo$.

	From~\eqref{eq:bias-integral-form-with-h}, there only remains to bound~\eqref{eq:integral-taylor-remainder} to finish the proof in the differentiable case.
	For $\extrasmoothing \geq 2$, let $\constant_2 = \supxinrd \norminf{\vectorize \ \hessian{\partialpdfx}}$, where $\norminf{\funcarg}$ denotes the maximum absolute value of a vector's components.
	Likewise, for $\extrasmoothing = 1$, let $\constant_1 = \supxinrd \norminf{\grad{\partialpdfx}}$.
	Then, it is straightforward to see
	\begin{equation*}
		\smallabsval{\bias}
		\leq
		\begin{cases}
			\frac{1}{2}
			\constant_2
			\left(
			\sumijtod
			\integralonrd
			\smallabsval{\z_{\genindexi}\z_{\genindexj}}
			\ \kernel(\z)
			\ d\z
			\right)
			\bandwidth^2,
			 & \mathrm{if} \ \extrasmoothing \geq 2
			\\
			\constant_1
			\left(
			\sumitod
			\integralonrd
			\smallabsval{\z_{\genindexi}}
			\ \kernel(\z)
			\ d\z
			\right)
			\bandwidth,
			 & \mathrm{if} \ \extrasmoothing = 1
		\end{cases}
		\,,
	\end{equation*}
	which yields the desired orders after considering the moment constraints on $\kernel$.

	In turn, when $\extrasmoothing = 0$, we can resort to Hölder continuity, if $\holderexponent > 0$.
	Considering the $\norm{\funcarg}_1$ norm without loss of generality and letting $\constant > 0$ be the corresponding Hölder constant, from~\eqref{eq:bias-integral-form} follows $\smallabsval{\bias} \leq \constant \left( \integralonrd \norm{\z}_1^{\holderexponent} \kernel(\z) \ d\z \right) \bandwidth^{\holderexponent}$.
	Then, the integral is finite because of the moment constraints on $\kernel$ after applying Jensen's inequality with $\lambdafunction{\norm{\z}_1^{\holderexponent}}{(\norm{\z}_1^{\holderexponent})^{1/\holderexponent} = \norm{\z}_1 = \sumitod \smallabsval{\z_{\genindexi}}}$, which is convex for $\holderexponent \in (0, 1]$.
	This leads to a bound similar to the case $\extrasmoothing = 1$, proving the order $\bigoh(\bandwidth^{\holderexponent})$.

	Finally, if $\holderexponent = 0$, we apply uniform continuity.
	Branching from~\eqref{eq:kde-expected-value} with a change of variables, we get $\smallexpectedvalueof{\partialkdefnhx} = \integralonrd \partialpdf(\x - \y) \ \kernelh(\y) \ d\y$.
	Hence, letting $\constant > 0$ be an upper bound for $\smallabsval{\partialpdf}$ on $\rd$ and letting $\integralradius > 0$,
	\begin{align*}
		\smallabsval{\bias}
		 & \leq
		\integralonrd
		\smallabsval{\partialpdf(\x - \y) - \partialpdfx}
		\ \kernelh(\y)
		\ d\y
		\\
		 & \leq
		\sup_{\norm{\y} \leq \integralradius} \smallabsval{\partialpdf(\x - \y) - \partialpdfx}
		+ 2\constant
		\int_{\norm{\y} > \integralradius}
		\kernelh(\y)
		\ d\y
		\,.
	\end{align*}
	By uniform continuity, the supremum term will be small as long as $\integralradius$ is small regardless of $\x$.
	Once fixed a sufficiently small $\integralradius$, the integral term approaches zero as $\bandwidth$ does.
	In conclusion, the bias is $\littleohone$ as $\bandwidth \goesto 0$.
\end{proof}

\begin{proof}[Proof of Theorem~\ref{th:bump-boundary-convergence}]
	The result follows after applying Theorem~\ref{th:stability-theorem} with $\criterionfunction = \curvaturefunctionalof{\pdf}$ and $\approxcriterionfunction = \curvaturefunctionalof{\kdefnh}$.
	We must check that $\norminfcurvaturediff$, $\derivativeorder \in \{0, 1\}$, can be made arbitrarily small.

	From Theorem~\ref{th:derivative-convergence}, uniform convergence for all the \gls*{kde}'s derivatives is ensured, except for a finite union of zero-probability events.
	Then, since partial derivatives of $\pdf$ are bounded, those of $\kdefnh$ are also eventually bounded \gls*{as}
	Therefore, we can restrict the domain of the function $\underlyingcurvature$ to a compact set $\openset$.
	This claim implies that both $\curvaturefunctionalof{\testpdf}$ and $\grad{\curvaturefunctionalof{\testpdf}}$ are uniformly continuous functions of partial derivatives of $\testpdf$, which proves that $\norminfcurvaturediff$, for $\derivativeorder \in \{0, 1\}$, can be made arbitrarily small.
	In particular, $\underlyingcurvature$ is Lipschitz continuous, which yields the desired convergence order for the Hausdorff distance.
\end{proof}

\begin{proof}[Proof of Proposition~\ref{prop:distinct-eigenvalues}]
	From~\cite[Wielandt-Hoffman theorem]{Wilkinson1988}, any ordered eigenvalue function is Lipschitz continous.
	In particular, plug-in estimators of the eigenvalues are consistent in the uniform norm \gls*{as}
	Therefore, using the triangle inequality, one arrives at
	\begin{equation*}
		\begin{aligned}
			\infdiffconsecutiveeig{\kdefnh}
			 & \geq
			\infdiffconsecutiveeig{\pdf}
			\\
			 & - \erroreigenvalue{\genindexi}
			\\
			 & - \erroreigenvalue{\genindexi + 1}
			>
			0
			\,,
		\end{aligned}
	\end{equation*}
	\gls*{as} for $\samplesize$ sufficiently large, which proves~\eqref{eq:eigenvalue-separability} for $\kdefnh$.
	Let $\functiondef{\eigenvalue_{\genindexj}}{\reals^{\dimension^2}}{\reals}$ be the $j$-th eigenvalue function defined over $\dimension$-dimensional square matrices.
	From~\cite[Theorem 5.16, p. 119]{Kato1995}, for every $\x \in \fatbumpboundary$, $\eigenvalue_{\genindexj}$ is infinitely differentiable in some neighbourhoods of $\hessian{\pdf}(\x)$ and $\hessian{\kdefnh}(\x)$, since all eigenvalues have multiplicity one in both cases.
\end{proof}

\subsection{Inference}

\begin{proof}[Proof of Theorem~\ref{th:abstract-inference-indirect}]
	It is an exercise to show that $\superrorcurvature \leq \approxmargin$ implies $\genericlowerconfidentbumpapprox \subset \genericsmoothedbump \subset \genericupperconfidentbumpapprox$.
	Therefore, for sufficiently large $\samplesize$,
	\begin{equation*}
		\begin{aligned}
			\probabilityof{
				\genericlowerconfidentbumpapprox \subset \genericsmoothedbump \subset \genericupperconfidentbumpapprox
			}
			 & \geq
			\probabilityof{
				\superrorcurvature \leq \approxmargin
			}
			\\
			 & \geq
			\probabilityof{\marginrv \leq \approxmargin}
			\\
			 & =
			\probabilityof{\marginrv \leq \margin + \littleohinvsqrtn}
			\\
			 & \geq
			\probabilityof{\marginrv \leq \quantilefuncofrv{\marginrv}{\confidence} + \littleohinvsqrtn}
			\\
			 & =
			\probabilityof{\rootsamplesize\marginrv \leq \quantilefuncofrv{\rootsamplesize \marginrv}{\confidence} + \littleohone}
			\\
			 & =
			\probabilityof{\limitingmarginrv \leq \quantilefuncofrv{\rootsamplesize \marginrv}{\confidence} + \littleohone} + \littleohone
			\\
			 & =
			\confidence + \littleohone
			\,,
		\end{aligned}
	\end{equation*}
	where we have used that the weak convergence to a continuous \gls*{rv} implies convergence in the Kolmogorov distance~\cite[Lemma 2.11]{vanderVaart1998},
	and $\quantilefuncofrv{\rootsamplesize \marginrv}{\confidence}$ converges to $\quantilefuncofrv{\limitingmarginrv}{\confidence}$ for all $\oneminusconfidence \in (0, 1)$~\cite[Lemma 21.2]{vanderVaart1998}, since having $\limitingmarginrv$ a strictly increasing \gls*{cdf} implies its quantile function is continuous~\cite[p. 305]{vanderVaart1998}.
	Finally, $\limninf \approxmargin = 0$ follows from $\limninf \margin = 0$ and $\approxmargin = \margin + \littleohinvsqrtn$.
\end{proof}

\begin{proof}[Proof of Theorem~\ref{th:gaussian-approximation-supnorm}]
	One can easily check that $\genericdiffop\smoothedpdf(\x) = \smallexpectedvalueof{\genericdiffop\kernelh(\x - \samplevar_{\genindexi})}$, for all $\genindexi$.
	This leads to $\scaling \supreumumnormerror{\genericdiffop} = \supovervcclasshof{\empiricalprocess}$, for $\vcclassh$ as in~\eqref{eq:vc-class-scaled-kernels}.
	On the other hand, the class
	\begin{equation*}
		\vcclasshprime
		=
		\left \{
		\lambdafunction{\y \in \rd}{%
			\genericdiffop\kernel \left( \frac{\x - \y}{\bandwidth} \right)
		}
		: \x \in \bumpdomain
		\right \}
		=
		\{
		\sqrt{\bandwidth^{\dimension + \difforder}} \vcclassmember
		: \vcclassmember \in \vcclassh
		\}
		\,,
	\end{equation*}
	for any $\bandwidth > 0$, is \gls*{vc}-type by our assumption on the kernel class $\kernelsvcclass$ in~\eqref{eq:vc-class-kernels}, since \gls*{vc}-type classes are closed under summation and product~\cite{Chernozhukov2014}.
	Also, $\vcclasshprime$ is clearly \gls*{pm}, as the continuity of $\genericdiffop\kernel$ allows for a countable subset indexed by a rational $\dimension$-tuple $\x$.
	Therefore, we can apply Lemma~\ref{lm:chernozhukov} to $\vcclasshprime$, yielding the existence of $\gaussianprocessrvhprime \equaldistribution \supovervcclasshprimeof{\gaussianprocess}$ near $\supovervcclasshprimeof{\empiricalprocess}$ with high probability.
	Now, by our boundedness assumption on the $\difforder$-th derivatives of $\kernel$, letting $\vckernelsbound > 0$ be an envelope for $\vcclassh$, and assumming $\bandwidth \in (0, 1)$, the same $\vckernelsbound$ will be an envelope for $\vcclasshprime$, yielding
	$
		\supovervcclasshprime \expectedvalueof{
			\vcclassmember(\samplevar_1)^2
		}
		\leq
		\bandwidth^{\dimension + \difforder} \vckernelsbound^2
		\leq
		\vckernelsbound^2
	$,
	whence we can take $\vcbound = \vckernelsbound$ and $\processstdv = \vckernelsbound \sqrt{\bandwidth^{\dimension + \difforder}}$ in Lemma~\ref{lm:chernozhukov}.
	Dividing both sides of the inequality inside $\probability$ in Lemma~\ref{lm:chernozhukov} by $\sqrt{\bandwidth^{\dimension + \difforder}}$, we arrive at
	\begin{equation*}
		\probabilityof{
			\absoluteval{\supovervcclasshof{\empiricalprocess} - \frac{1}{\sqrt{\bandwidth^{\dimension + \difforder}}} \gaussianprocessrvhprime}
			>
			\chernozhukovfirstconst
			\frac{\vckernelsbound \log^{2/3} \samplesize}
			{\chernozhukovfreeconst^{1/3} (\samplesize \bandwidth^{\dimension + \difforder})^{1/6}}
		}
		\leq
		\chernozhukovsecondtconst \chernozhukovfreeconst
		\,,
	\end{equation*}
	for $\samplesize$ sufficiently large.
	Letting $\gaussianprocessrvh = \gaussianprocessrvhprime / \sqrt{\bandwidth^{\dimension + \difforder}}$, $\modifiedchernofirstconst = \vckernelsbound \chernozhukovfirstconst$, and bypassing the empirical process, we get
	\begin{equation}
		\label{eq:applied-chernozhukov}
		\probabilityof{
			\absoluteval{\scaling \supreumumnormerror{\genericdiffop} - \gaussianprocessrvh}
			>
			\modifiedchernofirstconst
			\frac{\log^{2/3} \samplesize}
			{\chernozhukovfreeconst^{1/3} (\samplesize \bandwidth^{\dimension + \difforder})^{1/6}}
		}
		\leq
		\chernozhukovsecondtconst \chernozhukovfreeconst
		\,.
	\end{equation}
	Note that $\gaussianprocessrvh \equaldistribution \supovervcclasshof{\gaussianprocess}$~\cite[Supplementary material]{Chen2016}.

	To go from~\eqref{eq:applied-chernozhukov} to the desired result, we apply~\cite[Lemma 10 -- supplementary material]{Chen2017}, which in turn derives from~\cite[Lemma 2.3]{Chernozhukov2014}.
	As in~\cite{Chen2017}, the assumptions (A1)-(A3) hold under our hypotheses.
	(A1) is a \gls*{pm} requirement, which again stands valid for $\vcclassh$.
	Then, (A2) is satisfied with the bound $\vckernelsbound$, and we can use~\cite[Lemma 2.1]{Chernozhukov2014} to infer the pre-Gaussian requirement (A3) from (A2) and the fact that \gls*{vc}-type classes have finite \textit{uniform entropy integral}~\cite{Chernozhukov2014, vanderVaart1998}.
	Therefore,
	\begin{equation*}
		\begin{split}
			\sup_{\cdfarg \geq 0}
			\absoluteval{
				\probabilityof{\scaling \supreumumnormerror{\genericdiffop} < \cdfarg}
				-
				\probabilityof{\gaussianprocessrvh < \cdfarg}
			}
			&\leq
			\anticoncentrationconstant \expectedvalueof{\gaussianprocessrvh}
			\left(
			\modifiedchernofirstconst
			\frac{\log^{2/3} \samplesize}
			{\chernozhukovfreeconst^{1/3} (\samplesize \bandwidth^{\dimension + \difforder})^{1/6}}
			\right)
			+
			\chernozhukovsecondtconst \chernozhukovfreeconst
			\\
			&\leq
			\anticoncentrationconstantprime
			\modifiedchernofirstconst
			\frac{\log^{7/6} \samplesize}
			{\chernozhukovfreeconst^{1/3} (\samplesize \bandwidth^{\dimension + \difforder})^{1/6}}
			+
			\chernozhukovsecondtconst \chernozhukovfreeconst
			\,,
		\end{split}
	\end{equation*}
	for positive constants $\anticoncentrationconstant, \anticoncentrationconstantprime$, where it has been used that $\expectedvalueof{\gaussianprocessrvh} = \bigoh(\sqrt{\log \samplesize})$.
	The order of $\expectedvalueof{\gaussianprocessrvh}$ is obtained noting that, by the coupling hypothesis between $\bandwidth$ and $\samplesize$, $\expectedvalueof{\gaussianprocessrvh} = \bigoh(1 / \sqrt{\bandwidth^{\dimension + \difforder}}) = \bigoh(\sqrt{\log \samplesize})$, where we have used that $\expectedvalueof{\gaussianprocessrvhprime} = \bigoh(1)$ uniformly in $\bandwidth$ due to Dudley's inequality~\cite[Corollary 2.2.8]{vanderVaart1996} applied to the \gls*{vc}-type class $\bigcup_{\delta > 0} \vcclass_{\delta}'$, which is larger than $\vcclasshprime$.
	Differentiating the right-hand side of the last inequality with respect to $\chernozhukovfreeconst$ yields the optimal order $\bigoh[ \log^{7/8} \samplesize / (\samplesize \bandwidth^{\dimension + \difforder})^{1/8}]$, which also happens to be that of $\chernozhukovfreeconst$.

	Finally, plugging the optimal $\chernozhukovfreeconst$ in~\eqref{eq:applied-chernozhukov}, given $\epsilon > 0$, there exists $\samplesizestart \in \naturals$ such that, for $\samplesize \geq \samplesizestart$, we have
	$
		\probabilityof{
			\absoluteval{
				\rootsamplesize \ \supreumumnormerror{\genericdiffop}
				-
				\normalizedgaussianprocessrvh
			}
			>
			\epsilon
		}
		=
		\bigoh ( [ \samplesize^{-1} \log^7 \samplesize ]^{1/8} )
	$, meaning $\convergesinprob{\rootsamplesize \ \supreumumnormerror{\genericdiffop}}{\normalizedgaussianprocessrvh}$.
\end{proof}

\begin{proof}[Proof of Theorem~\ref{th:gaussian-approximation-supnorm-bootstrap}]
	Let us start by decoupling the sample sizes in~\eqref{eq:inference-uniform-error-bootstrap}.
	Let us fix some observed values $\originalsample = \{\samplevarrealization_1, \dots, \samplevarrealization_{\samplesize}\}$ and take $\bootstrapsamplevar_1, \dots, \bootstrapsamplevar_{\resamplesize}$ \gls*{iid} random variables from $\empiricalbootstrapprob$, for $\resamplesize$ independent of $\samplesize$.
	Keeping $\samplesize$ fixed, let us consider the decoupled version of~\eqref{eq:inference-uniform-error-bootstrap} $\bootstrapsupreumumnormerrorm{\genericdiffop} = \smallsupnormbumpdomain{\genericdiffop\bootstrapkdefmhof{\x} - \genericdiffop\conditionalkdefnhof{\x}}$.
	Note that, for $\genindexi \in \{1, \dots, \resamplesize\}$, we have $\expectedvalueof{\kernelh(\x - \bootstrapsamplevar_{\genindexi}) | \originalsample} = \conditionalkdefnhof{\x}$.
	This leads to
	\begin{equation*}
		\genericdiffop\bootstrapkdefmhof{\x} - \genericdiffop\conditionalkdefnhof{\x}
		=
		\frac{1}{\resamplesize\bandwidth^{\dimension + \difforder}}
		\sum_{\genindexi = 1}^{\resamplesize}
		\genericdiffop\kernel \left(
		\frac{\x - \bootstrapsamplevar_{\genindexi}}{\bandwidth}
		\right)
		-
		\expectedvalueof{
			\genericdiffop\kernel \left(
			\frac{\x - \bootstrapsamplevar_{\genindexi}}{\bandwidth}
			\right)
			\biggr\rvert
			\originalsample
		}
		\,,
	\end{equation*}
	and, subsequently, $\sqrt{\resamplesize \bandwidth^{\dimension + \difforder}} \bootstrapsupreumumnormerrorm{\genericdiffop} = \supovervcclasshof{\empiricalprocessbootstrap}$, for $\vcclassh$ as in~\eqref{eq:vc-class-scaled-kernels}.
	From this point on, one can apply exactly the same steps in the proof of Theorem~\ref{th:gaussian-approximation-supnorm} to obtain an approximation for $\bootstrapsupreumumnormerrorm{\genericdiffop}$ in the form of the supremum $\gaussianprocessrvnh$ of a \gls*{gp}.
	The covariance structure for this $\gaussianprocessrvnh$ immediately follows by considering $\bootstrapsamplevar_1$ and $\empiricalbootstrapprob$ in~\eqref{eq:covariance-gaussian-process}.

	A question remains to be solved to finish the proof: undo the decoupling of $\resamplesize$ from $\samplesize$ and the original sample $\originalsample$.
	We have proved that, given $\originalsample$, the Gaussian approximation works for sufficiently large $\resamplesize$, but we have no guarantee the same would work if we took $\resamplesize = \samplesize \goesto \infty$.
	A close look at Lemma~\ref{lm:chernozhukov} and Theorem~\ref{th:gaussian-approximation-supnorm} shows all the constants involved are universal or depend only on $\vcclassh$, which remains unchanged for every $\originalsample$.
	Therefore, we can take $\resamplesize = \samplesize$ and conclude that the result holds for a sufficiently large $\samplesize$.
\end{proof}

\begin{proof}[Proof of Corollary~\ref{cor:bootstrap-approximation}]
	It suffices to check that $\diffcdfgaussiansup$ converges to zero.
	First, note that the \gls*{gp} $\gaussianprocessbootstrap$ converges weakly to the \gls*{gp} $\gaussianprocess$ since the sample covariances converge to the population covariances~\eqref{eq:covariance-gaussian-process} \gls*{as} due to the law of large numbers.
	Therefore, applying the \textit{continuous mapping theorem}~\cite[Theorem 1.3.6]{vanderVaart1996}, we assure the suprema of the \glspl*{gp} also converge weakly.
	Let us assume for a moment that $\gaussianprocessrvh$ has a continuous \gls*{cdf}.
	Then, the uniform convergence in distribution over all $\cdfarg \in \reals$ would follow using~\cite[Lemma 2.11]{vanderVaart1998}, finishing the proof.
	Now, $\gaussianprocessrvh$ has a continuous and strictly increasing \gls*{cdf} following~\cite[Corollary 1.3 + Remark 4.1]{Gaenssler2007} since we automatically rule out the case $\kernelsvcclass$ in~\eqref{eq:vc-class-kernels} is degenerate by the regularity assumptions on the kernel $\kernel$.
	That is, we have $\kernelsvcclass \neq \{ \lambdafunction{\y}{0} \}$.
	Therefore, $\gaussianprocessrvh$ cannot be the Dirac distribution concentrated at zero.
	Meanwhile, the properties of $\normalizedgaussianprocessrvh$ derive from those of $\gaussianprocessrvh$.
\end{proof}

\begin{proof}[Proof of Lemma~\ref{lm:tvar-convergence}]
	From~\cite[Theorem 2.1]{Dhaene2006}, we have, for all $\p \in (0, 1)$,
	\begin{equation*}
		\tvar{\p}{\randomvarx_{\genindexn}}
		=
		\quantilefuncofrv{\randomvarx_{\genindexn}}{\p}
		+
		\frac{1}{1 - \p}
		\expectedvalueof{\positivepart{\randomvarx_{\genindexn} - \quantilefuncofrv{\randomvarx_{\genindexn}}{\p}}}
		\,,
	\end{equation*}
	where $\positivepart{\funcarg} \equiv \max\{\funcarg, 0\}$.
	Now, since $\randomvarx$ has a strictly increasing \gls*{cdf}, its quantile function is continuous at every $\p \in (0, 1)$.
	Following~\cite[Lemma 21.2]{vanderVaart1998}, we thus get $\limninf \quantilefuncofrv{\randomvarx_{\genindexn}}{\p} = \quantilefuncofrv{\randomvarx}{\p}$.
	On the other hand, using~\cite[Theorem 2.20]{vanderVaart1998} and Slutsky's~\cite[Lemma 2.8 (i)]{vanderVaart1998}, it is not difficult to see that $\{\randomvarx_{\genindexn} - \quantilefuncofrv{\randomvarx_{\genindexn}}{\p}\}_{\genindexn = 1}^{\infty}$ is \gls*{aui}
	Next, noting that $\positivepart{\randomvarx_{\genindexn} - \quantilefuncofrv{\randomvarx_{\genindexn}}{\p}} \leq \absoluteval{\randomvarx_{\genindexn} - \quantilefuncofrv{\randomvarx_{\genindexn}}{\p}}$ it easily follows that $\{\positivepart{\randomvarx_{\genindexn} - \quantilefuncofrv{\randomvarx_{\genindexn}}{\p}}\}_{\genindexn = 1}^{\infty}$ is \gls*{aui} too.
	Then, using again~\cite[Theorem 2.20]{vanderVaart1998}, the latter implies $\limninf \expectedvalueof{\positivepart{\randomvarx_{\genindexn} - \quantilefuncofrv{\randomvarx_{\genindexn}}{\p}}} = \expectedvalueof{\positivepart{\randomvarx - \quantilefuncofrv{\randomvarx}{\p}}}$ after applying the \textit{continuous mapping theorem}~\cite[Theorem 2.3]{vanderVaart1998} on $\positivepart{\funcarg}$.
	Finally, from the previous derivations and~\cite[Theorem 2.1]{Dhaene2006}, $\limninf \tvar{\p}{\randomvarx_{\genindexn}} = \tvar{\p}{\randomvarx}$.
\end{proof}

\begin{proof}[Proof of Theorem~\ref{th:inference-eigenvalue-bumps}]
	The eigenvalues are Lipschitz continuous as functions of matrix entries.
	As shown in~\cite[Equation 48.1, p. 104]{Wilkinson1988}, the absolute difference between eigenvalues at distinct matrices is bounded by the Euclidean 2-norm~\cite[Equation 54.1, p. 57]{Wilkinson1988} of the difference between the corresponding matrices.
	Let $\eigenvalueof{}{\funcarg}$ be either $\eigenvalueof{1}{\funcarg}$ or $\eigenvalueof{\dimension}{\funcarg}$ in~\eqref{eq:ordered-eigenvalues}.
	Taking suprema at both sides of the inequality, and noting that $\norm{\funcarg}_2 \leq \norm{\funcarg}_1$, we get
	\begin{equation*}
		\begin{aligned}
			\supreumumnormerrorfunctional{\eigenvalue}
			\leq
			\supxinrd
			\sumijtod
			\absvalsecondderivatives
			\leq
			\sumijtod
			\supnormsecondderivativesij
			\equiv
			\marginrv
		\end{aligned}
		\,.
	\end{equation*}
	Note that $\convergesinprob{\rootsamplesize \ \marginrv}{\limitingmarginrv = \sumijtod\sumsupgaussians}$ since the terms in the sum jointly converge~\cite[Theorem 2.7 (vi)]{vanderVaart1998}, and then we can apply the \textit{continuous mapping theorem}~\cite[Theorem 2.3]{vanderVaart1998}.
	This fulfils the pillar (\ref{itm:bounding-rv}) in Theorem~\ref{th:abstract-inference-indirect}.

	Now, $\marginrv$ is the sum of random variables whose \glspl*{cdf} can be approximated via bootstrap.
	The obstacle here lies in the dependencies between the terms in $\marginrv$; Corollary~\ref{cor:bootstrap-approximation} only applies to the margin \glspl*{cdf}, not the joint \gls*{cdf}.
	Notwithstanding, we only need some $\margin$ satisfying $\margin \geq \quantilefuncofrv{\marginrv}{\confidence}$.
	This $\margin$ can be obtained by resorting to the \gls*{tvar} concept, which bounds from above the corresponding quantile at the same confidence level while sub-additive~\cite[Equation 37]{Dhaene2006}.
	Interestingly, the \gls*{tvar} is the lowest \textit{concave distortion risk measure} satisfying both requirements~\cite[Theorem 5.2.2]{Dhaene2006}.
	Therefore,
	\begin{equation*}
		\quantilefuncofrv{\marginrv}{\confidence}
		\leq
		\tvar{\confidence}{\marginrv}
		\leq
		\sumijtod
		\tvar{\confidence}{\supnormsecondderivativesij}
		\equiv
		\margin
		\,.
	\end{equation*}
	Note that $\margin$ converges to zero because, by Lemma~\ref{lm:tvar-convergence} and assumption (\ref{itm:aui-errors}), for every pair $(\genindexi, \genindexj)$, $\limsamplesizeinf \rootsamplesize \ \tvar{\confidence}{\supnormsecondderivativesij} = \tvar{\confidence}{\sumsupgaussians} < \infty$, where we have used Corollary~\ref{cor:bootstrap-approximation} and $\rootsamplesize \ \tvar{\p}{\randomvarx} = \tvar{\p}{\rootsamplesize \randomvarx}$, for~\cite[Lemma 2.1]{Dhaene2006}.
	This completes hypothesis (\ref{itm:margin-bounding-quantile}) in Theorem~\ref{th:abstract-inference-indirect}.

	Next, consider the bootstrap margin $\approxmargin$.
	There only remains to verify (\ref{itm:convergent-approx-margin}) in Theorem~\ref{th:abstract-inference-indirect}, i.e., $\smallabsval{\approxmargin - \margin} = \littleohinvsqrtn$, to finish the proof.
	In turn, this follows using again~\cite[Lemma 2.1]{Dhaene2006} after checking, for every pair $(\genindexi, \genindexj)$,
	\begin{equation}
		\label{eq:tvar-approximation}
		\absoluteval{
			\tvar{\confidence}{\rootsamplesize \ \bootstrapsupnormderivativesij}
			-
			\tvar{\confidence}{\rootsamplesize \ \supnormsecondderivativesij}
		}
		\almostsureconverges
		0
		\,.
	\end{equation}
	Finally,~\eqref{eq:tvar-approximation} is a direct consequence of Lemma~\ref{lm:tvar-convergence} and the assumption (\ref{itm:aui-errors}), considering the two sequences have the same weak limit $\sumsupgaussians$, by Corollary~\ref{cor:bootstrap-approximation}.
\end{proof}

\begin{proof}[Proof of Theorem~\ref{th:inference-gaussian-bumps}]
	First, it is an exercise to check that, for the Gaussian kernel, $\norminfk{\kernelh}{2} = (2\pi\bandwidth^4)^{-1}$ on $\rtwo$ and, thus, $\norminfk{\kernelh}{2} < \constant / 2$ on $\rtwo$.
	Then, one can quickly check that $\norminfk{\smoothedpdf}{2} < \constant / 2$ on $\bumpdomain$, and, from Lemma~\ref{lm:supnorm-bigoh}, $\norminfk{\kdefnh}{2} < \constant / 2$ on $\bumpdomain$ for sufficiently large $\samplesize$ \gls*{as}
	Therefore, we can use~\cite[Theorem 2.12, p. 768]{Ipsen2008} along with~\cite[Equation 54.5, p. 57]{Wilkinson1988} and the fact that $\norm{\funcarg}_2 \leq \norm{\funcarg}_1$ to obtain
	\begin{equation*}
		\begin{split}
			\supreumumnormerrorfunctional{\determinant{\hessian{\funcarg}}}
			&\leq
			2
			\max\{
			\norminfk{\smoothedpdf}{2},
			\norminfk{\kdefnh}{2}
			\}
			\supxinbumpdomain
			\sumijtotwo
			\absvalsecondderivatives
			\\
			&\leq
			\constant \
			\sumijtotwo
			\supnormsecondderivativesij
			\equiv
			\marginrv
			\,,
		\end{split}
	\end{equation*}
	\gls*{as} for sufficiently large $\samplesize$.
	From this point on, the proof follows the same steps as that of Theorem~\ref{th:inference-eigenvalue-bumps}.
\end{proof}

			\section{Curvature}

\label{sec:curvature}

The rationale behind Section~\ref{sec:methods} lies in classical geometry.
We refer the reader to~\citeauthor{doCarmo1992}'s books for a comprehensive introduction to differential and Riemannian geometry~\cite{doCarmo1976, doCarmo1992} and~\citeauthor{Grinfeld2013}'s for an operational perspective on tensor calculus~\cite{Grinfeld2013}.
Then, \citeauthor{Petersen1998}'s book has a complete chapter devoted to the central topic of hypersurfaces~\cite[Chapter 4]{Petersen1998}.
The appendix of~\citeauthor{Ecker2004}'s book also includes some key concepts~\cite{Ecker2004}.
Finally, we recommend~\citeauthor{Gray2006}'s book to build up some intuition in $\reals^3$~\cite[Chapter 13]{Gray2006}.

\subsection{Outline}

After recalling some basic notions, we will address the concept of principal curvature.
Concavity and convexity are defined through specific sign configurations among the principal curvatures.
On the other hand, the \textit{mean curvature} and the \textit{Gaussian curvature} are alternative scalar summaries of principal curvatures.
All these measures involve first and second-order derivatives of the \gls*{pdf}.
Theoretical considerations below will discard the need for higher-order derivatives beyond those.
Meanwhile, combining derivatives of different orders is a major inconvenience in a \gls*{kde} setting.
Hence, we present similarly-intended features that rely only on the Hessian matrix.

\subsection{Fundamental concepts}

Let $\functiondef{\pdf}{\rd}{\closedzeroinfinterval}$ be a $\dimension$-dimensional \gls*{pdf}.
Let us assume that $\pdf$ is as smooth as needed.
The graph of $\pdf$, that is, $\graphf = \setgraphf$, is a hypersurface embedded in $\rdpone$ and a $\dimension$-dimensional smooth manifold, in particular.
Let $\functiondef{\chart}{\graphf}{\rd}$ be the global coordinates chart of $\graphf$ mapping every $\graphfpair \in \graphf$ to $\x$.
Let also $\functiondef{\embedding}{\rd}{\graphf}$ be the global parameterization of $\graphf$, defined as $\embedding(\x) = \chart^{-1}(\x) = \graphfpair$.

At every $\graphfpoint \in \graphf$, we can define a $\dimension$-dimensional tangent vector space $\tangentspace$ comprising different ways of deriving smooth functions from $\graphf$ to $\reals$ in a neighbourhood of $\graphfpoint$~\cite{doCarmo1992}.
In particular, the $\genindexi$-th element of the canonical basis determined by the chart is the $\genindexi$-th partial derivative $\ithtangentvectorp$ acting on functions $\functiondef{\testfunction}{\graphf}{\reals}$ as $\ithtangentvectorp(\testfunction) = [\chartcomp{\ithdifferential{\genindexi}(\testfunction \comp \embedding)}](\graphfpoint)$, where $\ithdifferential{\genindexi}$ takes the $\genindexi$-th partial derivative of $\functiondef{\testfunction \comp \embedding}{\rd}{\reals}$.

The union of all tangent spaces across $\graphfpoint \in \graphf$ is also a smooth manifold, known as the tangent bundle $\tangentbundle$~\cite{doCarmo1992}.
Elements in $\tangentbundle$ are called vector fields.
A vector field can be thought of as a function mapping each point $\graphfpoint \in \graphf$ to a tangent vector in $\tangentspace$.
Every vector field $\vectorfieldx$ can be expressed, using Einstein's summation convention~\cite{Grinfeld2013}, as $\vectorfieldx\vert_{\funcarg} = \vectorfieldx^{\genindexi}(\funcarg) \tangentvector{\genindexi}{\funcarg}$, where every $\vectorfieldx^{\genindexi}$ is a smooth function $\functionmap{\graphf}{\reals}$.

Let us now build on the previous concepts considering $\graphf$ is a hypersurface.

\subsubsection{Tangent vector fields}

Being $\graphf$ immersed in $\rdpone$, there exists a canonical identification of basis vector fields $\basisvectorfield{\genindexi}$ with real functions $\functionmap{\rd}{\rdpone}$~\cite{Ecker2004}.
To see this, consider the canonical identity immersion $\functiondef{\immersion}{\graphf}{\rdpone}$.
The differential $\tangentdifferential{\immersion}$ transforms vector fields in $\tangentbundle$ into vector fields in $\rdpone$.
Given a smooth function $\functiondef{\testfunction}{\rdpone}{\reals}$, we have $[\dimmersionithbasis](\testfunction) = \ithdifferential{\genindexi}(\testfunction \comp \immersedembedding)$, where $\functiondef{\immersedembedding \equiv \immersion \comp \embedding}{\rd}{\rdpone}$ is the immersed parameterization of $\graphf$.
Now, using the chain rule, for any $\x \in \rd$, $\ithdifferential{\genindexi}(\testfunction \comp \immersedembedding)(\x) = \innerprod{\grad{\testfunction}\graphfpair}{\embeddingfield(\x)}$, where $\embeddingfield$ is the $\genindexi$-th column vector field in the Jacobian matrix of $\immersedembedding$.
Combining the last two equations, we see that $\dimmersionithbasis$ \textit{derives} the function $\testfunction$ in the \textit{direction} of the vector field $\embeddingfield$.
Therefore, there is a canonical isomorphism
\begin{equation}
	\label{eq:canonical-isomorphism}
	\dimmersionithbasis
	\cong
	\embeddingfield
	=
	\euclideanbasis{\genindexi} + \differential_{\genindexi}\pdf \ \euclideanbasis{\dimension + 1}
	\,,
\end{equation}
where $\euclideanbasis{\genindexi}$ denotes the $\genindexi$-th canonical basis vector in the Euclidean space $\rdpone$.

The identification~\eqref{eq:canonical-isomorphism} unlocks both intrinsic and extrinsic properties of $\graphf$.
On the one hand, we have an inner product in $\rdpone$ that can be brought into $\tangentbundle$ via~\eqref{eq:canonical-isomorphism}.
On the other hand, we have $\dimension$ tangent vectors in $\tangentbundle$ and $\dimension + 1$ linearly independent vector fields in $\rdpone$, leaving space for an orthogonal complement \textit{normal} to the hypersurface.

\subsubsection{Normal vector field}

At any given $\x \in \rd$, the vectors $\embeddingfield(\x)$ form a basis for the $\dimension$-dimensional tangent hyperplane $\tangenthyperplane{\x} \subset \rdpone$ at $\embedding(\x)$.
Its orthogonal complement $\tangenthyperplane{\x}^{\perp}$ is a straight line passing through $\embedding(\x)$.
One can check that
\begin{equation}
	\label{eq:normal-vector-field}
	\downwardnormalvecfield(\x)
	=
	\frac{(\gradf(\x), -1)}{\sqrt{1 + \norm{\gradf(\x)}^2}}
\end{equation}
corresponds to the downward unit normal vector to $\tangenthyperplane{\x}$, perpendicular to vectors~\eqref{eq:canonical-isomorphism}, thus generating $\tangenthyperplane{\x}^{\perp}$~\cite[Equation A.2]{Ecker2004}.

Studying how the normal vector field~\eqref{eq:normal-vector-field} changes along all directions over $\graphf$ provides a means to characterize curvature.

\subsubsection{First fundamental form}

Utilizing~\eqref{eq:canonical-isomorphism}, the Euclidean inner product in $\rdpone$ induces a metric field $\metric$ on $\graphf$ with components~\cite{Ecker2004}
\begin{equation}
	\label{eq:metric}
	\metric_{\genindexi \genindexj}
	\equiv
	\metric(\basisvectorfield{\genindexi}, \basisvectorfield{\genindexj})
	\eqdef
	\innerprod
		{\chartcomp{\embeddingfieldindex{\genindexi}}}
		{\chartcomp{\embeddingfieldindex{\genindexj}}}
	=
	\kroneckerdd{\genindexi}{\genindexj}
	+
	(\chartcomp{\differential_{\genindexi}\pdf})
	(\chartcomp{\differential_{\genindexj}\pdf})
	\,,
\end{equation}
where $\kroneckerdd{\genindexi}{\genindexj}$ denotes the Kronecker delta twice covariant tensor.
The metric tensor $\metric$ is known as the \textit{first fundamental form}.
It allows measuring angles and lengths on $\tangentbundle$ \textit{intrinsically}.
In particular, it defines at each point a norm $\norm{\vectorfieldx} = \sqrt{\metric(\vectorfieldx, \vectorfieldx)}$, for $\vectorfieldx \in \tangentbundle$.

The metric tensor has an inverse $\metric^{\genindexi \genindexj}$ satisfying $\kroneckerup{\genindexi}{\genindexk} = \metric^{\genindexi \genindexj} \metric_{\genindexj \genindexk}$, where $\kroneckerup{\genindexi}{\genindexk}$ is the Kronecker delta $(1, 1)$-tensor.
From the previous equation and~\eqref{eq:metric}, using the Sherman–Morrison formula~\cite[Corollary 18.2.11]{Harville1997}, one can check that
\begin{equation*}
	\metric^{\genindexi \genindexj}
	=
	\kroneckerdd{\genindexi}{\genindexj}
	-
	\frac
	{(\chartcomp{\differential_{\genindexi}\pdf})
	 (\chartcomp{\differential_{\genindexj}\pdf})}
	{1 + \norm{\chartcomp{\gradf}}^2}
	\,.
\end{equation*}
A simple expression for the metric determinant follows from~\cite[Corollary 18.1.3]{Harville1997} when applied to~\eqref{eq:metric}, yielding $\determinant{\metric} = 1 + \norm{\chartcomp{\gradf}}^2$.

In our context, compatibility with the metric tensor leads to a \textit{natural} definition of derivatives for tangent vector fields over $\graphf$.

\subsubsection{Field divergence}

The metric tensor allows defining \textit{covariant derivatives} over vector fields through the \textit{Levi-Civita connection} $\connection$~\cite{doCarmo1992}, with components given by the Christoffel symbols $\christoffelwithindices{\genindexi}{\genindexj}{\genindexk}$~\cite[Equation 5.66]{Grinfeld2013}.
Letting $\vectorfieldx, \vectorfieldy \in \tangentbundle$, the covariant derivative of $\connection_{\vectorfieldy} \vectorfieldx$ is a vector field in $\tangentbundle$.
See~\cite[Equation 8.9]{Grinfeld2013} for an explicit expression of $\connection_{\genindexi} \vectorfieldx \equiv \connection_{\basisvectorfield{\genindexi}} \vectorfieldx$ involving the Christoffel symbols, the components of $\vectorfieldx$ and the basis vector fields.

The covariant derivative then can be used to define the \textit{divergence} operator~\cite[Equation 8.2]{Grinfeld2013} on $\vectorfieldx \in \tangentbundle$ as $\intrinsicdivergence{\vectorfieldx} = (\connection_{\genindexi} \vectorfieldx)^{\genindexi}$, which is a function $\functionmap{\graphf}{\reals}$.
As it turns out, the divergence can be extended to \textit{extrinsic} vector fields $\functiondef{\vectorfieldv}{\rd}{\rdpone}$ living in the ambient space through~\cite{Ecker2004}
\begin{equation}
	\label{eq:extrinsic-divergence}
	\extrinsicdivergence{\vectorfieldv}
	=
	\metric^{\genindexi \genindexj}
	\innerprod
		{\chartcomp{\differential_{\genindexi} \vectorfieldv}}
		{\chartcomp{\embeddingfieldindex{\genindexj}}}
	\,,
\end{equation}
meaning $\extrinsicdivergence{\funcarg}$ coincides with $\intrinsicdivergence{\funcarg}$ over tangent vector fields, i.e., for all $\genindexj$, $\extrinsicdivergence{\embeddingfieldindex{\genindexj}} = \intrinsicdivergence{\basisvectorfield{\genindexj}}$.
To see this, note that $\differential_{\genindexi} \embeddingfieldindex{\genindexj} = (\christoffelwithindices{\genindexi}{\genindexj}{\genindexk} \comp \embedding) \ \embeddingfieldindex{\genindexk} + \normalvecfield$, where $\normalvecfield$ is a normal vector field~\cite{Ecker2004}.

The divergence operator~\eqref{eq:extrinsic-divergence} appears in an alternative expression for the mean curvature defined below.

\subsubsection{Second fundamental form}

Similarly to how the first fundamental form~\eqref{eq:metric} measures change intrinsically, we can define another twice covariant tensor measuring changes in the normal vector field along different tangent vector fields.
This new tensor $\secondfundform$ is the \textit{second fundamental form}~\cite{Ecker2004}.
It has components
\begin{equation}
	\label{eq:second-fundamental-form}
	\secondfundform_{\genindexi \genindexj}
	\equiv
	\secondfundform
	(\basisvectorfield{\genindexi}, \basisvectorfield{\genindexj})
	\eqdef
	\innerprod
		{\chartcomp{\ithdiffnormal}}
		{\chartcomp{\embeddingfieldindex{\genindexj}}}
	=
	\frac
		{\chartcomp{\secondorderdifffunctional{\genindexi}{\genindexj}\pdf}}
		{\sqrt{1 + \norm{\chartcomp{\gradf}}^2}}
	\,,
\end{equation}
where $\secondorderdifffunctional{\genindexi}{\genindexj}$ denotes second-order partial differentiation in the $\genindexi$ and $\genindexj$ variables~\cite{Ecker2004}.
The second fundamental form is \textit{extrinsically} defined since it contains the normal vector, which \textit{lives} outside the tangent space.

\subsubsection{Shape operator}

Having introduced the first and second fundamental forms, the shape operator $\functiondef{\shapeop}{\tangentbundle}{\tangentbundle}$ connects the two via $\metric(\shapeop(\basisvectorfield{\genindexi}), \basisvectorfield{\genindexj}) = \secondfundform_{\genindexi \genindexj}$~\cite{Petersen1998}.
One can easily verify that the shape operator has components $\shapeop^{\genindexi}_{\genindexk} = \metric^{\genindexi \genindexj} \secondfundform_{\genindexj \genindexk}$~\cite{Ecker2004}.

Since $\secondfundform$ is symmetric, i.e., $\secondfundform_{\genindexi \genindexj} = \secondfundform_{\genindexj \genindexi}$, one can check that $\shapeop$ is a self-adjoint operator, meaning $\metric(\shapeop(\vectorfieldx), \vectorfieldy) = \metric(\vectorfieldx, \shapeop(\vectorfieldy))$, for $\vectorfieldx, \vectorfieldy \in \tangentbundle$.
This implies that, at any point $\graphfpoint \in \graphf$, all the eigenvalues of $\shapeop_{\graphfpoint}$ are real and there exists an associated orthonormal basis consisting of eigenvectors of $\shapeop_{\graphfpoint}$~\cite[Section 2.4.6]{Petersen1998}.
These eigenvalues correspond to some curvature measure along the direction of the corresponding eigenvectors.

The shape operator is easily confused with the second fundamental form.
\citeauthor{Petersen1998} refers to this issue as a matter of perspective~\cite[Section 2.3.1]{Petersen1998}.
\citeauthor{Ecker2004} even employs the same letter $\secondfundform$ instead of $\shapeop$ to refer to the shape operator, relying on the position of the indices to distinguish between them ($\shapeop$ is a $(1, 1)$-tensor, while $\secondfundform$ is $(2, 0)$).
Meanwhile,~\citeauthor{Gray2006} equivalently define $\secondfundform$ from $\shapeop$ and $\metric$~\cite[Definition 13.28]{Gray2006}.

The \textit{curvature tensor}, pervasive in Riemannian geometry, takes a simple form for hypersurfaces, exclusively involving $\metric$ and $\shapeop$~\cite[Section 4.2]{Petersen1998}.
This virtually ensures that curvature can be apprehended by inspecting only the first and second derivatives of the \gls*{pdf} $\pdf$.

\subsubsection{Normal and principal curvatures}

The definition of the \textit{normal curvature}~\cite{doCarmo1976, Gray2006} of $\graphf$ in the non-null direction $\vectorfieldx = \vectorfieldx^{\genindexi}\basisvectorfield{\genindexi} \in \tangentbundle$ also applies to hypersurfaces:
\begin{equation}
	\label{eq:normal-curvature}
	\normalcurvature(\vectorfieldx)
	=
	\frac{
		\metric(
			\shapeop(\vectorfieldx),
			\vectorfieldx)
	}{
		\metric(
			\vectorfieldx,
			\vectorfieldx)
	}
	=
	\frac{
		\secondfundform_{\genindexi \genindexj}
		\vectorfieldx^{\genindexi}
		\vectorfieldx^{\genindexj}
	}{
		\metric_{\genindexi \genindexj}
		\vectorfieldx^{\genindexi}
		\vectorfieldx^{\genindexj}
	}
	\,,
\end{equation}
where the denominator ensures $\normalcurvature(\lambda \vectorfieldx) = \normalcurvature(\vectorfieldx)$, for all $\lambda \neq 0$.

Using the Cauchy–Schwarz inequality, for all non-null $\vectorfieldx \in \tangentbundle$, we have at any point $|\normalcurvature(\vectorfieldx)| \leq \norm{\shapeop(\unitvec{\vectorfieldx})}$, where $\unitvec{\vectorfieldx} = \vectorfieldx / \norm{\vectorfieldx}$.
Moreover, the equality holds \gls*{iff} $\vectorfieldx$ is an eigenvector of $\shapeop$.
In that case, $\normalcurvature(\vectorfieldx)$ is equal to the respective eigenvalue of $\vectorfieldx$.
Given an orthonormal basis of eigenvectors $\vectorfieldx_{\genindexi}$ of $\shapeop$, we say that $\vectorfieldx_{\genindexi}$ is the $\genindexi$-th \textit{principal direction} of $\graphf$ and its corresponding eigenvalue $\principalcurvature_{\genindexi}$ is known as the $\genindexi$-th \textit{principal curvature}~\cite{doCarmo1992}.
The \textit{minimax principle} establishes a variational connection between principal curvatures and the quotient~\eqref{eq:normal-curvature}~\cite[Section 6.10]{Kato1995}.
In particular, for any non-null $\vectorfieldx \in \tangentbundle$, $ \min_{\genindexi} \principalcurvature_{\genindexi} \leq \normalcurvature(\vectorfieldx) \leq \max_{\genindexi} \principalcurvature_{\genindexi}$.

The principal curvatures can be summarized into a single value according to several criteria, giving rise to the mean and Gaussian curvatures.

\subsubsection{Mean curvature}

At any point in $\graphf$, the mean curvature $\meancurvature$ is defined as the sum of all principal curvatures, that is, the trace of the diagonalized matrix of $\shapeop$~\cite{Ecker2004}.
Since the trace is a matrix invariant, we have $\meancurvature = \sumitod \principalcurvature_{\genindexi} = \trace{\shapeop} = \shapeop_{\genindexi}^{\genindexi}$.
Equivalently, using~\eqref{eq:extrinsic-divergence} we get~\eqref{eq:mean-curvature-normal}, while extra calculations yield~\eqref{eq:mean-curvature-gradient}:
\begin{multicols}{2}
	\noindent
	\begin{equation}
		\label{eq:mean-curvature-normal}
		\meancurvature
		=
		\extrinsicdivergence{\downwardnormalvecfield}
		\,,
	\end{equation}
	\columnbreak
	\begin{equation}
		\label{eq:mean-curvature-gradient}
		\meancurvature
		=
		\chartcomp{\rddivergence{\normalizedgradof{\pdf}}}
		\,,
	\end{equation}
\end{multicols}
\vspace*{-\multicolsep}
where $\normalizedgraddefinition$~\cite[Equation A.3]{Ecker2004}.
Both expressions convey the same idea: $\meancurvature$ is the divergence of a \textit{normalized} vector field.
The mean curvature will take negative values when $\downwardnormalvecfield$ and $\normalizedgradof{\pdf}$ locally converge towards a point.

\subsubsection{Gaussian curvature}

The shape operator contains another way of summarizing principal curvatures, this time by its determinant:
\begin{equation}
	\label{eq:gaussian-curvature}
	\gaussiancurvature
	\eqdef
	\prod_{\genindexi = 1}^{\dimension}
	\principalcurvature_{\genindexi}
	=
	\determinant{\shapeop}
	=
	\frac{\determinant{\secondfundform}}{\determinant{\metric}}
	=
	\chartcomp{\frac{\determinant{\hessian{\pdf}}}{(1 + \norm{\gradf}^2)^{1 + \frac{\dimension}{2}}}}
	\,.
\end{equation}
The Gaussian curvature $\gaussiancurvature$ owns outstanding geometrical and topological properties.
Gauss' \textit{Theorema Egregium} states that $\gaussiancurvature$ and $\absoluteval{\gaussiancurvature}$ are intrinsic invariants of a hypersurface, respectively, if $\dimension$ is even or odd~\cite[Lemma 3.1, p. 96]{Petersen1998}.
Interestingly, when $\dimension = 2$, $\gaussiancurvature$ is precisely half the \textit{scalar curvature}, the scalar contraction of the curvature tensor (see~\cite[Section 2.2.5]{Petersen1998} and~\cite[Proposition 2.1, p. 92]{Petersen1998}).

\subsubsection{Hessian matrix}

The Hessian matrix is present in the previous curvature concepts through the second fundamental form~\eqref{eq:second-fundamental-form}.
The Hessian $\hessian{\pdf}$ corresponds to the non-normalized version of $\secondfundform$, replacing $\downwardnormalvecfield$ by $\normalvecfield = (\grad{\pdf}, -1)$ in~\eqref{eq:second-fundamental-form}.
The eigenvalues of $\hessianmatrix$ are scalar multiples of those of $\secondfundformmatrix$.
Moreover, expressing the shape operator in matrix form as $\shapeopmatrix = \inversemetricmatrix \secondfundformmatrix$, and using Sylvester's law of inertia~\cite[Exercise 12 (c), p. 275]{Harville1997}, we can infer that $\shapeopmatrix$ coincides in \textit{signature} with $\secondfundformmatrix$ and, consequently, with $\hessianmatrix$.
On the other hand, it is well-known that the eigenvalues of the Hessian $\hessian{\pdf}$ relate to the concavity and convexity of $\pdf$.
Namely, if all the eigenvalues at a point are negative (alternatively, positive), i.e., $\hessian{\pdf}$ is negative (positive) definite, then $\pdf$ is locally strictly concave (convex) around that point.
Hence, the shape operator also determines local concavity and convexity because of the previous argument~\cite[p. 91]{Petersen1998}.

The shape operator and the Hessian are equal \gls*{iff} $\metricmatrix$ is the identity matrix, or, equivalently, \gls*{iff} $\norm{\grad{\pdf}} = 0$.
The connection between Gaussian curvature and the Hessian determinant is evident from~\eqref{eq:gaussian-curvature}.
Furthermore, the trace of the Hessian is equal to the divergence of the gradient, meaning $\trace{\hessian{\pdf}} = \rddivergence{\grad{\pdf}}$.
Therefore, attending to~\eqref{eq:mean-curvature-gradient}, the mean curvature differs from the Hessian trace by a normalizing term acting on the gradient argument (see~\cite[Equation 5.28]{Folland2002} for an explicit link between the two).
In conclusion, the essence of curvature lies in the Hessian $\hessian{\pdf}$.

			\section{Extra applications}

\label{sec:extra-applications}

\subsection{Univariate receiving yards in the NFL}

American football is inherently one-dimensional: only distances projected on the sidelines matter, despite the game taking place on a rectangular field.
Hence, we propose a case study about the \gls*{nfl} to illustrate our methods on univariate data.

The New England Patriots from the \gls*{nfl} experienced noticeable changes in their passing DNA while winning the \textit{Super Bowl} in 2014, 2016 and 2018.
We will analyze the bumps in their receiving profiles in those three championship seasons.
\figurename~\ref{fig:nfl} shows the original data and the estimated curvature bumps.
All three sub-figures have a primary bump between 0 and 40 yards.
In 2014, this bump is just slightly narrower than in 2016 and 2018.
What differentiates all three seasons is the existence or absence of a secondary bump.
In 2014, the second bump ranges between 80 and 100 yards and is perfectly visible.
Two years later, this bump still exists but has become smoother.
Finally, in 2018, the bump has almost been wholly \textit{ironed}.
This evolution reflects the transition from a team with a few go-to players to a more numerous receiver squad sharing responsibilities.

\renewcommand{\subfigurewidthtwoandone}{0.4\textwidth}
\renewcommand{\graphicswidthtwoandone}{0.8\textwidth}

\begin{figure}
	\centering
	\begin{subfigure}[b]{\graphicswidthtwoandone}
		\centering
		\includegraphics[width=\subfigurewidthtwoandone]{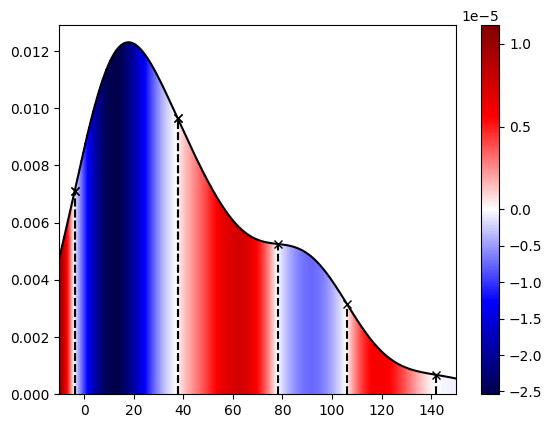}%
		\hspace{1 cm}
		\includegraphics[width=\subfigurewidthtwoandone]{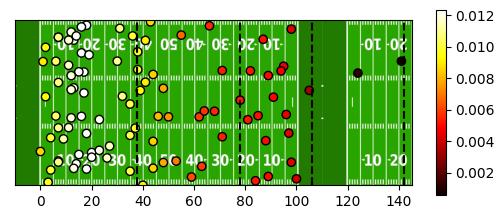}
		\caption{2014}
	\end{subfigure}
	\begin{subfigure}[b]{\graphicswidthtwoandone}
		\centering
		\includegraphics[width=\subfigurewidthtwoandone]{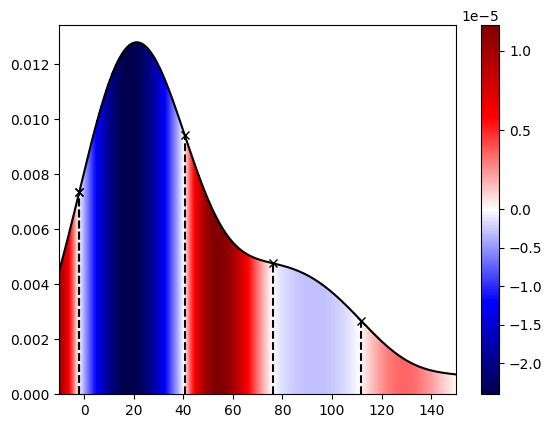}%
		\hspace{1 cm}
		\includegraphics[width=\subfigurewidthtwoandone]{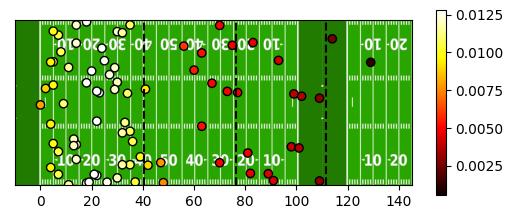}
		\caption{2016}
	\end{subfigure}
	\begin{subfigure}[b]{\graphicswidthtwoandone}
		\centering
		\includegraphics[width=\subfigurewidthtwoandone]{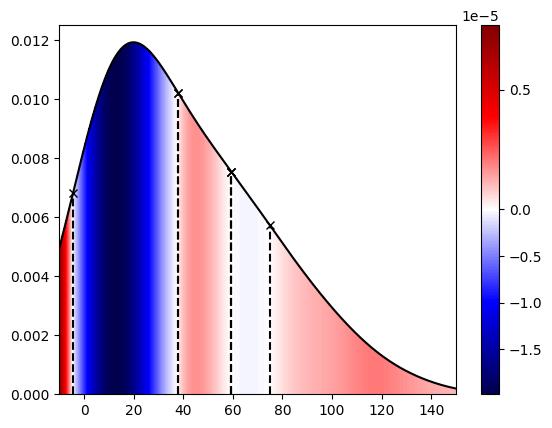}%
		\hspace{1 cm}
		\includegraphics[width=\subfigurewidthtwoandone]{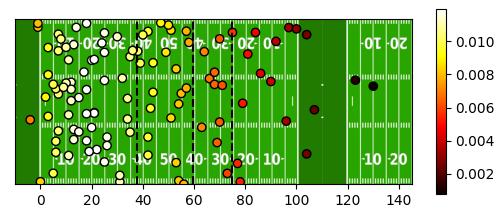}
		\caption{2018}
	\end{subfigure}
	\caption{
		New England Patriots' receiving yards in the 2014, 2016 and 2018 seasons.
		The three sub-figures have the same structure.
		On the left is the underlying \gls*{kde} \gls*{pdf} graph; on the right is a one-dimensional scatter plot of the original observations with random jitter for better resolution.
		Each point in the scatter plot corresponds to the receiving yards from Tom Brady by some player at some game.
		The number of observations is 98, 82 and 111 for the 2014, 2016 and 2018 seasons.
		The coloured areas under the \gls*{pdf}'s curve represent the sign of the second derivative: red means convex; blue, concave.
		The colour of the dots in the scatter plot conveys the value of the \gls*{kde} \gls*{pdf} at each point.
		Inflection points show up as vertical dotted lines on both sides.
	}
	\label{fig:nfl}
\end{figure}

\subsection{Trivariate pitches in MLB}

The pitching event lives at least in two dimensions.
The batter needs to anticipate the arrival coordinates of the ball.
\gls*{kde} methods have shortly become standard for these bivariate samples.
However, the pitched ball has other attributes that will make the batter's job difficult.
A first approximation suggests considering pitches as a pair of location coordinates with a \textit{speed} value measured from the release point.

We propose studying trivariate pitching samples from three top \gls*{mlb} pitchers in the 2019 season: Clayton Kershaw, Justin Verlander and Max Scherzer.
\figurename~\ref{fig:mlb-scatter} shows pitching scatter data and bumps.
In general, the three pitching patterns resemble Gaussian mixtures.
It seems that each player chooses his pitches from a finite arsenal.
Then, each pitch type has its location and speed variability.
Most pitchers employ three mechanisms: the \textit{fastball}, which relies on sheer speed; the \textit{slider}, which moves sideways at relatively high speeds; and the \textit{curveball}, which sinks at low speeds.

In \figurename~\ref{fig:scatter-kershaw}, Kershaw's fastballs and sliders have speeds ranging between 85 and 90 \gls*{mph}.
Fastballs usually range beyond 95 \gls*{mph}, as for Verlander (\figurename~\ref{fig:scatter-verlander}) and Scherzer (\figurename~\ref{fig:scatter-scherzer}), who have a defined separation between both pitch types.
As for curveballs, Kershaw's have low speed and high vertical variability.
Verlander and Scherzer's curveball clouds have similar compact shapes, but the former usually aims for the right side of the strike zone and the latter for the left.
Scherzer's usage of curveballs is the lowest of all three.

\renewcommand{\subfigurewidthtwoandone}{0.4\textwidth}
\renewcommand{\graphicswidthtwoandone}{\textwidth}

\begin{figure}
	\centering
	\begin{subfigure}[b]{\subfigurewidthtwoandone}
		\centering
		\includegraphics[width=\graphicswidthtwoandone]{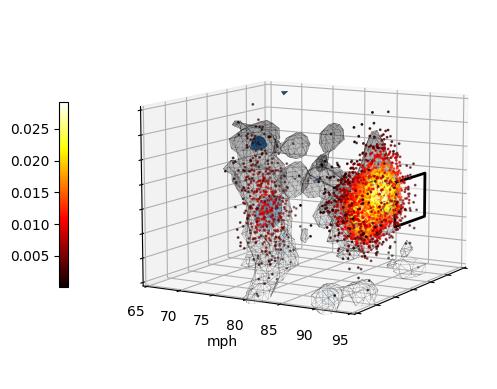}
		\caption{Clayton Kershaw}
		\label{fig:scatter-kershaw}
	\end{subfigure}
	\begin{subfigure}[b]{\subfigurewidthtwoandone}
		\centering
		\includegraphics[width=\graphicswidthtwoandone]{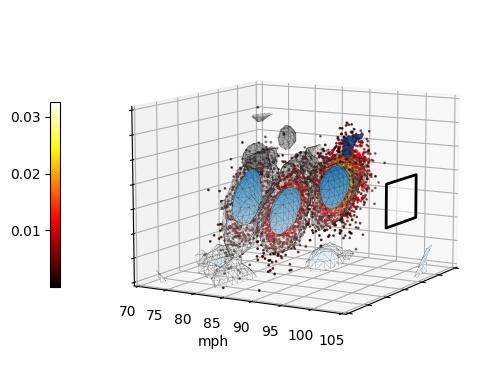}
		\caption{Justin Verlander}
		\label{fig:scatter-verlander}
	\end{subfigure}
	\begin{subfigure}[b]{\subfigurewidthtwoandone}
		\centering
		\includegraphics[width=\graphicswidthtwoandone]{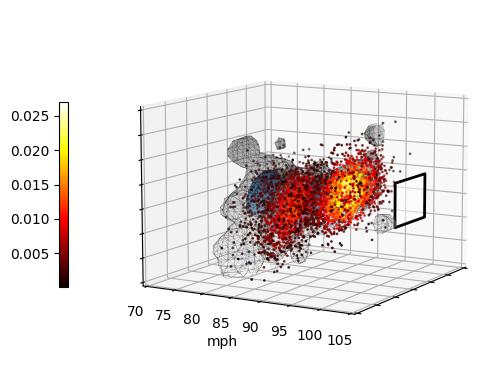}
		\caption{Max Scherzer}
		\label{fig:scatter-scherzer}
	\end{subfigure}
	\caption{
		Pitching scatter data for Clayton Kershaw, Justin Verlander and Max Scherzer with 2680, 3839 and 3284 pitches, respectively.
		Concave bumps~\eqref{eq:concave-bump} and Laplacian bumps~\eqref{eq:laplacian-bump} show in blue and grey, respectively.
		Speeds are reported in \gls*{mph}.
		The arrival coordinates at the home plate for each speed value should be interpreted relative to an \textit{averaged} strike zone, which shows at the front.
		The colour of each point corresponds to the value of the underlying \gls*{kde} \gls*{pdf}.
	}
	\label{fig:mlb-scatter}
\end{figure}

			\section{Computational details}

\label{sec:computing}

\subsection{Source code}

We provide this project's source code for reproducibility purposes.
This paper's figures can be safely generated from scratch in any environment by leveraging Docker's containerization~\cite{Merkel2014}.
See the \texttt{README.md} and \texttt{LICENSE.txt} files for further details\sourcecode{}.

\subsection{Scientific computing}

We employed Python's \textit{NumPy}~\cite{Harris2020}, \textit{SciPy}~\cite{Virtanen2020} and \textit{pandas}~\cite{McKinney2010} packages for array and data manipulation, optimization, and linear algebra operations, among other uses.

\subsection{Data}

The three main American professional sports leagues maintain vast data collections and publicly expose them through web APIs.
\gls*{nba} shot records are accessible via the \texttt{nba\_api} Python package~\cite{Patel2021}.
\gls*{nfl} passing data can be retrieved through the \texttt{nflgame-redux} Python package~\cite{Gallant2020}.
\gls*{mlb} pitching samples are available from the \texttt{pybaseball} Python package~\cite{LeDoux2021}.
To ease reproducibility, we include the downloaded data as CSV files and the parameterized scripts to call the APIs.
See the \texttt{README.md} file for more information\sourcecode{}.

\subsection{Bandwidth selection}

The bandwidth selection process targeted the second derivatives of the \gls*{pdf} in all the examples.
We employed the smoothed cross-validation criterion for the \gls*{nfl} and \gls*{nba} applications, beginning with a Gaussian pilot bandwidth~\cite{Chacon2018}; three and two stages were used, respectively.
In the \gls*{nba} case, we limited the optimization to 50 iterations.
For the \gls*{mlb} application, we selected the bandwidth assuming the true \gls*{pdf} was a Gaussian mixture~\cite{Chacon2018} with three components, following the three main pitching mechanisms discussed.
The Gaussian mixture was fit using Python's package \textit{scikit-learn}~\cite{Pedregosa2011}.

\subsection{Graphing}

Visualizing results is crucial in our proposal, as demonstrated in many examples.
Up to three dimensions, computing and picturing bump boundaries as level sets is workable.
In the univariate case, we have used conventional root-finding algorithms for $\pdfsecondderivative$.
In dimension two, we employed the graphical routine \texttt{contour} from Python's package \textit{Matplotlib}~\cite{Hunter2007}, which uses the algorithm \textit{marching squares}.
Finally, for $\dimension = 3$, we resorted to the routine \texttt{marching\_cubes} from Python's package \textit{scikit-image}~\cite{vanderWalt2014}, implementing the homonym algorithm for dimension three.

		\end{refsection}

		\renewcommand*{\mkbibcompletename}[1]{\textsc{#1}}
		\printbibliography[section=2, filter=references, title={References}, sorting=nyt]
		\printbibliography[section=2, filter=software, title={Software}, sorting=none]
	\fi
\end{document}